\newtheorem{propo}{Proposition}[section]
\newtheorem{lemma}[propo]{Lemma}
\newtheorem{definition}[propo]{Definition}
\newtheorem{coro}[propo]{Corollary}
\newtheorem{thm}[propo]{Theorem}
\newtheorem{remark}[propo]{Remark}
\newcommand{\R}{\mathbb{R}}
\newcommand{\reals}{\mathbb{R}}
\newcommand{\ra}[1]{\renewcommand{\arraystretch}{#1}}
\def\prob{{\mathbb P}}
\def\R{\mathbb R}
\def\Tr{{\rm Tr}}
\def\id{{\mathbb I}}
\def\ones{{\mathds 1}}
\def\hd{\hat{d}}
\def\hD{\widehat{D}}
\def\hX{\widehat{X}}
\def\hx{\hat{x}}
\def\Ln{{L}}
\def\cL{{\cal L}}
\def\MDS{{\sf MDS }}
\def\<{\langle}
\def\>{\rangle}
\def\Orth{{\sf O}}
\def\d{{d_{\rm inv}}}
\def\tR{\widetilde{R}}
\def\tr{\tilde{R}}
\def\hopterrain{{\sc Hop-TERRAIN}}
\def\DVhop{{\sc DV-hop }}
\def\DVhopns{{\sc DV-hop}}
\def\RMDS{R_{\rm MDS}}
\def\RHOP{R_{\rm HOP}}
\title{Robust Localization from Incomplete Local  Information }
\author{{Amin Karbasi,~\IEEEmembership{Student Member, IEEE},
        Sewoong~Oh,~\IEEEmembership{Member,~IEEE}}
\thanks{{\footnotesize 
     Amin Karbasi is with the School of Computer and Communication
      Sciences, Ecole Polytechnique F\'{e}d\'{e}rale de Lausanne
      (EPFL), CH-1015 Lausanne, Switzerland (email: amin.karbasi@epfl.ch).
      Sewoong Oh is with the department of Electrical Engineering and Computer Science, Massachusetts Institute of Technology (MIT), Cambridge, MA 02139 (email: swoh@mit.edu).
  A preliminary summary of this work appeared  in \cite{OKM10, KO10}.   }} }
\begin{document}



\maketitle
\maketitle


%
%
\begin{abstract}
We consider the problem of localizing wireless devices  
in an ad-hoc network embedded in a $d$-dimensional Euclidean space. 
Obtaining a good estimation of where wireless devices are located is 
crucial in wireless network applications including environment monitoring, geographic routing and topology control.
When the positions of the devices are unknown and 
only local distance information is given, 
we need to infer the positions from these local distance measurements. 
This problem is particularly challenging when 
we only have access to measurements that have limited accuracy and are incomplete. 
We consider the extreme case of this limitation on the available information, 
namely only the connectivity information is available, 
i.e., we only know whether a pair of nodes is within a fixed detection range of each other or not, 
and no information is known about how far apart they are.  
Further, to account for detection failures, 
we assume that even if a pair of devices is within the detection range, 
it fails to detect the presence of one another with some probability 
and this probability of failure depends on how far apart those devices are. 
Given this limited information, 
we investigate the performance of a centralized positioning algorithm {\sc MDS-MAP} 
introduced by Shang et al. \cite{SRZ03}, 
and a distributed positioning algorithm {\hopterrain} introduced by Savarese et al. \cite{SLR02}. 
In particular, for a network consisting of $n$ devices positioned 
randomly, we provide a bound on the resulting error for both algorithms. 
We show that the error is bounded, decreasing at 
a rate that is proportional to $R_{\rm Critical}/R$, 
where $R_{\rm Critical}$ is the critical detection range when the resulting random network starts to be connected, 
and $R$ is the detection range of each device. 

\end{abstract}
%






%
%

\section{Introduction}
\label{sec:introduction}
In this paper, we address the problem of positoining (also referred to as sensor localization) 
when only a set of incomplete pairwise distances is provided. 
Location estimation  of individual nodes  is a requirement of 
many wireless sensor networks such as environment monitoring, 
geographic routing and topology control, to name only a few 
(for a thorough list of applications we refer the interested readers to \cite{survey,Xu02}). 
In  environment monitoring for instance, the measurement data by the wireless 
sensor network is essentially meaningless without knowing  from where the data is collected.

One way to acquire the positions is to equip all the sensors with 
a global positioning system (GPS). The use of GPS not only adds considerable cost to the system, 
but more importantly, it does not work in indoor environments  or when the received GPS signal is jammed (see \cite{gps-free} and the references therein for more information on this issue).
As an alternative, we seek an algorithm that can derive 
positions of sensors based on local/basic  information such as proximity 
(which nodes are within communication range of each other) 
or local distances (pairwise distances between neighbouring sensors).

Two common techniques for obtaining the local distance and connectivity information are Received Signal Strength Indicator (RSSI) and Time Difference of Arrival (TDoA). RSSI is a measurement of the ratio of the power present in a received radio signal and a reference power. Signal power at the receiving end is inversely proportional to the square of the distance between the receiver and the transmitter. Hence, RSSI has the potential to be used to estimate the distance and it is common to assume the use of RSSI in distance measurements. However, experimental results indicate that the accuracy of RSSI is limited \cite{TDoA}. TDoA technique uses the time difference between the receipt of two different signals with different velocities, for instance ultrasound and radio frequency signals \cite{TDoA2}. The time difference is proportional to the distance between the receiver and the transmitter; and given the velocity of the signals, the distance can be estimated from the time difference. These techniques can be used, independently or together, for distance estimation. In an alternative approach, Angle of Arrival (AoA) can also be used to infer the positions of sensors \cite{Niculescu2001}. 
Once a node has the angle of arrival information from three other nodes with known positions, we can perform triangulation to locate the wireless node. To measure the angle of arrival, an antenna array is required at each wireless node.

Given a set of  measurements, the problem of localization is solvable, meaning that it has a unique set of coordinates 
satisfying the given local information, only if there are enough constraints.  The simplest of such algorithms, i.e., multi dimensional scaling (MDS) \cite{mds-book}, assumes that all pairwise distances are known. Intuitively, it is clear that with $O(n^2)$ pairwise distances we should be able to determine $O(n)$ coordinates. However,  in almost all practical scenarios such information is unavailable for two major reasons. First,  sensors are typically highly resource-constrained (e.g., power) and have limited communication range. Thus, far away sensors  cannot communicate and obtain their pairwise distances. Second, due to noise and interference among sensors, there is always the possibility of non-detection or completely incoherent measurements.  

Many algorithms have been proposed to resolve these issues by using 
heuristic approximations to the missing distances, and  their success has 
mostly been measured experimentally. Regarding  the mechanisms deployed for estimating sensor locations, 
one can divide the localization algorithms into two categories: range-based and range-free. 
In the range-based protocols the absolute point-to-point distance estimates are used for 
inferring the locations, whereas in the range-free protocols no assumptions about the availability of 
such information are made and only the connectivity information is provided.  As a result, 
range-free algorithms are  more effective in terms of stability and cost, hence more 
favourable to be deployed in practical settings.

The theoretical guarantees associated with the performance 
of the existing methods are, however, of the same interest and complementary in nature.
Such analytical bounds on the performance of localization algorithms 
can provide answers to practical questions: for example,"
How large should the radio range be in order to get the reconstruction error within a threshold?"
With this motivation in mind, our work takes a step forward in this direction. 

We first focus on providing a bound on the performance 
of a popular localization algorithm {\sc MDS-MAP} \cite{SRZ03}  when applied to
sensor localization from only connectivity information. 
We should stress here that pairwise distances are invariant under rigid transformations (rotation, translation and reflection). Hence, given  connectivity information, we can only hope to determine the \textit{configuration} or the relative map of the sensors. In other words,  localization is possible only up to rigid transformations.  With this point in mind,
we prove that using {\sc {\sc MDS-MAP}}, 
we are able to localize sensors up to a bounded error 
in a connected network where 
most of pairwise distances are missing and 
only local connectivity information is given. 

More precisely, assume that the network consists of $n$ sensors positioned randomly 
in a $d$-dimensional unit cube with the radio range $R=o(1)$ and detection probability $p$.
Let the $n\times d$ matrices $X$ and $\hat{X}$ denote the true sensor positions and their estimates by {\sc MDS-MAP}, respectively. Define $L=\id_{n\times n}-(1/n)\ones_n\ones_n^T$ where $\id_{n\times n}$ is the identity matrix and $\ones_n$ is the all ones vector. It is not difficult to show that $LXX^TL$ satisfies nice properties, specifically, it is invariant under rigid transformations and if $LXX^TL=L\hat{X}\hat{X}^TL$, then $X$ and $\hat{X}$ are equal up to rigid transformations. Therefore, we can naturally define the distance between $X$and $\hat{X}$ as follows: 
$$\d(X,\hX) = \frac{1}{n}\big\|\Ln XX^T\Ln-\Ln\hX\hX^T\Ln \big\|_F,$$
where $\|\cdot\|_F$ denote the Frobenius norm.  Our first result establishes a bound on the error of {\sc MDS-MAP} in terms of $\d$, specifically, $$\d(X,\hX)\leq \frac{\RMDS}{R}+o(1),$$ where $\RMDS= C_d(\ln(n)/n)^{1/d}$ for some constant $C_d$ that only depends on the dimension $d$.

One consequence of the ad-hoc nature of the underlying networks 
is the lack of a central infrastructure. This fact prevents the use of common 
centralized positioning algorithms such as {\sc MDS-MAP}. In particular, centralized algorithms suffer from the scalability problem, and generally it is not feasible for them to be implemented in large scale sensor networks. Other disadvantages of centralized algorithms, as compared to distributed algorithms, are their requirements for higher computational complexity and lower reliability; these drawbacks are due to accumulated information inaccuracies caused by multi-hop transmission over a wireless network \cite{survey2}.

We then investigate an important question about whether similar performance guarantees can be obtained in a distributed setting where each sensor tries to estimate its own \textit{global} position. As mentioned above, this task cannot be accomplished unless some additional information, rather than local measurements, is provided. It is well known that in a $d$-dimensional Euclidean space,  we need to know the global positions of at least $d+1$ sensors, referred to as \textit{anchors},  in order to uniquely determine the global positions of  the remaining sensors~\cite{Niculescu2001}. 

For the decentralized scenario, we turn our attention to analysing the performance of 
a popular localization algorithm called {\sc HOP-TERRAIN} algorithm  \cite{SLR02}. This algorithm  can be seen as a distributed version of the {\sc MDS-MAP}.
Similar to {\sc MDS-MAP},  we prove  that by using {\sc Hop-TERRAIN},
we are able to localize sensors up to a bounded error in a connected
network where most of the pairwise distances are unknown and 
only local connectivity information is given.

More formally, assume that  the network consists of $n$ sensors positioned randomly 	in a $d$-dimensional unit cube and $d+1$ anchors in general positions. Moreover, we let the radio range $R=o(1)$ and denote the detection probability by $p$. We show that when only connectivity information is available, for every unknown node $i$, the Euclidean distance between the estimate $\hat{x}_i$ and the correct position $x_i$ is bounded by 
\begin{displaymath}
\|x_i-\hat{x}_i\| \leq \frac{\RHOP}{R}+o(1),
\end{displaymath}
where $\RHOP=C'_d  (\log n/ n)^{\frac{1}{d}}$ for some constant $C'_d$ that only depends on $d$. 

\section{Related work}

The localization problem and its variants  has  attracted  significant research interest in recent years.
A general survey of the area and an overview of recent techniques can be found in \cite{Niculescu2001} and \cite{survey2}, respectively.  The problem is also closely related to dimensionality reduction \cite{Roweis2000} and manifold learning \cite{Saul2003} in which  the objects/data come from a high dimensional space, and the goal is to compute a low-dimensional,  neighbourhood preserving  embeddings.

In the case when all pairwise distances are known, 
the coordinates can be derived by using a classical method known as multidimensional scaling (MDS) \cite{mds-book}. 
The underlying principle of the MDS is to convert distances into an inner product matrix, 
whose eigenvectors are the unknown coordinates. 
In the presence of noise, MDS tolerates errors gracefully due to the overdetermined nature of the solution. 
However, when most pairwise distances are missing, the problem of finding the unknown 
coordinates becomes more challenging. For centralized algorithms (where all the measurements are sent to a single processor and the estimated positions are computed) three types of practical solutions to the above problem 
have been proposed in the literature. The first group consists of algorithms that try first to estimate 
the missing entries of the distance matrix and then apply MDS to the reconstructed 
distance matrix to find the coordinates of the sensors. {\sc MDS-MAP}, introduced in \cite{SRZ03} and further  studied in \cite{SRZ04}, 
can be mentioned as a well-known example of this class where it computes 
the shortest paths between all pairs of nodes in order to approximate 
the missing entries of the distance matrix. 
The algorithms in the second group mainly consider the sensor localization as 
a non-convex optimization problem and directly estimate the coordinates of sensors. 
A famous example of this type is a relaxation to semidefinite programming (SDP)\cite{BY04}. 
In the third group, the problem is formulated through a stochastic optimization where the main technique used in these algorithms is the stimulated annealing, which is a generalization of the Monte Carlo method in combinatorial optimization \cite{Kannan2006a,Kannan2006b}.

\begin{center}
\begin{table*}[t]
\caption{Distributed localization algorithm classification \cite{LR03} }
\begin{center}
\begin{tabular}{llll} 
\label{tab:3phase}
\vspace{-0.4cm}\\ 
\hline
Phase  & Robust positioning & Ad-hoc positioning & $N$-hop multilateration \\
\hline
1. Distance & \DVhop & Euclidean  & Sum-dist\\
2. Position & Lateration & Lateration  & Min-max\\
3. Refinement & Yes & No  & Yes \\
\hline
\end{tabular}
\end{center}
\end{table*}
\end{center}


Perhaps a more practical and interesting case is when there is no central infrastructure.
\cite{LR03} identifies a common three-phase structure of 
three, popular, distributed sensor-localization algorithms,
namely robust positioning \cite{SLR02}, ad-hoc positioning \cite{NN03} and N-hop multilateration \cite{SPS02}.
Table \ref{tab:3phase} illustrates the structure of these algorithms.
In the first phase, nodes share information to collectively determine
the distances from each of the nodes to a number of anchors. Anchors are
special nodes with a priori knowledge of their own position in some global coordinate system.
In the second phase, nodes determine their position based on 
the estimated distances to the anchors provided by the first phase 
and the known positions of the anchors. In the last phase, the initial estimated positions 
are iteratively refined. 
It is empirically demonstrated that 
these simple three-phase distributed sensor-localization algorithms 
are robust and energy-efficient \cite{LR03}. However, depending on which method is used in each phase, 
there are different tradeoffs between localization accuracy, computation complexity and power requirements.
In \cite{NSB03}, a distributed algorithm-called the Gradient algorithm-
was proposed; it is similar to ad-hoc positioning \cite{NN03} 
but uses a different method for estimating the average distance per hop.


Another distributed approach introduced in \cite{IFMW04} is to
pose the localization problem as an inference problem on a graphical model
and solve it by using Nonparametric Belief Propagation (NBP). 
It is naturally a distributed procedure and produces both an estimate of sensor locations 
and a representation of the location uncertainties. 
The estimated uncertainty may subsequently be used 
to determine the reliability of each sensor's location estimate.
 
 The performances of these practical algorithms are invariably measured through simulations and 
little is known about the theoretical analysis supporting their results. 
A few exceptions are in the following work. In \cite{ DJM06} the authors use 
matrix completion methods \cite{Fazel02} as a means to reconstruct the distance matrix. 
The main contribution of their paper is that they are able to provably localize the sensors up to a bounded error. 
However, their analysis is based on a number of strong assumptions. 
First, they assume that even far-away sensors have a non-zero 
probability of detecting their distances. Second, the algorithm explicitly 
requires the knowledge of detection probabilities between all pairs. 
Third, their theorem only works when the average degree of the network 
(i.e., the average number of nodes detected by each sensor) 
grows linearly with the number of sensors in the network. 

Our first result, specifically the analysis of {\sc MDS-MAP}, has a similar flavour as in\cite{DJM06}. We provide a theoretical guarantee 
that backs up experimental results. We use shortest paths
as our primary guess for the missing entries in the distance matrix and apply MDS to find the topology of the network. 
In contrast to \cite{DJM06}, we require weaker assumptions for our results. 
More specifically, we assume that only neighbouring sensors have information about each other
and that only connectivity information is known. 
Furthermore, the knowledge of detection probabilities plays no role in our analysis or the algorithm. 
And last, in our analysis we assume that the average degree grows logarithmically-not linearly- with 
the number of sensors, which results in needing many less revealed entries in the distance matrix. In particular, the last condition is quite realistic: If the average degree grows
any slower then the network is not even connected (more on this issue in Section~\ref{sec:stretch}).
As the shortest paths algorithm works for both rage-free and range-aware cases, 
our analysis includes both and provides the first error bounds on the performance of {\sc MDS-MAP}. 

Of particular interest are the  two new results on the performance of sensor localization algorithms. In \cite{Javanmard2011}, Javanmard et al.  proposes a new reconstruction algorithm based on semidefinite programming where they could establish lower and upper bounds on the reconstruction errors of their algorithm. Similarly, in \cite{karbasi2010}, due to new advances in matrix completion methods \cite{Candes08}, the authors analyse the performance of OptSpace \cite{Keshavan2010}, a novel matrix completion algorithm, in localizing the sensors.  Interestingly, they did not need to adhere to the assumptions made by \cite{DJM06}. However, they have a restrictive assumption about the topology of the network, specifically, sensors are scattered inside an annulus.


All the above analytical results  crucially rely on the fact that there is a central processor with 
access to the inter-sensor distance measurements.
However, as we have mentioned earlier,  centralized algorithms suffer from the scalability problem 
and require higher computational complexity.
Hence, a distributed algorithm with similar a performance bound is desirable.
In our second result, we analyse the reconstruction error of a distributed sensor localization algorithm. To the best of our knowledge  
we show for the first time that {\sc Hop-TERRAIN}, introduced in \cite{SLR02}, achieves 
a bounded error when only local connectivity information is given.

Finally, one of the fundamental challenges in localization problem is whether, given a set of measurements, the sensor network is uniquely localizable or not. In the noiseless setting where all the measurements are accurate, it was shown that the correct notion through which we can answer this question is the global rigidity \cite{rigidity}, a property that is easy to check (a thorough discussion of global rigidity and its implications for the sensor localization problem is given in \cite{Gortlerabc}). However, finding such a unique solution is NP-hard \cite{Jackson2005}. In the case of noisy distance measurements very little is known in this area. For instance, we do not know  the fundamental limits for sensor localization algorithms or whether there are any algorithms with proven guarantees. From this point of view, our results narrow the gap between the algorithmic aspect of sensor localization and the theoretical one. In particular, we show that even in the presence of noise, the {\sc MDS-MAP} and {\sc HOP-TERRAIN} algorithms can localize the nodes within a bounded error. 
 
The organization of this paper is as follows. In Section~\ref{sec:model} we introduce the model and the notation used in our work. In Section~\ref{sec:algorithm} we describe the {\sc MDS-MAP} and {\sc HOP-TERRAIN} algorithms and their common features. Our results are stated in Section~\ref{sec:main} where we provide their proofs in Section~\ref{sec:mainproof}. Finally, we conclude in Section~\ref{sec:conclusion}.


%
%
\section{Model definition}
\label{sec:model}

Before discussing the centralized and distributed localization algorithms in detail, 
we  define the mathematical model considered in this work.
First, we assume that we have no fine control over the placement of the sensors
that we call the {\em unknown nodes} (e.g., the nodes are dropped from an airplane).
Formally, we assume that $n$ nodes are placed uniformly at random 
in a $d$-dimensional cube $[0,1]^d$.

Additionally, we assume that there are $m$ special sensors, which we call {\em anchors},
with a priori knowledge of their own positions in some global coordinate. In practice, it is reasonable to assume that we have some control over the position of anchors. Basically, anchors are the nodes that are planted on the field before any positioning takes place. 

Let $V_a=\{1,\ldots,m\}$ denote the set of $m$ vertices corresponding to the anchors and 
$V_u=\{m+1,\ldots,m+n\}$ the set of $n$ vertices corresponding to the unknown nodes. We use $x_i$ to denote the random position of the node $i$ and $X$ to denote the $n\times d$ position matrix where the $i$-th row corresponds to $x_i$.

In positioning applications, due to attenuation and power constraints, only measurements between close-by nodes are available. 
As a result, the pairwise distance measurements can be represented by 
a random geometric graph $G(n+m,R)=(V,E,P)$, 
where $V=V_u \cup V_a$, $E\subseteq V\times V$ is a set of undirected edges 
that connect pairs of sensors that are close to each other, 
and $P\,:\,E\rightarrow \R^+$ is a non-negative real-valued function. 
The function $P$ is a mapping from 
a pair of connected nodes $(i,j)\in E$ to 
a distance measurement between $i$ and $j$.

A common model for this random geometric graph is the disc model 
where node $i$ and $j$ are connected if the 
Euclidean distance $d_{i,j}\equiv\|x_i-x_j\|$ is less than or equal to
a positive radio range $R$. 
In formulae,
\begin{equation*}
	(i,j)\in E \Leftrightarrow 	d_{i,j}\leq R \;. 
\end{equation*}
As mentioned earlier, there are a variety  of  ways
to measure the connectivity between two nodes, including time difference of arrival and RF received-signal strength (also called RF ranging). Due to limited resources, in all of the mentioned solutions there is  a probability of  non-detection (or completely wrong
estimation). Think of RF ranging in the presence of an obstacle
or in the (frequent) case of multiple paths. Depending on the
acquisition mechanism, this may result in the absence of measurement
or in incoherent measurements.

Throughout this paper, to model this failure of detection, we assume that two nodes can detect each other 
with a probability that only depends on the distance $d_{i,j}$.
Namely, $(i,j)\in E$ with probability $p(d_{i,j})$ if $d_{i,j}\leq R$.  
The detection probability 
$p(\cdot):\,[0,R]\rightarrow[0,1]$ is a non-increasing function of the distance. 
We consider a simple function parameterized by two scalar values $\alpha\in(0,1]$ and $\beta\in[0,3)$: 
\begin{eqnarray}
	\label{eq:detection}
	p(z)=\min\left(1,\alpha \left(\frac{z}{R}\right)^{-\beta}\right) \;, 
\end{eqnarray}
for $\alpha\in(0,1]$ and $\beta\in[0,d)$.  Note that this includes the disc model with perfect detection as a special case (i.e., $\alpha=1,\beta=0$). 
\begin{figure}[t]
	\begin{center}
	\includegraphics[width=9cm]{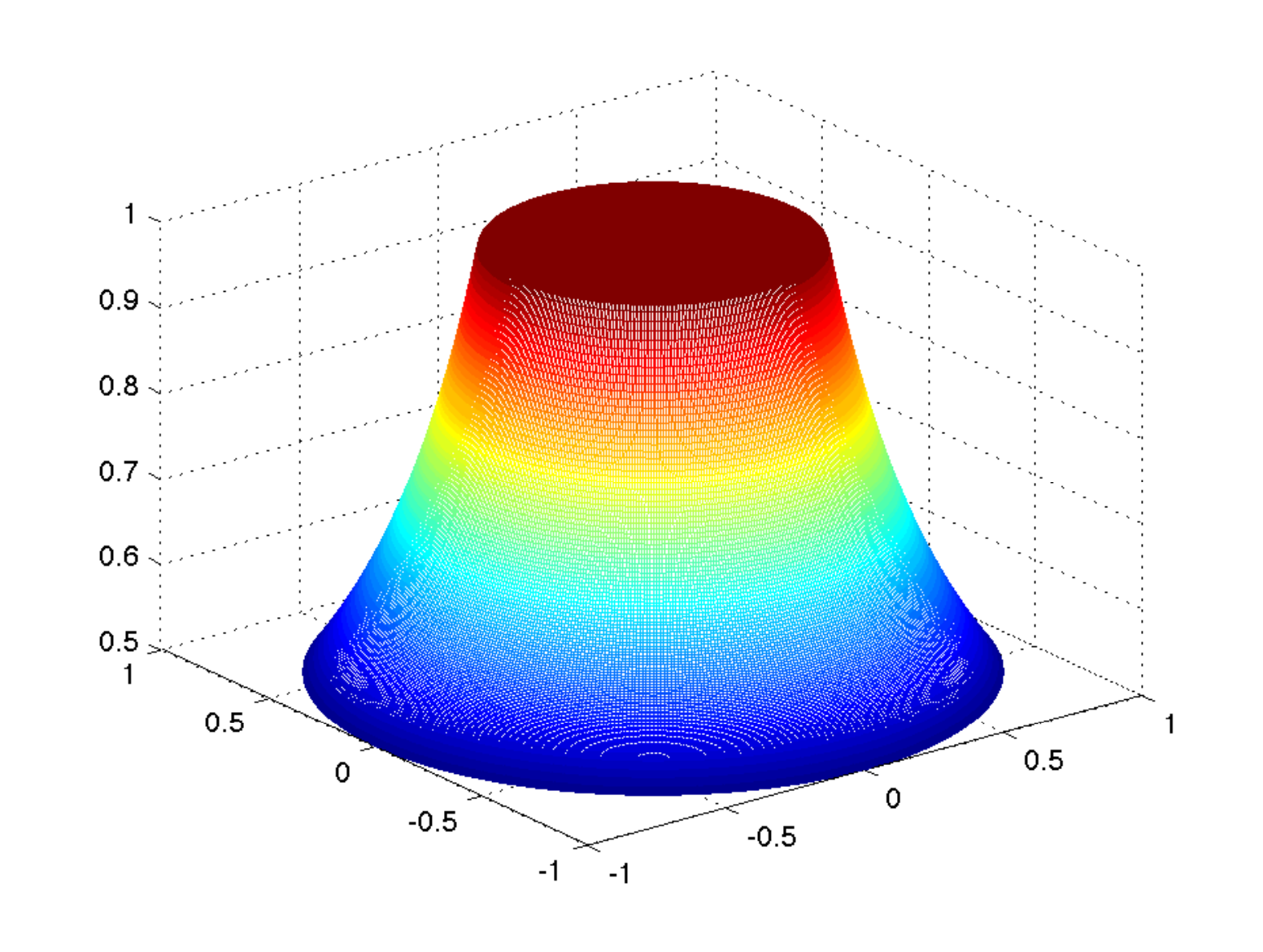}
	\put(-205,25){\footnotesize$x$}
		\put(-265,105){\footnotesize$p(z)$}
		\put(-75,17){\footnotesize$y$}
	\caption{This plot shows how the probability of detection changes  as the distance between two sensor changes.} 
	\label{fig:channel}
	\end{center}
\end{figure}
\begin{figure}[t]
	\begin{center}
	\includegraphics[width=9cm]{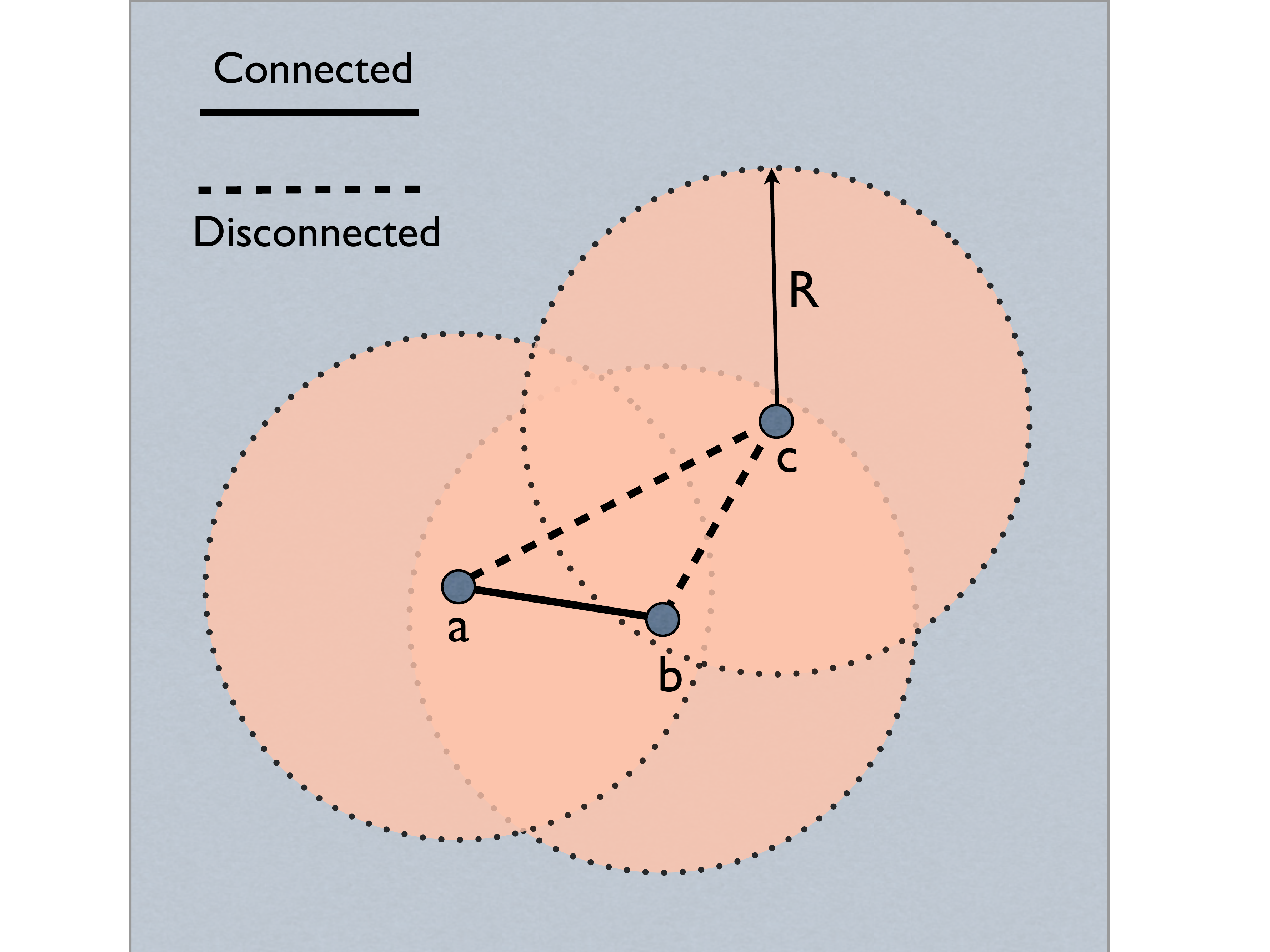}
	\caption{This example shows the model we consider in this work. Nodes $a$ and $b$ are connected since they are within radio range $R$ from each other. Even though the similar situation presents for $b$ and $c$, they are not connected due to detection failure. Finally, nodes $a$ and $c$ are not connected because they are far apart.} 
	\label{fig:RGG}
	\end{center}
\end{figure}

To each edge $(i,j)\in E$, we associate the distance measurement $P_{i,j}$
between sensors $i$ and $j$. 
In an ideal case, we have exact distance measurements available for those pairs in $E$. 
This is called the {\em range-based model} or the range-aware model. In formulae, 
\begin{equation*}
	P_{i,j}=\left\{  \begin{array}{ l l}
       	d_{i,j} &\quad \text{if } (i,j)\in E,    \\
       	* &\quad \text{otherwise},
	\end{array}
 	\right.
\end{equation*}
where a $*$ denotes that the distance measurement is unavailable.

In this paper, we assume that we are given only network connectivity information and 
no distance information. 
This is known as the {\em connectivity-based model} or the range-free model.  
More formally, 
\begin{equation*}
	P_{i,j}=\left\{  \begin{array}{ l}
       	1 \quad \text{if } (i,j)\in E,    \\
       	* \quad \text{otherwise}.
	\end{array}
 	\right.
\end{equation*}
In the following, let $D$ denote the $n\times n$ squared distance matrix where $D_{i,j}=d_{i,j}^2$. By definition, $$ D=a\ones_n^T+\ones_n a^T-2XX^T,$$ where $a\in \mathbb{R}^n$ is a vector with $a_i=\|x_i\|^2$ and $\ones_n$ is the all ones vector. As $D$ is a sum of two rank-1 matrices and a rank-$d$ matrix, its rank is at most $d+2$.
\begin{table}[tb]
\caption{Summary of Notation.}
\ra{0.8}
\centering
\begin{tabular}{@{}ll | ll@{}}
\toprule
$n$ & number of unknown sensors & $V_u$ & set of unknown nodes\\
$m$ & number of anchors & $V_a$ & set of anchors\\
$R$ & communication range & $\ones_n$ & all ones vector of size $n$\\	
$P_{i,j}$ & distance measurements & $\hD$ & estimated squared distance matrix\\
$d_{i,j}$ & Pairwise distance between nodes $i$ and $j$ & $\id_{n\times n}$ & ${n\times n}$ identity matrix\\
$x_i$ & position of node $i$ & $\hat{x}_i$ & estimated position of node $i$\\
$p$ & detection probability  & $X$ &positions matrix\\
$d$ & dimension & $\hX$ &estimated positions matrix\\
$D$ & squared distance matrix & $\hd_{i,j}$& shortest path between node $i$ and $j$\\
$\Orth(d)$ & orthogonal group of $d\times d$ matrices & $\|\cdot \|_F$& Frobenius norm \\
$\langle A,B\rangle$ & Frobenius inner product & $\|\cdot \|_2$& spectral norm \\
\bottomrule
\end{tabular}
\label{tab:notations}
\end{table}

%
%
\section{Algorithms}
\label{sec:algorithm}

In general, there are two solutions to the localization problem: a relative map and an absolute map.
A relative map is  a 
configuration of sensors that have the same neighbor 
relationships as the underlying graph $G$. 
In the following we use the terms 
configuration, embedding, and relative map interchangeably.
 An absolute map, on the other hand,
determines the absolute geographic coordinates of all sensors. 
In this paper our objective is two-fold. First, we present the centralized algorithm {\sc {\sc MDS-MAP}}, that
finds a configuration that best fits the proximity measurements. Then,  we discuss its distributed version {\sc HOP-TERRAIN}  where its goal is for each sensor to find its absolute position.  For both,  we provide analytical bounds on the error between 
the estimated configuration and the correct configuration. 


\subsection{Centralized Positioning Algorithm: {\sc MDS-MAP}}
\label{sec:centralalgorithm}
For the centralized positioning algorithm, we assume that there is no anchor node in the system, namely, $V_a=\phi$. We define a set of random positions of $n$ sensors ${\cal X}=\{x_1,\ldots,x_n\}$. 
{\sc MDS-MAP} consists of two steps: 

\begin{center}
\begin{tabular}{ll} 
\hline
\multicolumn{2}{l}{ {\bf Algorithm :} {\sc MDS-MAP} \cite{SRZ03} }\\
\hline
\multicolumn{2}{l}{{\bf Input:} dimension $d$, graph $G=(V,E,P)$ }\\
1: & Compute the shortest paths, and let $\hD$ be \\
    & the squared shortest paths matrix;\\
2: & Apply MDS to $\hD$, and let $\hX$ be the output.\\
\hline
\end{tabular}
\end{center}
%

{\bf Shortest paths.}
 The shortest path between nodes $i$ and $j$ in graph $G=(V,E,P)$
is defined as a path between two nodes such that the 
sum of the proximity measures of its constituent edges is minimized.
Let $\hd_{i,j}$ be the computed shortest path between node $i$ and $j$. 
Then, the squared shortest paths matrix $\hD\in\R^{n\times n}$ is defined as 
$\hD_{ij}=\hd_{i,j}^2$ for $i\neq j$, and $0$ for $i=j$.

\begin{figure}[t]
\begin{center}
\vspace{-.75cm}
\includegraphics[width=12cm]{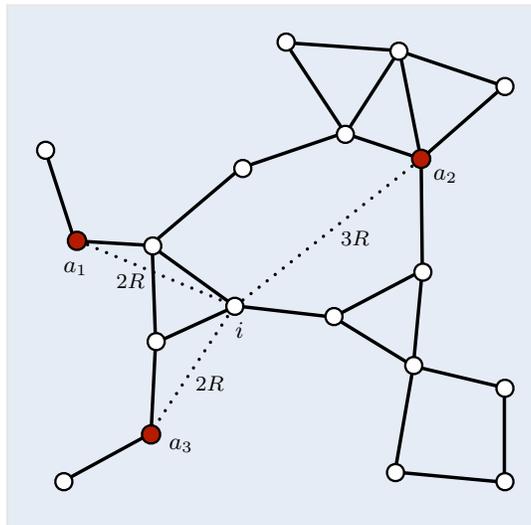}
\put(-185,105){\footnotesize$i$}
		\put(-250,130){\footnotesize$a_1$}
		\put(-110,165){\footnotesize$a_2$}
		\put(-210,63){\footnotesize$a_3$}
		\put(-145,140){\scriptsize$3R$}
		\put(-200,85){\scriptsize$2R$}
		\put(-230,124){\scriptsize$2R$}
		\vspace{-1cm}
\caption{The shortest path between two nodes is defined in terms of the minimum number of hops multiplied by the radio range $R$. For instance, the shortest path between $i$ and $a_1$ is $2R$. } \label{fig:shortest_path}
\end{center}
\end{figure}
{\bf Multidimensional scaling.}
In step 2, we apply the MDS to $\hD$ to get a good estimate of $X$, specifically, we compute $\hX=\MDS_d(\hD)$.
Multidimensional scaling (MDS) refers to a set of statistical techniques used in finding the configuration of objects in a low dimensional space such that the measured pairwise distances are preserved \cite{mds-book}. It is often used for a visual representation of the proximities between a set of items. For example, given a matrix of perceived similarities or dissimilarities between 
$n$ items, MDS geometrically places each of those items in a low dimensional space such that the items that are similar are placed close to each other. Formally, MDS finds a lower dimensional embedding $\hat{x_i}$s that minimize the \textit{stress} defined as
$$\text{stress}\doteq \sqrt{\frac{\sum_{i\neq j} (f(d_{i,j})-\hat{d}_{i,j})^2}{\sum{i\neq j}\hat{d}^2_{i,j}}},$$
where $d_{i,j}$ is the input similarity (or dissimilarity), $\hat{d}_{i,j}=\|\hat{x_i}-\hat{x_j}\|$ is the Euclidean distance in the lower dimensional embedding, and $f(\cdot)$ is some function on the input data. When MDS perfectly embeds the input data, we will have $f(d_{i,j})=\hat{d}_{i,j}$ and the stress is zero. 

In this chapter we use what is called the classic metric MDS (we refer the interested reader to \cite{Cox01}, for the definition of other types of MDS algorithms, for instance non-metric MDS, replicated MDS, and weighted MDS). 	In classic metric MDS, $f(\cdot)$ is the identity function and the input dissimilarities correspond to the Euclidean distances such that $d_{i,j}=\| x_i-x_j\|$ for some lower dimensional embedding $\{x_i\}$. Further, when all the dissimilarities (or pairwise distances) are measured without error, the following spectral method correctly recovers the lower dimensional embedding up to a rigid motion.


%
\begin{center}
\begin{tabular}{ll} 
\hline
\multicolumn{2}{l}{ {\bf Algorithm :} Classic Metric MDS \cite{SRZ03} }\\
\hline
\multicolumn{2}{l}{{\bf Input:} dimension $d$, estimated distance matrix $M$ }\\
1: & Compute $(-1/2)\Ln M\Ln$, \\
   & where $\Ln=\id_n-(1/n)\ones_n\ones_n^T$;\\
2: & Compute the best rank-$d$ approximation $U_d\Sigma_d U_d^T$ \\
   & of $(-1/2)\Ln M\Ln$;\\
3: & Return $\MDS_d(M)\equiv U_d\Sigma_d^{1/2} $. \\
\hline
\end{tabular}
\end{center}
This algorithm has been frequently used in positioning applications; and in the future, whenever we say MDS we refer to the above algorithm.
Let $\Ln$ be an $n \times n$ symmetric matrix such that 
$$\Ln=\id_n-(1/n)\ones_n\ones_n^T,$$
where $\ones_n \in \R^n$ is the all ones vector and 
$\id_n$ is the $n \times n$ identity matrix.
Let ${\sf MDS}_d(D)$ denote the $n\times d$ matrix 
returned by MDS when applied to the squared distance matrix $D$.
Then, in formula, given the singular value decomposition (SVD) 
of a symmetric and positive definite matrix $(-1/2)\Ln D \Ln$ 
as $(-1/2)\Ln D \Ln = U\Sigma U^T$, 
\begin{eqnarray*}
 \MDS_d(D) \equiv U_d \Sigma_d^{1/2}\;,
\end{eqnarray*}
where $U_d$ denotes the $n\times d$ left singular matrix that
corresponds to the $d$ largest singular values and $\Sigma_d$
denotes the $d\times d$ diagonal matrix with $d$ largest singular values in the diagonal.
This is also known as the {\sc MDSLocalize} algorithm in \cite{DJM06}.
Note that as the columns of $U$ are orthogonal to $\ones_n$ by construction, 
it follow that $\Ln\cdot\MDS_d(D) = \MDS_d(D)$.


It can be easily shown that 
when MDS is applied to the correct squared distance matrix without noise, 
the configuration of sensors are exactly recovered \cite{DJM06}.
This follows from the following equality
\begin{eqnarray}
 -({1}/{2})\Ln D \Ln = \Ln XX^T\Ln\;. \label{eq:LDL}
\end{eqnarray}
Note that we only obtain the configuration and not the absolute positions, 
in the sense that $\MDS_d(D)$ is one version of infinitely many solutions 
that matches the distance measurements $D$. 
 Therefore there are multiple incidents of $X$ that result in the same $D$.
 We introduce a formal definition of rigid transformation 
and related terms.

We denote by $\Orth(d)$ the orthogonal group of $d\times d$ matrices.
A set of sensor positions $Y\in\R^{n\times d}$ is a rigid transformation of $X$, 
if there exists a $d$-dimensional shift vector $s$ 
and an orthogonal matrix $Q\in\Orth(d)$ such that 
 $Y = XQ+\ones_n s^T \;$.
Here $Y$ should be interpreted as 
a result of first rotating (and/or reflecting) sensors in position $X$ by $Q$ and then 
adding a shift by $s$. 
Similarly, when we say two position matrices $X$ and $Y$ 
are equal up to a rigid transformation, 
we mean that there exists a rotation $Q$ and a shift $s$ such that 
$Y = XQ+\ones_n s^T$.
Also, 
we say a function $f(X)$ is \textit{invariant} under rigid transformation 
if and only if for all $X$ and $Y$ that are equal up to a 
rigid transformation we have $f(X)=f(Y)$.
Under these definitions, it is clear that $D$ 
is invariant under rigid transformation, as for all $(i,j)$, 
$$D_{ij} = \|x_i-x_j\|^2 = \|(x_iQ+s^T)-(x_jQ+s^T)\|^2,$$ 
for any $Q\in\Orth(d)$ and $s\in\R^d$. 

Although MDS works perfectly when $D$ is available, 
in practice not all proximity measurements are available
because of the limited radio range $R$. This is why, in the first step, we estimated the unavailable entries of $D$ by finding the shortest path between disconnected nodes.

\subsection{Distributed Positioning Algorithm: {\sc HOP-TERRAIN}}
\label{sec:distributedalgorithm}
Recall that {\sc HOP-TERRAIN} is a distributed  algorithm that aims at finding the global map. Notice that  in order to fix the global coordinate system in a $d$ dimensional space, we need to know the positions of at least $d+1$ nodes. As we defined before, these nodes  whose global positions are known are called anchors. In this section we assume that we have $m$ anchors in total, i.e., $V_a=\{1,2,\dots,m\}$.  Based on the robust positioning algorithm introduced in \cite{SLR02}, 
the distributed sensor localization algorithm consists of two steps : 

\begin{center}
\begin{tabular}{ll} 
\hline
\multicolumn{2}{l}{ {\bf Algorithm :} \hopterrain \cite{SLR02} }\\
\hline
1: & Each node $i$ computes the shortest paths \\
   & $\{\hd_{i,a}\,:\,a\in V_a\}$ between itself and the anchors;\\
2: & Each node $i$ derives an estimated position $\hx_i$ \\
   & by triangulation with a least squares method. \\
\hline
\end{tabular}
\end{center}
%

{\bf Distributed shortest paths:} Similarly to {\sc MDS-MAP}, the first step is about finding the shortest path. The difference is that in the first step each of the unknown nodes only estimates
the distances between itself and the anchors. 
These approximate 
distances will be used in the next triangulation step to derive an estimated position. In other words,
the shortest path between an unknown node $i$ and an anchor $a$ in the graph $G$
provides an estimate for the Euclidean distance $d_{i,a}=\|x_i-x_a\|$.

We denote by $\hd_{i,a}$ the computed shortest path 
and this provides the initial estimate for the distance between the node $i$ and the anchor $a$. 
When only the connectivity information is available and the corresponding graph $G=(V,E,P)$ is defined as in the {\em connectivity-based model}, the shortest path $\hd_{i,a}$ is equivalent to 
the minimum number of hops between a node $i$ and an anchor $a$ multiplied by the radio range $R$. 

In order to find the minimum number of hops from  
an unknown node $i\in V_u$ to an anchors $a\in V_a$ in a distributed way, 
we use a method similar to \DVhop \cite{NN03}.
Each unknown node maintains a table $\{x_a,h_a\}$ that is initially empty, 
where $x_a\in\R^d$ refers to the position of the anchor $a$ and $h_a$ to 
the number of hops from the unknown node to the anchor $a$.
First, each of the anchors initiate a broadcast containing 
its known location and a hop count of one.
All of the one-hop neighbors surrounding the anchor, 
on receiving this broadcast, record the anchor's position and a hop count of one,
and then broadcast the anchor's known position and a hop count of two. 
From then on, whenever a node receives a broadcast, it does one of the two things. 
If the broadcast refers to an anchor that is already in the record
and the hop count is larger than or equal to what is recorded, 
then the node does nothing. Otherwise, if 
the broadcast refers to an anchor that is new or has a hop count that is smaller, 
the node updates its table with this new information on its memory 
and broadcasts the new information after incrementing the hop count by one. 

To estimate the distances between the node and the anchors,
when every node has computed the hop count to all the anchors, 
the number of hops is multiplied by the radio range $R$ 
to estimate the distances between the node and the anchors.
Note that to begin triangulation, not all the hop counts to all the anchors
are necessary. A node can start triangulation as soon as it has 
estimated distances to $d+1$ anchors. 
There is an obvious trade-off between
the number of communications and their performance.

The above step of computing the minimum number of hops is 
the same distributed algorithm as described in \DVhopns. However, one difference is that 
instead of multiplying the number of hops by a fixed radio range $R$, 
in \DVhopns, the number of hops is multiplied by an average hop distance.
The average hop distance is computed from the known pairwise distances between anchors 
and the number of hops between the anchors.
although numerical simulations show that the average hop distance provides a better estimate,
the difference between the computed average hop distance and the radio range $R$ becomes negligible as $n$ grows large.

{\bf Triangulation using least squares.}
In the second step, each unknown node $i\in V_u$ uses 
a set of estimated distances $\{\hd_{i,a}:a\in V_a\}$ 
together with the known positions of the anchors, to perform 
a triangulation. The resulting estimated position is denoted by $\hx_i$. 
For each node, the triangulation consists in solving 
a single instance of a least squares problem ($Ax=b$) 
and this process is known as Lateration \cite{SRB01,LR03}.

For an unknown node $i$, the position vector $x_i$ and the anchor 
positions $\{x_a:a\in \{1,\ldots,m\}\}$ satisfy the following series of equations:
\begin{eqnarray*}
 \|x_1-x_i\|^2 &=& d_{i,1}^2 \;,\\
 &\vdots& \\
 \|x_m-x_i\|^2 &=& d_{i,m}^2 \;.
\end{eqnarray*}
Geometrically, the above equalities simply say that the point $x_i$ is the intersection point of $m$ circles centred at $x_1, x_2, \dots, x_m$ (see Figure~\ref{fig:lateration}).  
\begin{figure}[t]
\begin{center}
\includegraphics[width=7cm]{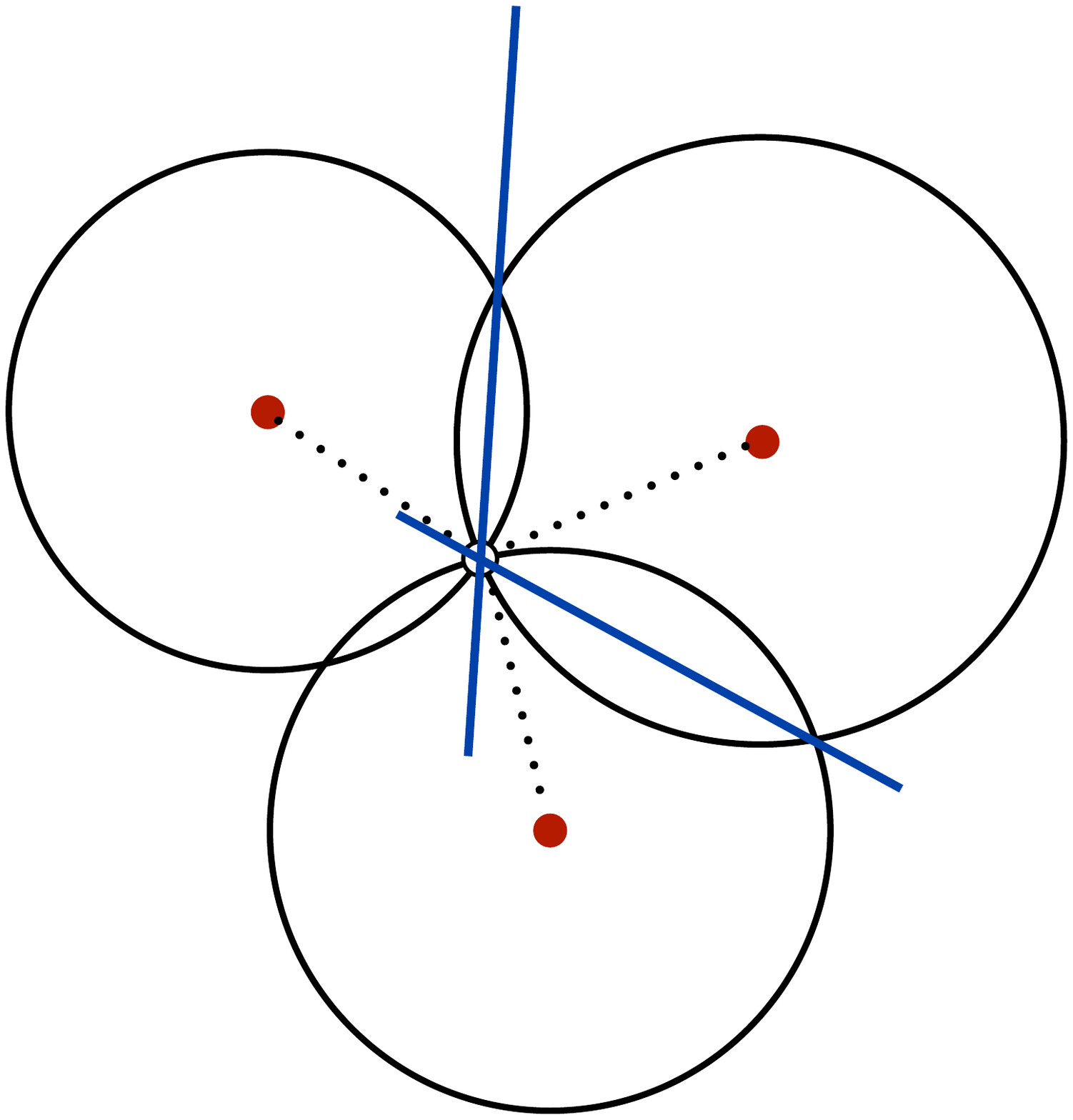}
\put(-120,50){$x_1$}
\put(-150,90){$x_2$}
\put(-80,90){$x_3$}
\vspace{-1cm}
\includegraphics[width=7cm]{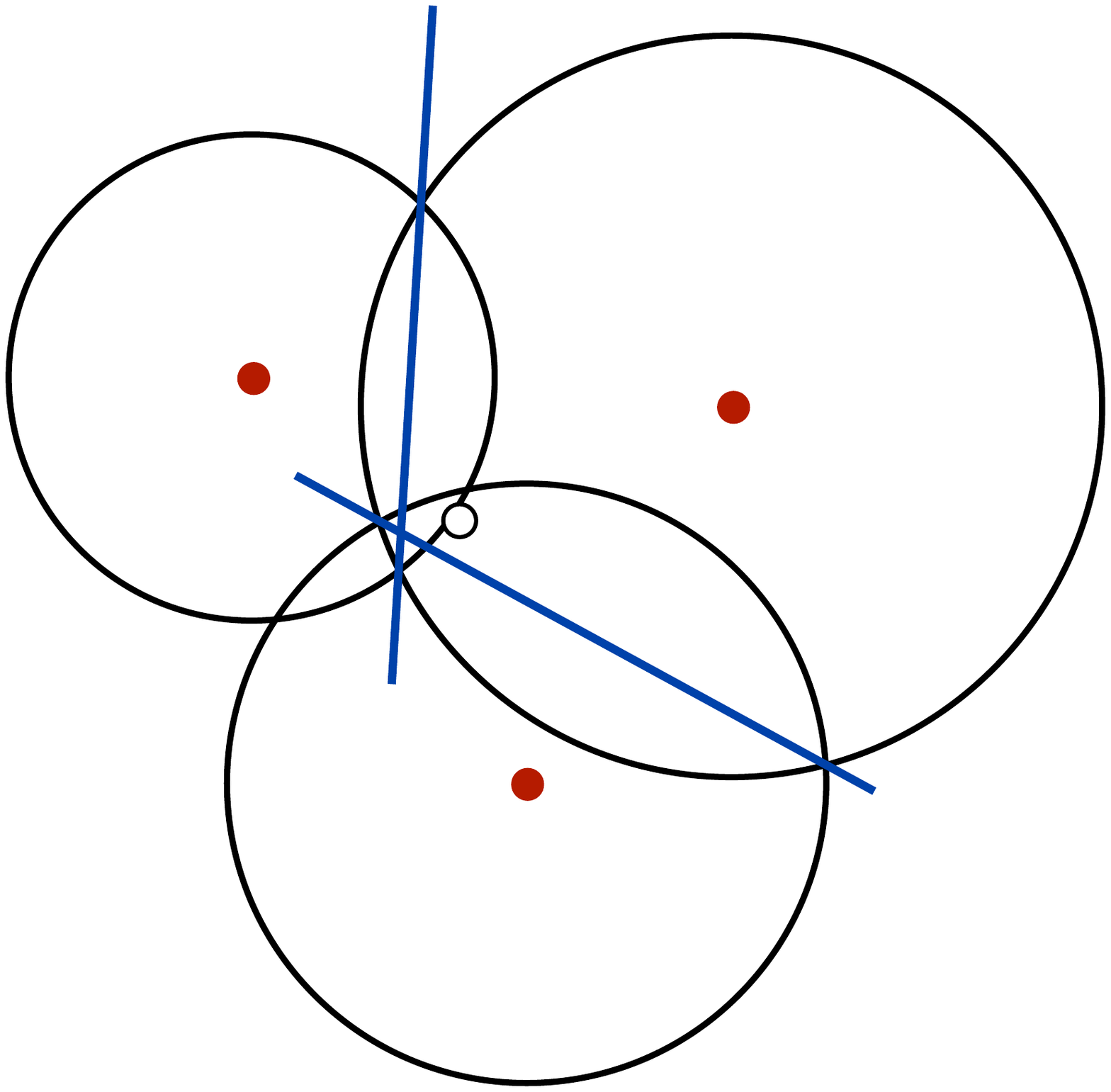}
\put(-120,50){$x_1$}
\put(-150,90){$x_2$}
\put(-80,90){$x_3$}
\caption{Multilateration with exact distance measurements (left) and with approximate distance measurements (right). Three solid circles denote the anchors (red) and the white circle denotes the unknown nodes. The intersection of the blue lines corresponds to the solution of multilateration.} \label{fig:lateration}
\end{center}
\end{figure}
This set of equations can be linearised by subtracting each line from the next line.
\begin{eqnarray*}
 \|x_2\|^2-\|x_1\|^2+ 2(x_1-x_2)^Tx_i &=& d_{i,2}^2-d_{i,1}^2 \;,\\
 & \vdots & \\
 \|x_m\|^2-\|x_{m-1}\|^2+ 2(x_{m-1}-x_m)^Tx_i &= &d_{i,m}^2-d_{i,{m-1}}^2 \;.
\end{eqnarray*}

By reordering the terms, we get a series of linear equations for node $i$ 
in the form $A\,x_i=b^{(i)}_0$,
for $A\in\R^{(m-1)\times d}$ and $b\in\R^{m-1}$ defined as
\begin{eqnarray*}
 A &\equiv& \begin{bmatrix}
	2(x_{1}-x_{2})^T\\
	\vdots \\
	2(x_{m-1}-x_{m})^T 
	\end{bmatrix} \;,\\
b_0^{(i)} &\equiv&  \begin{bmatrix}
	\|x_{1}\|^2-\|x_{2}\|^2  + d_{i,{2}}^2-d_{i,{1}}^2\\
	\vdots \\
	\|x_{m-1}\|^2-\|x_{m}\|^2 + d_{i,{m}}^2-d_{i,{m-1}}^2
	\end{bmatrix} \;.\\
\end{eqnarray*}
Note that the matrix $A$ does not depend on the particular unknown node $i$ and 
all the entries are known accurately to all the nodes after the distributed shortest paths step.
However, the vector $b^{(i)}_0$ is not available at node $i$, because 
$d_{i,a}$'s are not known. 
Hence we use an estimation $b^{(i)}$, that is defined from $b^{(i)}_0$
by replacing $d_{i,a}$ by $\hd_{i,a}$ everywhere. Notice that $\hd_{i,a}\geq d_{i,a}$. As a result, the circles centred at $x_1,x_2,\dots,x_m$ have potentially larger radii. Therefore,  the intersection between circles is no longer a single point, but rather a closed area. 
Then, finding the optimal estimation $\hx_i$ of $x_i$ that minimizes the mean squared error 
is solved in a closed form using a standard least squares approach: 
\begin{eqnarray}
 \hx_i=(A^TA)^{-1}A^Tb^{(i)} \;. \label{eq:lateration}
\end{eqnarray}

For bounded $d=o(1)$, a single least squares operation has complexity $O(m)$, 
and applying it $n$ times results in the overall complexity of $O(n\,m)$.
No communication between the nodes is necessary for this step.
\subsection{Stretch Factor: Euclidean Distance versus Shortest Path}\label{sec:stretch}

In general when the graph $G$ is not connected, the localization problem is not well defined. In fact, there are multiple configurations resulting 
in the same observed proximity measures. For instance if graph $G$ consists of two disconnected components, they can be placed in possibly infinitely different ways with respect to each other without violating any constraints imposed by $G$. For this reason we restrict our attention to the case where $G$ is connected. 

In this work, we are interested in a scalable system of $n$ unknown nodes for a large value of $n$.
As $n$ grows, it is reasonable to assume that the average number of connected neighbours 
for each node should stay constant. 
This happens, in our model, if we chose the radio range $R=C/n^{1/d}$. 
However, in the unit square, assuming sensor positions are drawn uniformly, 
the random geometric graph is connected, with high probability, 
if $\pi R^2 > (\log n +c_n)/ n$ for $c_n\rightarrow \infty$ \cite{GuK98}.
A similar condition can be derived for generic $d$-dimensions as 
$C_d R^d> (\log n +c_n)/ n$, where $C_d\leq\pi$ is a constant that depends on $d$. Moreover, in case $C_d R^d< (\log n +c_n)/ n$, not only the graph is not connected, there will be \textit{isolated nodes} with hight probability. Since isolated nodes cannot communicate with other sensors, there is no way to find their shortest paths to other nodes. Consequently, both {\sc MDS-MAP} and {\sc HOP-TERRAIN}  algorithms will be in trouble (see Figure~\ref{fig:scale}).
\begin{figure}[t]
\begin{center}
\includegraphics[width=7cm]{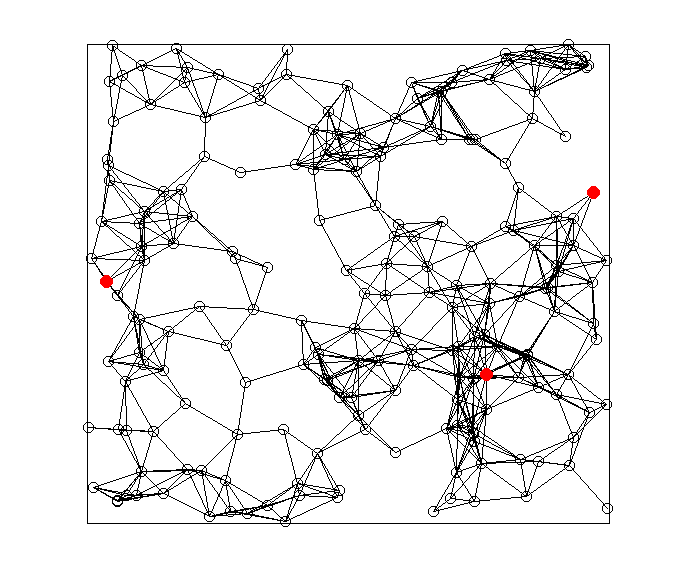}
\includegraphics[width=7cm]{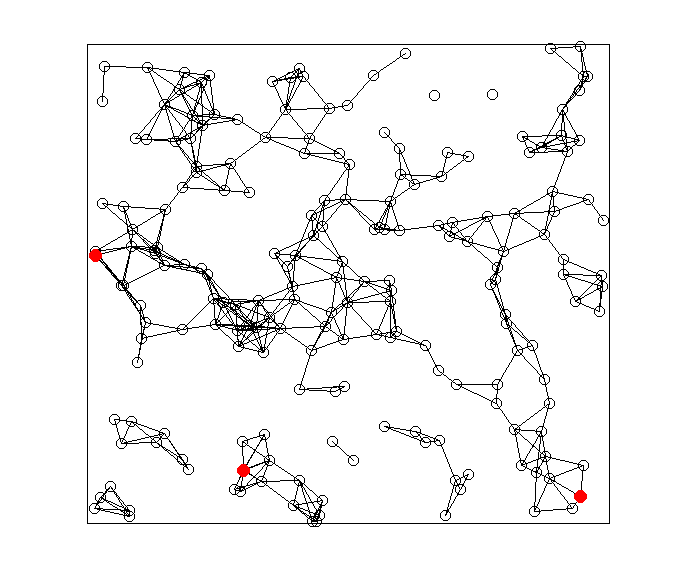}
\caption{The red vertices indicate the anchors. Under the right scaling of the radio range $R$, the graph stays connected (left figure) whereas otherwise there will be nodes without any means of communication to others (right graph).} \label{fig:scale}
\end{center}
\end{figure}
Hence, instead of $R=C/n^{1/d}$, 
we focus in the regime where the average number of connected neighbors is slowly increasing with $n$. 
Let $R_{\rm critical}$ be the critical detection range where the resulting graph starts to be connected. 
Then we are interested in the regime $R=CR_{\rm Critical}$,  
for some positive constant $C\geq1$ such that the graph stays connected with high probability. 

In our analysis, the key observation and the crux of the argument is to show that the shortest-path 
 estimate is guaranteed to be arbitrarily close
to the correct distance for large enough radio range $R$ and large enough $n$. Once we proved this, we can then show that  the MDS step (equivalently, lateration) finds almost correctly the relative (equivalently, global) position of the sensors. We demonstrate how the error in estimating the  Euclidean distance will be reflected on the position estimation.  The precise statements are given in Section~\ref{sec:main}. 

We have already discussed the complexity of MDS and lateration steps. To complete our discussion we need to evaluate the complexity of finding the shortest path. In the {\sc MDS-MAP} algorithm we require that  all-pairs shortest paths be found. This problem has 
an efficient algorithm whose complexity is $O(n^2\log n + n|E|)$ \cite{Joh77}. 
For $R=C(\log n/n)^{1/d}$ with constant $C$, 
the graph is sparse with $|E|=O(n\log n)$, 
whence the complexity is $O(n^2\log n)$. Contrary to {\sc MDS-MAP}, in {\sc HOP-TERRAIN} we must only compute the shortest paths between the unknown nodes and the anchors. This distributed shortest paths algorithm can be done efficiently
with total complexity of $O(n\,m)$.
\section{Main results}
\label{sec:main}

In this section we present our main results regarding the performance of {\sc MDS-MAP} and {\sc HOP-TERRAIN} algorithms. 
\subsection{{\sc MDS-MAP}}
Our first result establishes an upper bound on the error achieved by {\sc MDS-MAP} 
when we have only the connectivity information 
as in the case of the {\em connectivity-based model}. 

Let $\hX$ denote an $n \times d$ estimation for $X$ 
with an estimated position for node $i$ in the $i$th row. 
Then, we need to define a metric for the distance 
between the original position matrix $X$ 
and the estimation $\hX$, which is invariant under 
rigid transformation of $X$ or $\hX$. 

Define $\Ln \equiv \id_n-(1/n)\ones_n\ones_n^T$ as in the MDS algorithm.
$\Ln$ is an $n \times n$ rank $n-1$ symmetric matrix, 
which eliminates the contributions of the translation, 
in the sense that $LX=L(X+\ones s^T)$ for all $s\in R^d$. 
Note that $\Ln$ has the following nice properties:
\begin{enumerate}
\item  $\Ln XX^T\Ln$ { is invariant under rigid transformation}.  
\item $\Ln XX^T\Ln=\Ln \hX\hX^T\Ln$ { implies that $X$ and $\hX$ are equal up to a rigid transformation}. 
\end{enumerate} 
This naturally defines the following distance between $X$ and $\hX$.
\begin{eqnarray}
	\d(X,\hX) = \frac{1}{n}\big\|\Ln XX^T\Ln-\Ln\hX\hX^T\Ln \big\|_F \;, \label{eq:d1}
\end{eqnarray}
where $\|A\|_F = (\sum_{i,j}A_{ij}^2)^{1/2}$ denotes the Frobenius norm.
Notice that the factor $(1/n)$ corresponds to 
the usual normalization by the number of entries in the summation. 
Indeed this distance is invariant to rigid transformation of $X$ and $\hX$. 
Furthermore, $\d(X,\hX)=0$ implies that $X$ and $\hX$ 
are equal up to a rigid transformation.
With this metric, our
main result establishes an upper bound on the resulting error. 
The proof of this theorem is provided in Section \ref{sec:mainproof}.
We define 
\begin{eqnarray}
	R_{\text{MDS}} \equiv 32\left(\frac{12 \log n}{\alpha(n-2)}\right)^{\frac{1}{d}}  \;. \label{eq:defR0}
\end{eqnarray}

\begin{thm}[connectivity-based model]
\label{thm:main1}
Assume $n$ nodes are distributed uniformly at random in the $[0,1]^d$ hypercube, 
for a bounded dimension $d\in\{2,3\}$. 
For a positive radio range $R$ and detection probability 
$p$ defined in \eqref{eq:detection}, we are given 
the connectivity information of the nodes according to the range-free model with probabilistic detection.
Then, with a probability larger than $1-1/n^4$, the distance between the estimate $\hX$ produced by  
{\sc MDS-MAP} and the correct position matrix $X$ is bounded by
\begin{eqnarray}
	\d(X,\hX) \leq \frac{R_{\text{MDS}}}{R}\, + 20R \;, \label{eq:thm1} 
\end{eqnarray}
for $R>(1/\alpha)^{1/d}R_{\text{MDS}}$, where $\d(\cdot)$ is defined in \eqref{eq:d1} and $R_{\text{MDS}}$ in \eqref{eq:defR0}.
\end{thm}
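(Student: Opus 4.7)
The plan is a two-stage argument. First I would control how well the shortest-path estimates in the connectivity-only graph approximate true Euclidean distances, obtaining a bound on $\|\hD - D\|_F$. Then I would use the spectral stability of classical MDS to convert that matrix-level bound into a bound on $\d(X,\hX)$.

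\textbf{Stage 1 (stretch-factor lemma).} The key ingredient is a lemma stating that, with probability at least $1-n^{-4}$, for every pair $(i,j)$ the shortest hop-count path satisfies $d_{i,j} \le \hd_{i,j} \le (1+\varepsilon)\,d_{i,j} + O(R)$, where $\varepsilon = O(\RMDS/R)$. I would prove this by a standard covering argument: partition the unit cube into axis-aligned cells of side $cR$ for a small constant $c<1/\sqrt{d}$ chosen so that any two sensors in neighbouring cells lie within radio range $R$. A Chernoff plus union bound shows that, provided the cell volume exceeds a constant multiple of $\log n / n$---which is exactly what the regime $R > (1/\alpha)^{1/d}\RMDS$ provides, after absorbing the detection-probability floor $p(z)\ge\alpha$ for $z\le R$---every cell is occupied by at least one sensor with the desired probability. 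Conditioned on this event, I would construct an explicit multi-hop path shadowing the straight segment from $x_i$ to $x_j$, hopping from an occupant of one cell to an occupant of the next cell along the segment. Each hop has Euclidean length at most $R$ yet advances along the segment by at least $R(1-O(c))$, so the number of hops is at most $d_{i,j}/(R(1-O(c))) + O(1)$; multiplying by $R$ gives the advertised stretch bound.

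\textbf{Stage 2 (MDS stability).} This stage exploits the identity $-\tfrac12 LDL = LXX^TL$ from \eqref{eq:LDL} together with the fact that $\MDS_d$ produces the best rank-$d$ symmetric factorization of $-\tfrac12 L\hD L$. By the triangle inequality,
\begin{equation*}
\bigl\|LXX^TL - L\hX\hX^T L\bigr\|_F \;\le\; \bigl\|LXX^TL + \tfrac12 L\hD L\bigr\|_F + \bigl\|\tfrac12 L\hD L + L\hX\hX^T L\bigr\|_F.
\end{equation*}
The first term equals $\tfrac12\|L(\hD-D)L\|_F \le \tfrac12\|\hD-D\|_F$ because $L$ is an orthogonal projection of spectral norm at most $1$. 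For the second term, $-L\hX\hX^T L$ is by construction the best rank-$d$ approximation of $\tfrac12 L\hD L$, while $-LXX^TL$ already has rank at most $d$; Eckart--Young therefore bounds the second term by the first, yielding $\|LXX^TL - L\hX\hX^T L\|_F \le \|\hD-D\|_F$. Converting the entrywise stretch bound into a Frobenius bound is then routine: writing $|\hD_{ij}-D_{ij}| = (\hd_{i,j}+d_{i,j})|\hd_{i,j}-d_{i,j}|$ and using $d_{i,j}\le\sqrt d = O(1)$ together with the stretch lemma gives $|\hD_{ij}-D_{ij}| \le O(\RMDS/R) + O(R)$ uniformly in $(i,j)$. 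Summing over $n^2$ entries and dividing by $n$ then recovers \eqref{eq:thm1}.

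\textbf{The hard part} will be making the stretch-factor lemma quantitatively tight. One must choose the cell side $cR$ so that (a) every cell contains at least one sensor with probability $1-n^{-4}$, which pins down $c$ through the condition $R > (1/\alpha)^{1/d}\RMDS$; (b) the transverse error accumulated while shadowing $[x_i,x_j]$ contributes only the $\RMDS/R$ multiplicative term, not the additive $O(R)$ term; and (c) the random detection failures reduce the effective radius by only a constant factor, using $p(z)\ge\alpha$ for $z\le R$. Carefully tracking these three effects---together with the fact that $\hD$ need not correspond to any Euclidean configuration, so $-\tfrac12 L\hD L$ is only approximately PSD---is what fixes the constant $32$ in the definition of $\RMDS$ and the additive $20R$ in \eqref{eq:thm1}.
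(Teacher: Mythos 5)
Your overall architecture is the paper's: a shortest-path stretch lemma plus a perturbation analysis of classical MDS built on the identity \eqref{eq:LDL} and Eckart--Young. Stage~2 is sound as written (you work entirely in Frobenius norm and use the crude bound $\|\hD-D\|_F\le n\max_{i,j}|\hD_{ij}-D_{ij}|$, whereas the paper passes to the spectral norm with a rank-$2d$ factor and a Perron--Frobenius argument; both give the same order, and yours is if anything simpler). The genuine gap is in Stage~1, in the treatment of probabilistic detection. In this model two nodes within range $R$ are connected only with probability $p(d_{i,j})\ge\alpha$, which may be bounded away from $1$. Your occupancy event --- ``every cell of side $cR$ contains at least one sensor'' --- does not yield a usable hop: the edge from the current node of your shadowing path to the occupant of the target cell is present only with probability at least $\alpha$, independently of the positions, and no ``reduction of the effective radius by a constant factor'' can resurrect a missing Bernoulli edge. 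The same problem hits the terminal hop onto node $j$ itself: the last relay can be within range of $j$ and still fail to detect it. The paper's proof of Lemma~\ref{lem:hops} is organized around exactly this point: it uses pair-specific bins $A_{i,j}$ (a shell sector at distance roughly $R$ from $x_i$ toward $x_j$) declared occupied only if they contain a node \emph{detected by} $i$, plus lens-shaped bins $B_{i,j}$ for in-range pairs, occupied only by a node detected by \emph{both} endpoints (probability proportional to $\alpha^2$); the latter is what gives the ``$+2$'' additive hops and, downstream, the $20R$ term in \eqref{eq:thm1}. To repair your plan you need an analogous device --- e.g.\ per-ordered-pair ``occupied by a detected node'' events with a union bound over the $n(n-1)$ pairs, or cells guaranteed to contain $\Omega(\log n)$ sensors of which at least one is detected --- before the induction along the segment goes through.

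A second, quantitative issue: with cells of side $cR$ for a genuinely constant $c$, your construction yields stretch $1+O(c)$, a constant, so the multiplicative part of the final error would be a constant rather than $\RMDS/R$. To obtain $\varepsilon=O(\RMDS/R)$ the cell (or bin) scale must be of order $(\log n/(\alpha n))^{1/d}\approx\RMDS$, i.e.\ $c$ must itself scale like $\RMDS/R$, and the hypothesis $R>(1/\alpha)^{1/d}\RMDS$ is precisely what keeps this smaller than $R$; this is how the paper sets $\delta=(12\log n/(\alpha(n-2)))^{1/d}$ and advances $R-\sqrt{3}\delta$ per hop. You gesture at this tension in your ``hard part,'' but the construction as stated (constant $c$, neighbouring-cell connectivity as the binding constraint) does not deliver the $\varepsilon$ you claim.
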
 
The proof is provided in Section \ref{sec:mainproof}. The following corollary trivially follows, as for each $(i,j)\in E$, we have $d_{i,j}\leq R$.
\begin{coro}[range-based model]
Under the hypotheses of Theorem~\ref{thm:main1} and in the case of rang-based model, with high probability $$\d(X,\hX) \leq \frac{R_{\text{MDS}}}{R}\, + 20R \;.$$
\end{coro}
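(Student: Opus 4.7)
The plan is to pass from the position error to a distance-matrix error, and from there to a shortest-path error controlled by a geometric lemma on random geometric graphs. Concretely, let $D$ be the true squared distance matrix and $\hD$ its shortest-path estimate, and recall the identity $-\tfrac12 \Ln D\Ln = \Ln XX^T\Ln$. Since MDS returns the best rank-$d$ approximation $\Ln\hX\hX^T\Ln$ of $-\tfrac12 \Ln\hD\Ln$, and $\Ln XX^T\Ln$ is itself of rank at most $d$, Eckart--Young yields
\begin{equation*}
\big\|\Ln XX^T\Ln-\Ln\hX\hX^T\Ln\big\|_F \;\le\; 2\big\|\Ln XX^T\Ln + \tfrac12\Ln\hD\Ln\big\|_F \;=\; \big\|\Ln(D-\hD)\Ln\big\|_F \;\le\; \|D-\hD\|_F,
\end{equation*}
using $\|\Ln\|_2\le 1$. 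Therefore $\d(X,\hX)\le \tfrac{1}{n}\|D-\hD\|_F$, and it suffices to bound the entrywise discrepancy between squared shortest-path distances and squared Euclidean distances.

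Next I would factor each entry as $\hD_{ij}-D_{ij} = (\hd_{i,j}-d_{i,j})(\hd_{i,j}+d_{i,j})$. Because both $\hd_{i,j}$ and $d_{i,j}$ are bounded by the cube diameter $\sqrt{d}=O(1)$, it is enough to control $\hd_{i,j}-d_{i,j}$, which is nonnegative (the hop-scaled shortest path always upper bounds the Euclidean distance). The crux will be a stretch-factor lemma: with probability at least $1-1/n^4$, for every pair $i,j$,
\begin{equation*}
 d_{i,j}\;\le\;\hd_{i,j}\;\le\;\Big(1+\tfrac{R_{\text{MDS}}}{R}\Big)d_{i,j} \;+\; c\,R,
\end{equation*}
for an explicit constant $c$ (absorbed ultimately into the $20R$ additive term). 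To prove this, I would tile $[0,1]^d$ with cells of side $\Theta(R_{\text{MDS}})$ and use a Chernoff/union bound argument: since $R_{\text{MDS}}^d \asymp \log n / n$, each cell contains at least one node with probability $\ge 1-n^{-c'}$, uniformly over $\operatorname{poly}(n)$ cells. Along any segment connecting $x_i$ and $x_j$, one then walks from cell to cell, in each step jumping to a representative within Euclidean distance at most $R$ (guaranteed by the cell-side choice and the detection probability $p\ge\alpha$ inside radius $R$). This yields at most $\lceil d_{i,j}/(R-R_{\text{MDS}})\rceil + O(1)$ hops, each of length $R$, which rearranges into the multiplicative plus additive bound above.

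Given that lemma, the plan finishes cleanly. Using $\hd_{i,j}+d_{i,j}\le 2\sqrt{d}$ and $d_{i,j}\le\sqrt{d}$, each $|\hD_{ij}-D_{ij}|$ is at most a constant multiple of $R_{\text{MDS}}/R + R$. Summing $n^2$ such entries and taking the square root gives $\|D-\hD\|_F \le n\bigl(C_1 R_{\text{MDS}}/R + C_2 R\bigr)$, whence $\d(X,\hX)\le R_{\text{MDS}}/R + 20R$ after tracking the numerical constants (and this is where the condition $R>(1/\alpha)^{1/d}R_{\text{MDS}}$ enters, to make the stretch lemma nontrivial). The main obstacle, as expected, is the stretch-factor lemma: the multiplicative factor $(1+R_{\text{MDS}}/R)$ needs to be tight enough, and the covering argument must simultaneously handle all $\binom{n}{2}$ pairs and the probabilistic detection failures, which is where the $1-1/n^4$ probability (and the specific constants in \eqref{eq:defR0}) come from.
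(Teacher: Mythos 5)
Your plan would prove the bound, but it takes a much longer road than the paper and, along the way, re-derives Theorem~\ref{thm:main1} rather than using it. In the paper, this corollary is a one-line consequence of the connectivity-based analysis: in the range-based model the edge weights satisfy $d_{i,j}\le R$, so the weighted shortest path is sandwiched between the true Euclidean distance (triangle inequality) and $R\,h_{i,j}$, the connectivity-based estimate; hence the bound of Lemma~\ref{lem:hops}/Corollary~\ref{cor:hops} on $\hd_{i,j}^2-d_{i,j}^2$ holds a fortiori, and every subsequent step in the proof of Theorem~\ref{thm:main1} goes through unchanged. Your proposal never invokes this reduction; instead you rebuild the whole argument, replacing the paper's chain $\|\cdot\|_F\le\sqrt{2d}\,\|\cdot\|_2$ plus the Perron--Frobenius bound on $\|\hD-D\|_2$ by an Eckart--Young step in Frobenius norm followed by an entrywise-maximum bound, and replacing the paper's bin-covering hop lemma by your own cell-covering stretch lemma. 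Both routes give bounds of the same order (your Frobenius/entrywise route is arguably more elementary, avoiding Perron--Frobenius), but the paper's reduction buys the corollary essentially for free once the theorem is proved, which is the intended content here.

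One concrete caution about your stretch lemma as sketched: with probabilistic detection it is not enough that every cell of side $\Theta(\RMDS)$ contains a node. The node must also be \emph{detected} by the node you are currently standing at (detection fails with probability up to $1-\alpha$ even within range $R$), and two nodes within distance $R$ need not be directly connected at all, which is why the paper defines bin occupancy through detection by the relevant node(s), pays an extra additive $2$ hops via the bins $B_{i,j}$, and picks up the $\alpha$-dependence in $\RMDS$ and the condition $R>(1/\alpha)^{1/d}\RMDS$. Your union bound over ``nonempty cells'' is therefore over the wrong event; redefine occupancy to include detection (and allow the two-hop patch near the endpoints), after which the rest of your plan, and the constant tracking, goes through.
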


As described in the previous section, 
we are interested in the regime where 
$R=C(\log n/n)^{1/d}$ for some constant $C$.
Given a small positive constant $\delta$, 
this implies that {\sc MDS-MAP} is guaranteed to produce estimated positions
that satisfy $\d(X,\hX)\leq \delta$ 
with a large enough constant $C$ and a large enough $n$.

When $\alpha$ is fixed and $R = C(\log n/n)^{1/d}$ for some positive parameter $C$, 
the error bound in \eqref{eq:thm1} becomes 
$$\d(X,\hX) \leq \frac{C_1}{C\alpha^{1/d}}+C_2C \left(\frac{\log n}{n}\right)^{1/d}, $$ 
for some numerical constants $C_1$ and $C_2$. 
The first term is inversely proportional to $C$ and $\alpha^{1/d}$ and is independent of $n$, 
whereas the second term is linearly dependent on $C$ and vanishes as
$n$ grows large. 
This is illustrated in Figure \ref{fig:graph_centralized}, which shows numerical simulations with $n$ sensors
randomly distributed in the 2-dimensional unit square.
Notice that the resulting error is inversely proportional to $\alpha$ and independent of $\beta$.

\begin{figure}[t]
\begin{center}
\vspace{-1cm}
\includegraphics[width=9cm]{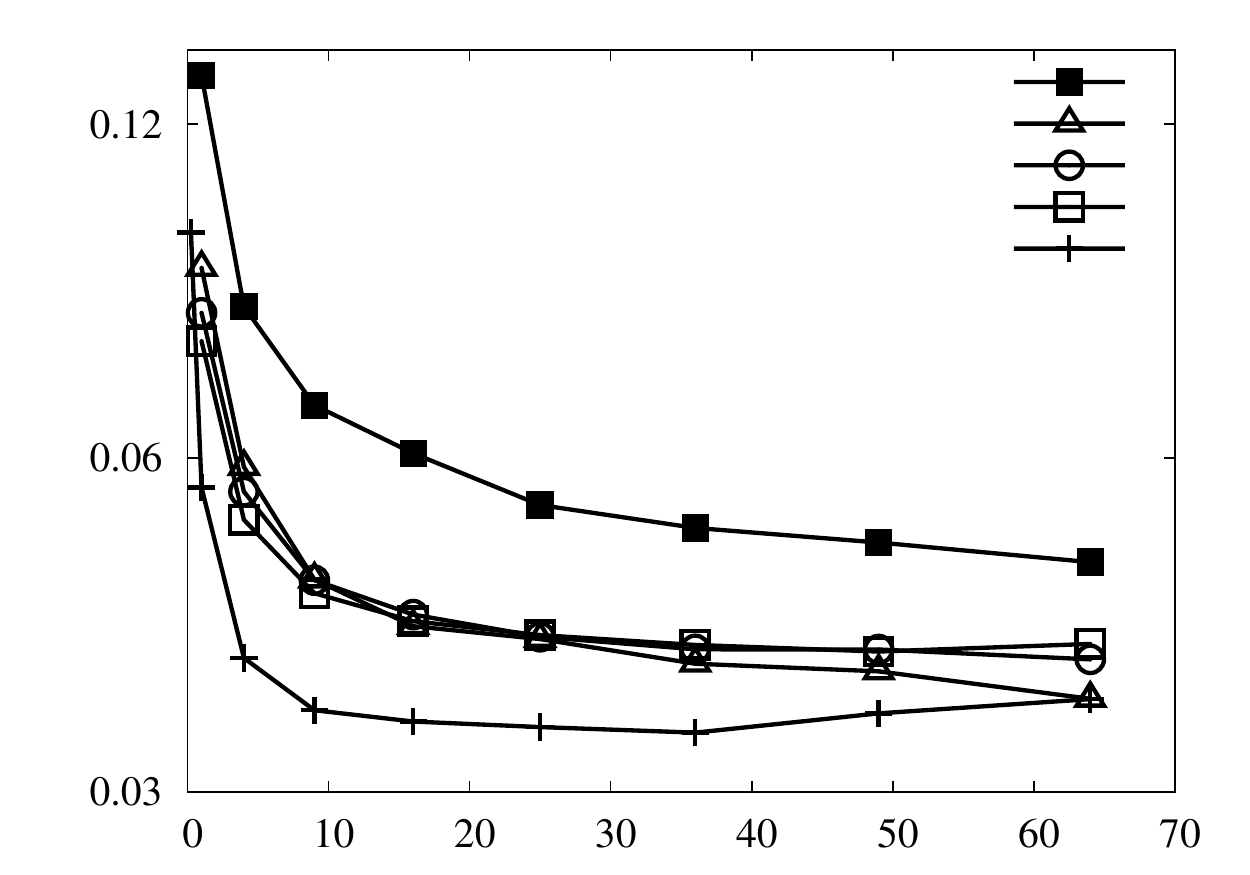}
\put(-120,-5){$C$}
\put(-250,70){\begin{sideways}\scriptsize{Average Error}\end{sideways}}
\put(-115,159){\footnotesize $\alpha=0.25$, $\beta=1$}
\put(-110,150){\footnotesize $\alpha=0.5$, $\beta=0$}
\put(-110,142){\footnotesize $\alpha=0.5$, $\beta=1$}
\put(-110,134){\footnotesize $\alpha=0.5$, $\beta=2$}
\put(-103,126){\footnotesize $\alpha=1$, $\beta=1$}
\caption{Average distance between the correct topology $X$ and the estimation $\hat{X}$ using 
{\sc MDS-MAP} as a function of $C$ where the radio range is $R=C\sqrt{\log n/n}$. 
The $n=1,000$ sensors are distributed randomly on a unit square under range-free model. 
Various values of $\alpha$ and $\beta$ are used where 
two nodes at distance $r$ are detected with probability $p(r)=\min\{1,\alpha(R/r)^\beta\}$. } 
\label{fig:graph_centralized}
\end{center}
\end{figure}
\begin{remark}
Even though the upper bounds for both range-free and range-based models have the same form, 
there is a slight difference between their behaviours as $R$ grows. In the range-free case, 
up to some point, the performance of {\sc MDS-MAP} improves as $R$ increases. 
This is due to the fact that the first and  second terms go in opposite 
directions as a function of $R$. However, In the range-based case, as $R$ 
increases, we obtain a  more accurate estimate of the the Euclidean distance. 
As a result, once the radio range increases, the resulting error of {\sc MDS-MAP} 
decreases and we do not see the contribution of the second term. 
This phenomenon is illustrated in Figure~\ref{fig:range-based}.
\end{remark}
\begin{figure}[t]
\begin{center}
\includegraphics[width=9cm]{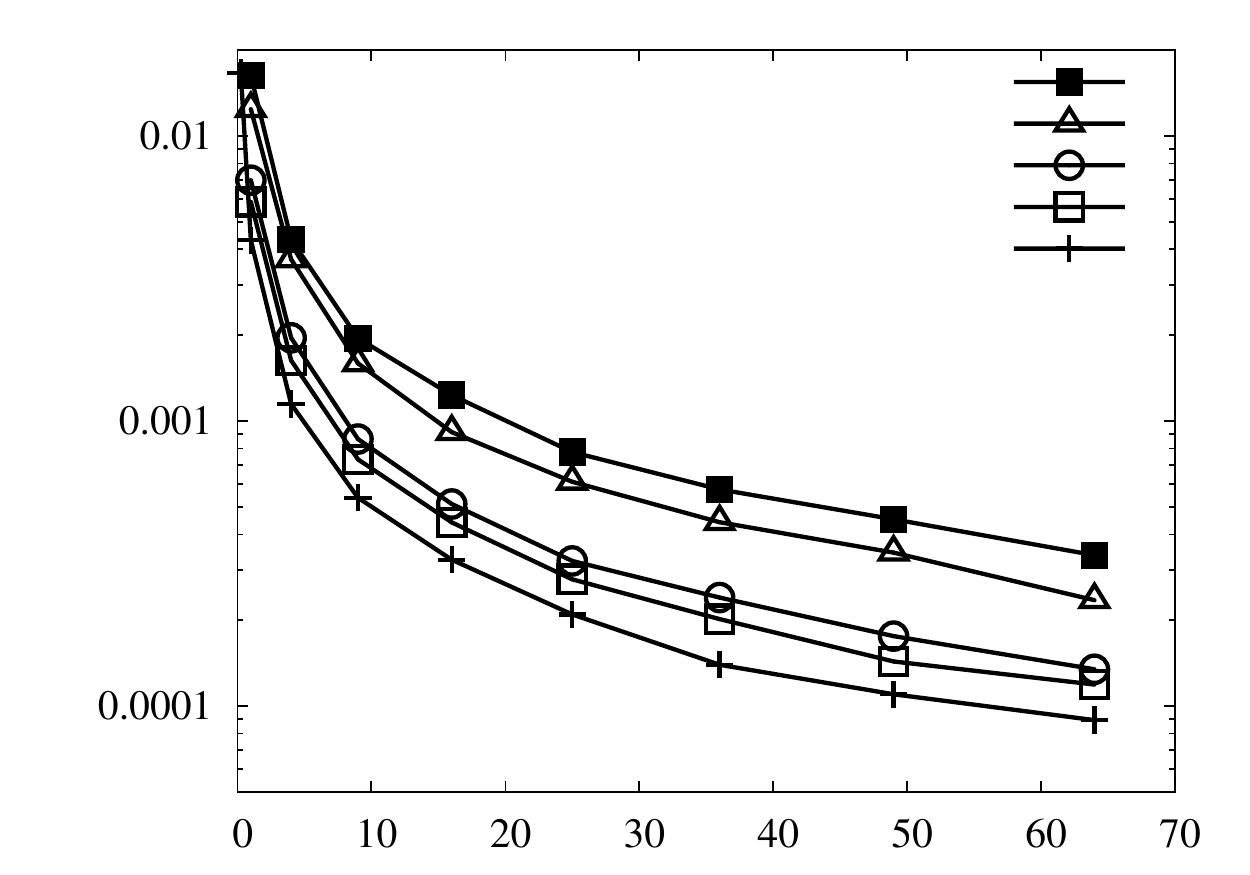}
\put(-120,-5){$C$}
\put(-250,70){\begin{sideways}\scriptsize{Average Error}\end{sideways}}
\put(-115,159){\footnotesize $\alpha=0.25$, $\beta=1$}
\put(-110,150){\footnotesize $\alpha=0.5$, $\beta=0$}
\put(-110,142){\footnotesize $\alpha=0.5$, $\beta=1$}
\put(-110,134){\footnotesize $\alpha=0.5$, $\beta=2$}
\put(-103,126){\footnotesize $\alpha=1$, $\beta=1$}
\caption{Average error of {\sc MDS-MAP}  under the range-based model. } 
\label{fig:range-based}
\end{center}
\end{figure}
Using the above theorem, we can further show that there is a linear transformation $S\in \R^{d\times d}$, such that when applied to the estimations, we get a similar bound in the Frobenius norm of the error in the positions.
\begin{thm}\label{th:transform}
Under the hypotheses of Theorem~\ref{thm:main1}, with high probability $$ \min_{S\in \R^{d\times d}} \frac{1}{\sqrt{n}}\| LX-L\hX S\|\leq \sqrt{6}\left(\frac{R_{\text{MDS}}}{R}+20R\right)$$
\end{thm}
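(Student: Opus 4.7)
The plan is to reduce the statement to a subspace perturbation bound and then invoke Davis--Kahan. Let $Y := LX$ and $\widehat{Y} := L\widehat{X}$, so that Theorem~\ref{thm:main1} supplies $\|YY^T - \widehat{Y}\widehat{Y}^T\|_F \le n\,\d(X,\widehat{X})$. For any target $Y$, the minimizer of $\|Y - \widehat{Y}S\|_F$ over $S\in\R^{d\times d}$ is the least-squares solution $S^\star = (\widehat{Y}^T \widehat{Y})^{-1}\widehat{Y}^T Y$, and the resulting residual is the component of $Y$ orthogonal to $\mathrm{col}(\widehat{Y})$:
\begin{equation*}
  \min_{S\in\R^{d\times d}}\|Y - \widehat{Y}S\|_F \;=\; \|(I - \Pi_{\widehat{Y}})Y\|_F,
\end{equation*}
where $\Pi_{\widehat{Y}} = U_2 U_2^T$ and $U_2 \in \R^{n\times d}$ is an orthonormal basis for $\mathrm{col}(\widehat{Y})$.

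Writing the compact SVD $Y = U_1 \Sigma_1 V_1^T$, I would factor this projection residual through the principal angles $\Theta$ between $\mathrm{col}(U_1) = \mathrm{col}(Y)$ and $\mathrm{col}(U_2)$:
\begin{equation*}
  \|(I - \Pi_{\widehat{Y}})Y\|_F \;\le\; \sigma_1(Y)\,\|(I - U_2 U_2^T)U_1\|_F \;=\; \sigma_1(Y)\,\|\sin\Theta\|_F.
\end{equation*}
Both $YY^T$ and $\widehat{Y}\widehat{Y}^T$ are rank-$d$ PSD matrices, so the retained eigenvalues of $\widehat{Y}\widehat{Y}^T$ are $\sigma_i(\widehat{Y})^2$ and its remaining $n-d$ eigenvalues are exactly zero; the Davis--Kahan $\sin\Theta$ theorem with spectral gap $\sigma_d(\widehat{Y})^2$ then yields
\begin{equation*}
  \|\sin\Theta\|_F \;\le\; \frac{\|YY^T - \widehat{Y}\widehat{Y}^T\|_F}{\sigma_d(\widehat{Y})^2} \;\le\; \frac{n\,\d(X,\widehat{X})}{\sigma_d(\widehat{Y})^2}.
\end{equation*}
Chaining the last two displays gives $\tfrac{1}{\sqrt n}\|Y - \widehat{Y}S^\star\|_F \le \bigl(\sigma_1(Y)\sqrt n/\sigma_d(\widehat{Y})^2\bigr)\,\d(X,\widehat{X})$.

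The concluding step is to bound the spectral ratio $\sigma_1(Y)\sqrt n/\sigma_d(\widehat{Y})^2$ by a universal constant. Since the rows of $X$ are i.i.d.\ uniform on $[0,1]^d$, the sample covariance $\tfrac{1}{n}Y^T Y = \tfrac{1}{n}\sum_i (x_i - \bar x)(x_i - \bar x)^T$ concentrates about $\tfrac{1}{12}I_d$ by a matrix Bernstein / Hoeffding bound; on an event of probability $\ge 1-1/n^4$ (matching the failure probability in Theorem~\ref{thm:main1}), both $\sigma_1(Y)^2$ and $\sigma_d(Y)^2$ are $n/12 + o(n)$, and Weyl's inequality $\sigma_d(\widehat{Y})^2 \ge \sigma_d(Y)^2 - \|YY^T - \widehat{Y}\widehat{Y}^T\|_{\text{op}}$ transfers the lower bound to $\widehat{Y}$. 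Tracking the constants (optionally using the sharper form of Davis--Kahan that places an average of $\sigma_d(Y)^2$ and $\sigma_d(\widehat{Y})^2$ in the denominator) yields the factor $\sqrt 6$, and invoking Theorem~\ref{thm:main1} to replace $\d(X,\widehat{X})$ by $R_{\text{MDS}}/R + 20R$ completes the argument.

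The main obstacle I foresee is the lower bound on $\sigma_d(\widehat{Y})^2$: because the intermediate estimate scales as $1/\sigma_d(\widehat{Y})^2$, one must control the concentration of the sample covariance of the centred random cloud $LX$ with probability at least $1-1/n^4$ and with a sufficiently tight constant to recover the $\sqrt 6$. In the low-dimensional regime $d\in\{2,3\}$ of Theorem~\ref{thm:main1} this concentration is routine, but the quantitative constants demand careful tracking so that no slack between $\sigma_d(Y)$ and $\sigma_d(\widehat Y)$ leaks into the final bound.
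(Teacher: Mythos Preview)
Your Davis--Kahan route is sound, but the paper takes a different and more elementary path that sidesteps exactly the obstacle you flag at the end. Rather than taking the least-squares minimizer $S^\star=(\hat Y^T\hat Y)^{-1}\hat Y^TY$ and then invoking subspace perturbation, the paper writes the compact SVD $LX=U\Sigma V^T$ and \emph{chooses} the (generally suboptimal) matrix
\[
S \;=\; \hat X^T L\,U\Sigma^{-1}V^T.
\]
The point of this choice is the algebraic identity
\[
LX-L\hat X S \;=\; \big(LXX^TL-L\hat X\hat X^TL\big)\,U\Sigma^{-1}V^T,
\]
which follows from $LXX^TL\,U=U\Sigma^2$. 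Taking Frobenius norms gives directly
\[
\|LX-L\hat X S\|_F \;\le\; \frac{\|LXX^TL-L\hat X\hat X^TL\|_F}{\sigma_d(LX)} \;=\; \frac{n\,\d(X,\hat X)}{\sigma_d(LX)},
\]
so the only spectral quantity in the bound is $\sigma_d(LX)$, a function of the \emph{true} positions alone. The remaining step is a single Lipschitz-concentration argument for $\sigma_d(LX)$, using $\E\big[(LX)^T(LX)\big]=(n/12)I_d$.

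What this buys over your approach is that the paper never needs Davis--Kahan, never needs to upper-bound $\sigma_1(Y)$, and above all never needs to lower-bound $\sigma_d(L\hat X)$. Your argument requires a Weyl step $\sigma_d(\hat Y)^2\ge\sigma_d(Y)^2-\|YY^T-\hat Y\hat Y^T\|_2$, which is only useful when the perturbation is already known to be small relative to $\sigma_d(Y)^2$; and the constant you obtain after chaining $\sigma_1(Y)\sqrt n/\sigma_d(\hat Y)^2$ through the concentration is $\sqrt{12}$ rather than $\sqrt 6$ unless you invoke a sharper variant of Davis--Kahan. The paper's trick---factoring through the pseudoinverse of $LX$ rather than of $L\hat X$---is the cleaner way to get the dependence on $\sigma_d$ of the \emph{ground truth} only.
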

\begin{remark}
Note that although for the sake of simplicity, we focus on $[0,1]^d$ hypercube;
our analysis easily generalizes to any bounded convex set
and homogeneous Poisson process model with density $\rho=n$.
The homogeneous Poisson process model is characterized by the probability 
that there are exactly $k$ nodes appearing in any region with volume $A$ :
$\prob(k_A=k)= \frac{(\rho A)^k}{k!}e^{-\rho A}$.
Here, $k_A$ is a random variable defined as 
the number of nodes in a region of volume $A$. 
\end{remark}
\begin{remark}
To simplify calculations, we assumed that $d$ is either $2$ or $3$. 
However, the analysis easily applies to general $d$ and 
only the constant in the bound \eqref{eq:thm1} would change as long as $d=O(1)$. 
\end{remark}

In what follows we investigate an important question
whether similar performance guarantees, as in {\sc MDS-MAP}, can be obtained
in a distributed setting. In particular, we analyze the performance
of the {\sc HOP-TERRAIN} algorithm. As we have already stressed, this algorithm can be seen as a distributed
version of the {\sc MDS-MAP} algorithm. In particular, we show that
when only connectivity information is available, for every
unknown node. The Euclidean distance between the estimate
 and the correct position can be bounded very  similarly to Theorem~\ref{thm:main1}.
\subsection{{\sc HOP-TERRAIN}}
Our second result establishes that {\sc HOP-TERRAIN} \cite{SLR02} achieves an arbitrarily small error 
for a radio range $R=C(\log n/n)^{1/d}$ with a large enough constant $C$, 
when we have only the connectivity information 
as in the case of the {\em connectivity-based model}. 
The same bound holds immediately for the {\em range-based model}, 
when we have an approximate measurements for the distances, and 
the same algorithm can be applied without any modification. 
to compute better estimates for the actual distances between the unknown nodes and the anchors, the extra information can be readily incorporated into the algorithm. We define 
\begin{eqnarray}
R_{\text{HOP}} \equiv 12\left(\frac{12 \log n}{\alpha(n-2)}\right)^{\frac{1}{d}}  \;. \label{eq:rHOP}
\end{eqnarray}

\begin{thm}
Assume $n$ sensors and $m$ anchors are distributed 
uniformly at random in the $[0,1]^d$ hypercube 
for a bounded dimension $d\in \{2,3\}$.
For a given radio range $R>(1/\alpha)^{1/d}R_{\text{HOP}}$, detection probability $p$ defined in \eqref{eq:detection}, and the number of anchors $m=\Omega(\log n)$, 
the following is true with  probability  at least $1-1/n^4$. 
For all unknown nodes $i\in V_u$, 
the Euclidean distance between the estimate $\hx_i$ given by 
{\sc HOP-TERRAIN} and the correct position $x_i$ is bounded by
\begin{eqnarray}
	\|x_i-\hx_i\| \leq \frac{R_{\text{HOP}}}{R}\, + 24R\;.
\end{eqnarray}
\label{thm:main2}
\end{thm}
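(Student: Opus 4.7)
The plan is to exploit the least-squares structure of the triangulation step. Since the anchors are known exactly, the true sensor position $x_i$ satisfies $Ax_i = b_0^{(i)}$, while the algorithm returns $\hx_i = (A^TA)^{-1}A^Tb^{(i)}$. Subtracting gives the exact error representation
\begin{equation*}
\hx_i - x_i \;=\; (A^TA)^{-1}A^T\bigl(b^{(i)} - b_0^{(i)}\bigr),
\end{equation*}
which, by sub-multiplicativity of the spectral norm, yields
\begin{equation*}
\|\hx_i - x_i\| \;\leq\; \frac{1}{\sigma_{\min}(A)}\,\bigl\|b^{(i)} - b_0^{(i)}\bigr\|_2.
\end{equation*}
The analysis then cleanly splits into (a) a lower bound on $\sigma_{\min}(A)$, which depends only on the random anchor geometry, and (b) an upper bound on the perturbation $\|b^{(i)} - b_0^{(i)}\|_2$, which is driven entirely by shortest-path errors.

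For the perturbation, the $a$-th coordinate of $b^{(i)} - b_0^{(i)}$ equals $(\hd_{i,a+1}^2 - d_{i,a+1}^2) - (\hd_{i,a}^2 - d_{i,a}^2)$. Applying the identity $|u^2 - v^2| = |u-v|(u+v)$ together with the fact that all Euclidean distances in $[0,1]^d$ are at most $\sqrt d$ (and, by the shortest-path lemma, $\hd_{i,a}$ is of the same order), each coordinate is bounded by $4\sqrt d\cdot\max_{i,a}|\hd_{i,a} - d_{i,a}|$, whence $\|b^{(i)} - b_0^{(i)}\|_2 \leq 4\sqrt{d(m-1)}\cdot\max_{i,a}|\hd_{i,a} - d_{i,a}|$. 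The essential technical input at this point is the same shortest-path lemma that drives Theorem~\ref{thm:main1}, specialized to sensor-to-anchor paths: with probability at least $1 - O(n^{-5})$, every hop-count estimate in the range-free model obeys $d_{i,a} \leq \hd_{i,a} \leq d_{i,a}\bigl(1 + C_1\RHOP/R\bigr) + C_2 R$ for absolute constants $C_1,C_2$. A union bound over the $nm$ anchor-sensor pairs costs only a logarithmic factor and produces the uniform-in-$i$ control $\max_{i,a}|\hd_{i,a} - d_{i,a}| = O(\RHOP/R + R)$.

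For the conditioning of $A$, observe that its rows are $2(x_a - x_{a+1})^T$, so $A^TA = 4\sum_{a=1}^{m-1}(x_a - x_{a+1})(x_a - x_{a+1})^T$ is a sum of iid positive-semidefinite rank-one matrices whose common covariance is $(1/6)I_d$ when anchors are uniform on $[0,1]^d$, giving $\E[A^TA] = \tfrac{2(m-1)}{3}I_d$. A matrix Chernoff (or matrix Bernstein) inequality then yields $\lambda_{\min}(A^TA) \geq c\,m$, and therefore $1/\sigma_{\min}(A) \leq 1/\sqrt{cm}$, with probability at least $1 - O(n^{-5})$, provided $m = \Omega(\log n)$. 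Plugging this into the error decomposition, the factor $\sqrt{m-1}/\sqrt{m}$ cancels the $m$-dependence entirely, and tracking the numerical constants (together with the $(\hd+d)\leq 4\sqrt d$ factor from the squared-distance identity) produces exactly the advertised bound $\RHOP/R + 24R$.

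The main obstacle I anticipate is making the concentration step for $\sigma_{\min}(A)$ rigorous with a fast enough tail: the final union bound runs over the $n$ unknown nodes, and the target guarantee is $1 - n^{-4}$, so both the shortest-path lemma and the anchor-conditioning estimate must each fail with probability at most $O(n^{-5})$. This is precisely where the hypothesis $m = \Omega(\log n)$ is used. Everything else reduces to careful bookkeeping on top of the shortest-path lemma, in direct parallel with the proof of Theorem~\ref{thm:main1}.
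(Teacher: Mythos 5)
Your decomposition is exactly the paper's: write $\hx_i-x_i=(A^TA)^{-1}A^T\bigl(b^{(i)}-b_0^{(i)}\bigr)$, bound the perturbation $\|b^{(i)}-b_0^{(i)}\|$ coordinate-wise by the shortest-path (hop-count) lemma so that it is $O\bigl(\sqrt{m}\,(\RHOP/R+R)\bigr)$, and bound $\|(A^TA)^{-1}A^T\|_2=1/\sigma_{\min}(A)$ by showing $A^TA$ concentrates around $\tfrac{2(m-1)}{3}\id_{d\times d}$, using $m=\Omega(\log n)$ to make this failure probability negligible; this is precisely how the paper combines Corollary~\ref{cor:hops} with Lemma~\ref{lem:spectral-random}. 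The one place where your argument differs, and where it is not quite right as stated, is the conditioning step: the rank-one summands $(x_a-x_{a+1})(x_a-x_{a+1})^T$ are \emph{not} i.i.d., since consecutive terms share the anchor $x_{a+1}$, so a matrix Chernoff/Bernstein bound cannot be invoked verbatim. The dependence is only between neighbouring indices, so the standard fix is to split the sum into even-index and odd-index parts (each of which is a sum of independent terms) and control each separately --- this is exactly what the paper does, using scalar Hoeffding on the entries of $A^TA$ together with the Gershgorin circle theorem to get $\lambda_{\min}(A^TA)\geq (m-1)/3$, whence $\|(A^TA)^{-1}A^T\|_2\leq\sqrt{3/(m-1)}$. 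With that repair (or a matrix concentration inequality for $1$-dependent sums), your argument goes through and the $\sqrt{m}$ factors cancel as you say, giving the stated bound up to the same constant bookkeeping as in the paper.
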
 
The proof is provided in Section \ref{sec:mainproof}.
As described in the previous section, 
we are interested in the regime where 
$R=C(\log n/n)^{1/d}$ for some constant $C$.
Given a small positive constant $\delta$, 
this implies that {\sc HOP-TERRAIN} is guaranteed to produce estimated positions
that satisfy $\|x_i-\hx_i\| \leq \delta$ for all $i\in V_u$
with a large enough constant $\alpha$ and large enough $n$.

When the number of anchors is bounded and 
the positions of the anchors are chosen randomly, 
it is possible that, in the triangulation step,
we get an ill-conditioned matrix $A^TA$, 
resulting in an large estimation error.
This happens, for instance, if three anchors fall close to a line.
However, as mentioned in the introduction, 
it is reasonable to assume that, for the anchors, 
the system designer has some control over where they are placed.
In that case, the next remark shows that 
when the positions of anchors are properly chosen, 
only $d+1$ anchors suffice 
to get a similar bound on the performance.
Note that this is the minimum number of anchors necessary for triangulation.
For simplicity we assume that one anchor is placed at the origin 
and $d$ anchors are placed at positions corresponding to 
$d$-dimensional unit vectors. The position of the $d+1$ anchors are 
\{$[0,\ldots,0]$, $[1,0,\ldots,0]$, $[0,1,0,\ldots,0]$, $[0,\ldots,0,1]$ \}. (see figure \ref{fig:RGG_det})

\begin{figure}[t]
\begin{center}
\includegraphics[width=9cm]{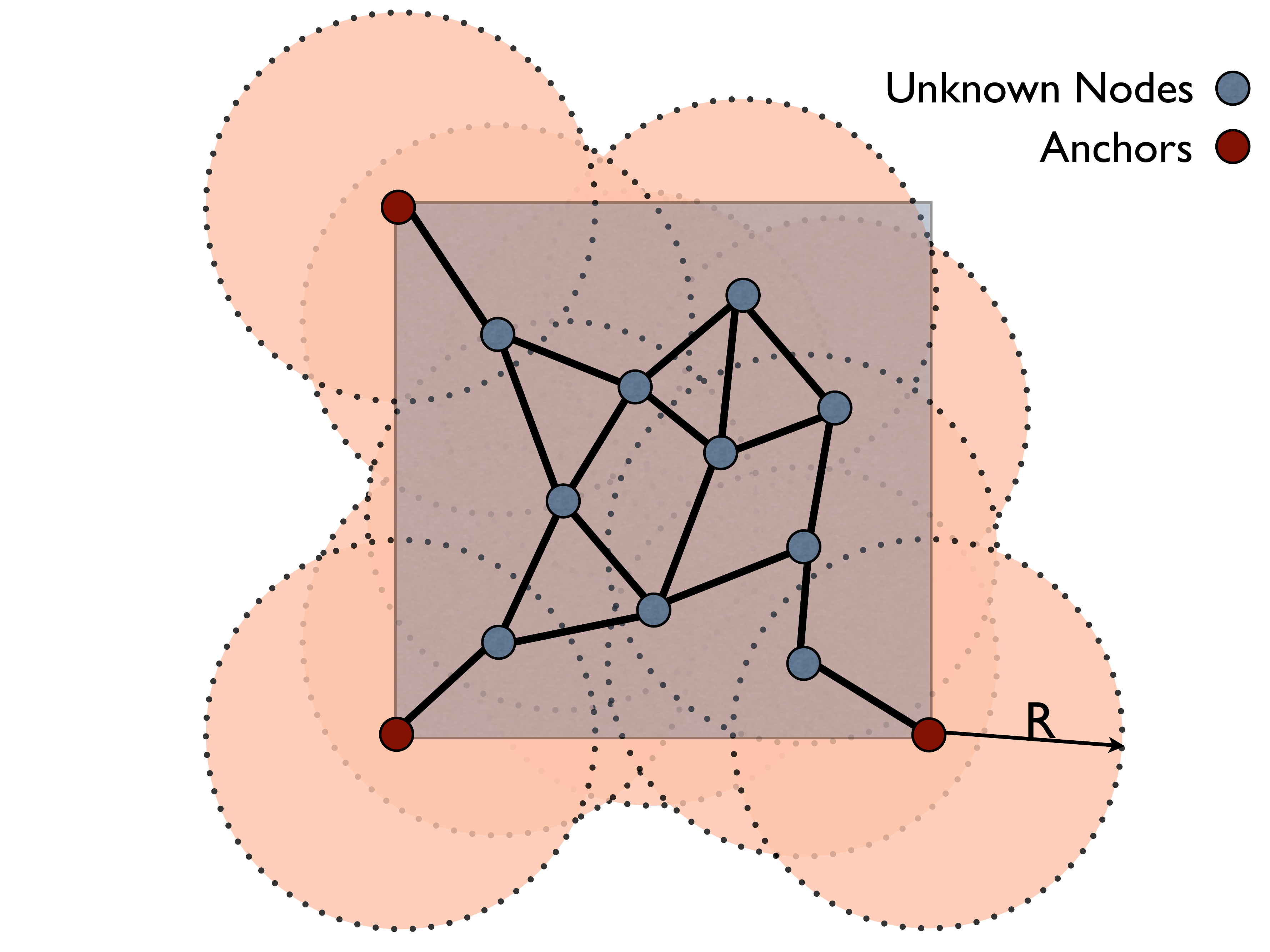}
\caption{ Three anchors in fixed positions ($[0,0], [1,0], [0,1]$) for a two-dimensional sensor localization.} \label{fig:RGG_det}
\end{center}
\end{figure}
\begin{thm}
\label{thm:main3}
Assume that $n$ sensors are distributed 
uniformly at random in the $[0,1]^d$ hypercube 
for a bounded dimension $d=\{2,3\}$. Also, assume that
there are $d+1$ anchors, one of which is placed at the origin, 
and the position vectors of the $d$ remaining anchors are the $d$-dimensional unit vectors.
For a given radio range $R>(1/\alpha)^{1/d}R_{\text{HOP}}$  and detection probability $p$ defined in \eqref{eq:detection}
the following is true with 	probability at least $1-1/n^4$. 
For all unknown nodes $i\in V_u$, 
the Euclidean distance between the estimate $\hx_i$ given by 
\hopterrain ~and the correct position $x_i$ is bounded by
\begin{eqnarray}
	\|x_i-\hx_i\| \leq 2 \frac{R_{\text{HOP}}}{R}\, + 48 R \;. \label{eq:main1}
\end{eqnarray}
\end{thm}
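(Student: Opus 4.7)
The plan is to mirror the proof of Theorem~\ref{thm:main2} but replace the probabilistic argument that controlled the conditioning of the anchor matrix $A$ with a direct calculation, since here $A$ is entirely deterministic. Recall that for any unknown node $i$, the triangulation step returns $\hx_i=(A^TA)^{-1}A^Tb^{(i)}$, and by construction the true position satisfies $Ax_i=b_0^{(i)}$ exactly, where $b_0^{(i)}$ is $b^{(i)}$ with each $\hd_{i,a}$ replaced by $d_{i,a}$. Subtracting, one obtains $\hx_i-x_i=(A^TA)^{-1}A^T(b^{(i)}-b_0^{(i)})$, so
\begin{equation*}
\|\hx_i-x_i\|\;\le\;\|(A^TA)^{-1}A^T\|_2\,\bigl\|b^{(i)}-b_0^{(i)}\bigr\|_2.
\end{equation*}
This is exactly the decomposition used in Theorem~\ref{thm:main2}; the two factors are treated independently.

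For the first factor, with $m=d+1$ and the anchors placed at $0,e_1,\ldots,e_d$, the matrix $A$ is the $d\times d$ matrix whose $k$-th row is $2(x_k-x_{k+1})^T$, namely the first row is $-2e_1^T$ and for $k=2,\ldots,d$ the $k$-th row is $2(e_{k-1}-e_k)^T$. This matrix is lower triangular with all diagonal entries equal to $-2$, so $\det A=(-2)^d\neq 0$, and $A^{-1}$ can be written in closed form as $-\tfrac12$ times a lower-triangular matrix of ones. In particular $\|A^{-1}\|_2$ is a fixed constant $c_d$ depending only on $d\in\{2,3\}$, and I would verify numerically that $c_d\le 2$ so that $\|(A^TA)^{-1}A^T\|_2=\|A^{-1}\|_2\le 2$, which accounts for the factor-of-two blowup in the stated bound compared to Theorem~\ref{thm:main2}.

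For the second factor, the $k$-th entry of $b^{(i)}-b_0^{(i)}$ equals $(\hd_{i,k+1}^2-d_{i,k+1}^2)-(\hd_{i,k}^2-d_{i,k}^2)$. Writing $\hd_{i,a}^2-d_{i,a}^2=(\hd_{i,a}-d_{i,a})(\hd_{i,a}+d_{i,a})$ and using that $d_{i,a}\le\sqrt d$ together with the shortest-path control lemma (the same ingredient that powers Theorem~\ref{thm:main1} and Theorem~\ref{thm:main2}, showing that $\hd_{i,a}-d_{i,a}\le R_{\text{HOP}}/R + O(R)$ with high probability, uniformly over all pairs), every entry of $b^{(i)}-b_0^{(i)}$ is at most a constant times $R_{\text{HOP}}/R+O(R)$. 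Since $b^{(i)}-b_0^{(i)}$ has only $d$ entries and $d=O(1)$, this yields $\|b^{(i)}-b_0^{(i)}\|_2\le C\bigl(R_{\text{HOP}}/R+O(R)\bigr)$ for a numerical constant $C$, uniformly over all $i\in V_u$ on the same high-probability event used in Theorem~\ref{thm:main2}.

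Combining the two bounds and tracking the constants gives $\|x_i-\hx_i\|\le 2R_{\text{HOP}}/R+48R$ as claimed, on an event of probability at least $1-1/n^4$. The only genuinely new content compared to Theorem~\ref{thm:main2} is the explicit conditioning estimate for the deterministic $A$; the main obstacle is essentially bookkeeping, namely pinning down the constant $c_d=\|A^{-1}\|_2$ for $d=2,3$ tightly enough that the numerical constants $2$ and $48$ in the statement come out correctly, and verifying that no additional $\log n$ or $m$-dependent terms arise now that the concentration argument over $m=\Omega(\log n)$ anchors is replaced by a single deterministic matrix.
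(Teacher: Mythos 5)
Your proposal is essentially the paper's own proof: the same decomposition $\|x_i-\hx_i\|\le\|(A^TA)^{-1}A^T\|_2\,\|b_0^{(i)}-b^{(i)}\|$, the same reuse of the shortest-path bound (Corollary~\ref{cor:hops}, via \eqref{eq:term1}) for the second factor, and a deterministic conditioning bound for the bidiagonal anchor matrix $A$ for the first factor. The only cosmetic difference is that the paper pins down $\|(A^TA)^{-1}A^T\|_2=1/\sigma_{\min}(A)\le d/2$ by applying Gershgorin's theorem to $A^{-1}(A^{-1})^T$ (Lemma~\ref{lem:spectraldeterministic}), whereas you propose to evaluate $\|A^{-1}\|_2$ directly for $d\in\{2,3\}$, which is an equally valid way to get the same constant.
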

The proof is provided in Section \ref{sec:mainproof}.
\begin{remark}
There is nothing particular about the position of the anchors 
in unit vectors. Any $d+1$ anchors in general position will give similar bounds.
The only difference is that the constant term in the definition of $R_{\text{HOP}}$ 
changes with the anchor positions. 
\end{remark}
\begin{figure}[t]
\begin{center}
\includegraphics[width=9cm]{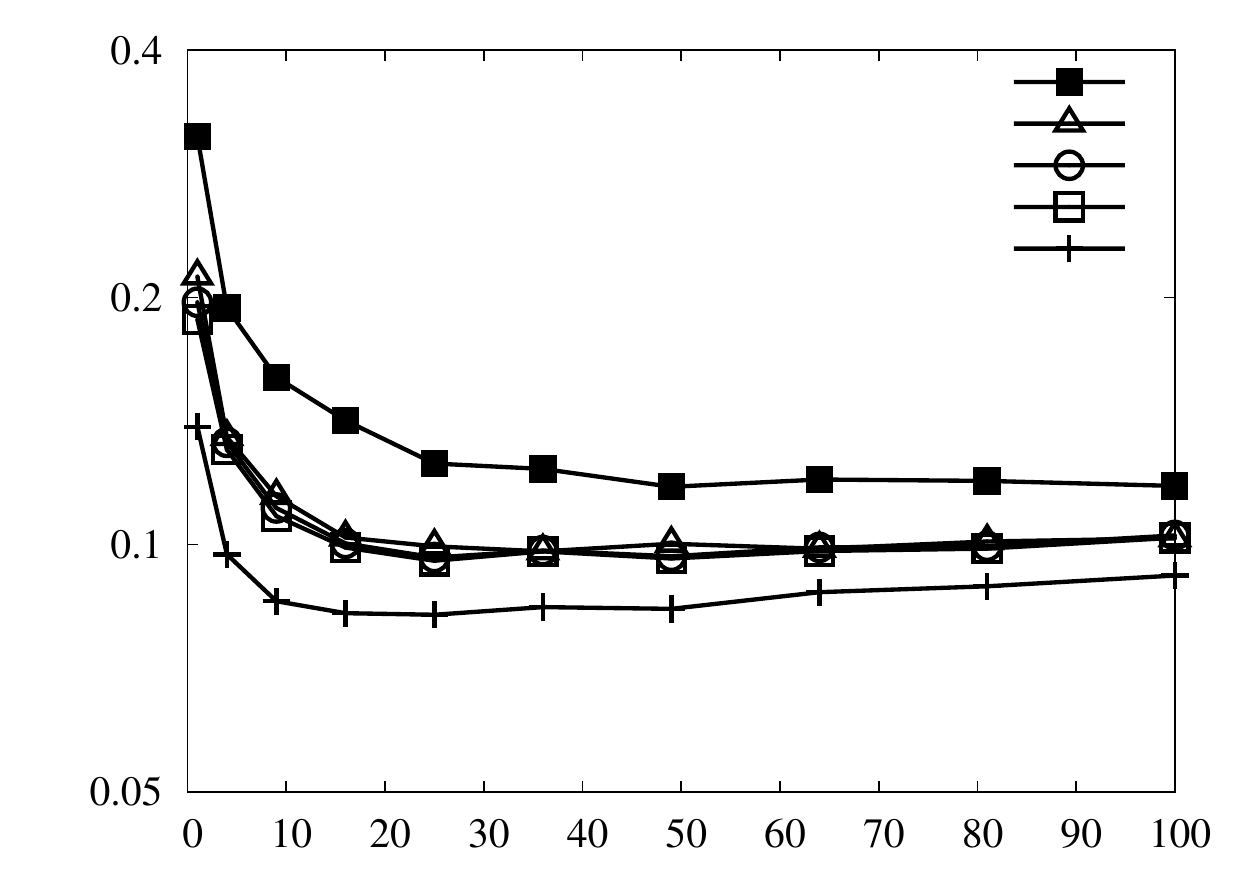}
\put(-120,-5){$C$}
\put(-250,70){\begin{sideways}\scriptsize{Average Error}\end{sideways}}
\put(-115,159){\footnotesize $\alpha=0.25$, $\beta=1$}
\put(-110,150){\footnotesize $\alpha=0.5$, $\beta=0$}
\put(-110,142){\footnotesize $\alpha=0.5$, $\beta=1$}
\put(-110,134){\footnotesize $\alpha=0.5$, $\beta=2$}
\put(-103,126){\footnotesize $\alpha=1$, $\beta=1$}
\caption{Average distance between the correct position $\{x_i\}$ and estimation $\{\hx_i\}$ 
	using {\sc Hop-TERRAIN} as a function of $C$, 
	for $R=C\sqrt{\log n/n}$ with $n=5,000$ sensors in the unit square 
        under connectivity-based model. 
	Two nodes at distance $r$ detect each other with probability $p(r)=\min\{1,\alpha(R/r)^\beta\}$.} 
\label{fig:decentral_connectivity}
\end{center}
\end{figure}
\begin{figure}[t]
\begin{center}
\includegraphics[width=9cm]{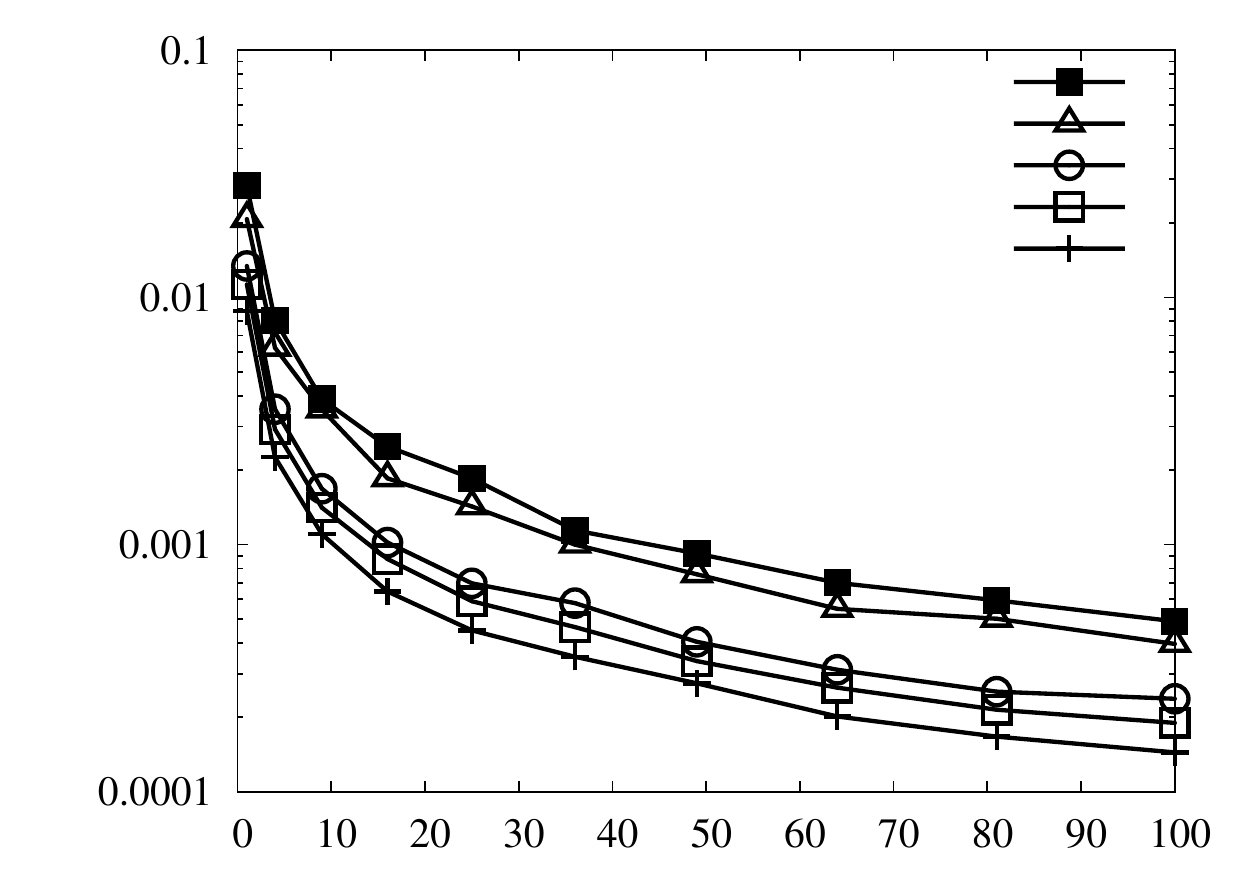}
\put(-120,-5){$C$}
\put(-250,70){\begin{sideways}\scriptsize{Average Error}\end{sideways}}
\put(-115,159){\footnotesize $\alpha=0.25$, $\beta=1$}
\put(-110,150){\footnotesize $\alpha=0.5$, $\beta=0$}
\put(-110,142){\footnotesize $\alpha=0.5$, $\beta=1$}
\put(-110,134){\footnotesize $\alpha=0.5$, $\beta=2$}
\put(-103,126){\footnotesize $\alpha=1$, $\beta=1$}
\caption{Average error under range-based model. } 
\label{fig:decentral_distance}
\end{center}
\end{figure}

\begin{coro}[range-based model]
Under the hypothesis of Theorem~\ref{thm:main2} and in the range-based model, with high probability
\begin{eqnarray*}
	\|x_i-\hx_i\| \leq \frac{R_{\text{HOP}}}{R}\, + 24R \;.
\end{eqnarray*}
The similar result holds true when sensors are places deterministically, specifically, under the hypothesis of Theorem~\ref{thm:main3}, with high probability,
\begin{eqnarray*}
	\|x_i-\hx_i\| \leq 2 \frac{R_{\text{HOP}}}{R}\, + 48 R \;.
\end{eqnarray*}

\label{coro:main3}
\end{coro}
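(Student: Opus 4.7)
The plan is to reduce the range-based case to the already-established bounds of Theorems~\ref{thm:main2} and~\ref{thm:main3} by exploiting a simple monotonicity property of shortest paths. Write $\hd_{i,a}^{\rm conn}$ for the shortest-path estimate in the connectivity-based model (every edge assigned length $R$) and $\hd_{i,a}^{\rm range}$ for the estimate in the range-based model (edge $(k,\ell)$ assigned its true Euclidean length $d_{k,\ell}\leq R$). Summing edge-by-edge along any path in $G$ gives the sandwich
$$d_{i,a}\;\leq\;\hd_{i,a}^{\rm range}\;\leq\;\hd_{i,a}^{\rm conn}\;,$$
where the left inequality is the triangle inequality for the embedded positions and the right follows because we only shrink each edge length. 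In particular, the nonnegative error $\hd_{i,a}^{\rm range}-d_{i,a}$ is pointwise dominated by $\hd_{i,a}^{\rm conn}-d_{i,a}$.

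First, I would locate in the proofs of Theorems~\ref{thm:main2} and~\ref{thm:main3} the single place where the connectivity assumption enters: an upper bound on the shortest-path deviation of the form $0\leq \hd_{i,a}-d_{i,a}\leq O(\RHOP)$, holding uniformly over $i\in V_u$ and $a\in V_a$ with probability at least $1-1/n^4$ (this is the content of the shortest-path lemma referenced in Section~\ref{sec:mainproof}). By the sandwich above, exactly the same bound is inherited verbatim when $\hd^{\rm conn}$ is replaced by $\hd^{\rm range}$, under the same high-probability event and with no modification to the constants. No additional concentration argument is needed for the range-based case.

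Once this per-anchor bound on $\hd_{i,a}^{\rm range}-d_{i,a}$ is in hand, the triangulation step is untouched. The least-squares estimate $\hx_i=(A^TA)^{-1}A^T b^{(i)}$ from~\eqref{eq:lateration} depends on the shortest-path estimates only through the entries $\hd_{i,a}^2-d_{i,a}^2$ of $b^{(i)}-b_0^{(i)}$, which are controlled by the same uniform bound. Plugging into the spectral/perturbation analysis already carried out for Theorem~\ref{thm:main2} (random anchors, $m=\Omega(\log n)$) and for Theorem~\ref{thm:main3} ($d+1$ anchors at unit-vector positions, giving a well-conditioned $A$) yields the two claimed inequalities $\|x_i-\hx_i\|\leq \RHOP/R+24R$ and $\|x_i-\hx_i\|\leq 2\RHOP/R+48R$ respectively.

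The only mild obstacle is bookkeeping: one must verify that throughout the proofs of Theorems~\ref{thm:main2} and~\ref{thm:main3} every use of $\hd_{i,a}$ is by an upper bound and never by a specific identity such as ``integer number of hops times $R$.'' Since the analysis only exploits the one-sided deviation $\hd_{i,a}-d_{i,a}\in[0,O(\RHOP)]$, together with the trivial lower bound $\hd_{i,a}\geq d_{i,a}$ which the range-based estimate also satisfies, the corollary follows with no further probabilistic or geometric input.
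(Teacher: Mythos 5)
Your proposal is correct and matches the paper's intended argument: the paper treats this corollary as immediate (as it does for the MDS-MAP corollary, noting that each measured edge satisfies $d_{i,j}\leq R$), i.e., the range-based shortest path is sandwiched between the true distance and the hop-count-times-$R$ estimate, so the deviation bound and hence the triangulation analysis of Theorems~\ref{thm:main2} and~\ref{thm:main3} carry over verbatim. Your explicit check that the proofs only use the one-sided bounds $0\leq \hd_{i,a}-d_{i,a}$ and the upper deviation bound is exactly the bookkeeping the paper leaves implicit.
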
 
As it was the case for {\sc MDS-MAP}, when $R=C(\log n / n)^{1/d}$ for some positive parameter $C$,  
the error bound in (\ref{eq:main1}) is 
$$\|x_i-\hx_i\| \leq \frac{C_1}{C\alpha^{1/d}} + C_2C\left(\frac{\log n}{n}\right)^{1/d}$$ 
for some numerical constants $C_1$ and $C_2$.
The first term is inversely proportional to $C$ and $\alpha^{1/d}$ and is independent of $n$,
whereas the second term is linearly dependent in $C$ and vanishes as $n$ grows large.
This is illustrated in Figure \ref{fig:decentral_connectivity}, 
which shows numerical simulations with $n=5,000$ sensors randomly distributed 
in the $2$-dimensional unit square. 
We compute the root mean squared error: $\{(1/n)\sum_{i=1}^{n}\|x_i-\hx_i\|^2\}^{1/2}$.

Figure~\ref{fig:hopterrain1} shows a network consisting of $n=200$ nodes place randomly in the unit circle. The three anchors in fixed positions are displayed by solid blue circles. In this experiment the distance measurements are from the range-based model and the radio range is $\sqrt{0.8 \log n/n}$. Figure~\ref{fig:hopterrain2}  shows the final estimated positions using {\sc HOP-TERRAIN}. The circles represent the correct positions, and the solid lines represent the differences between the estimates and the correct positions. The average error in this example is $0.075$. 

\begin{figure}[h]
\begin{center}
\includegraphics[width=9cm]{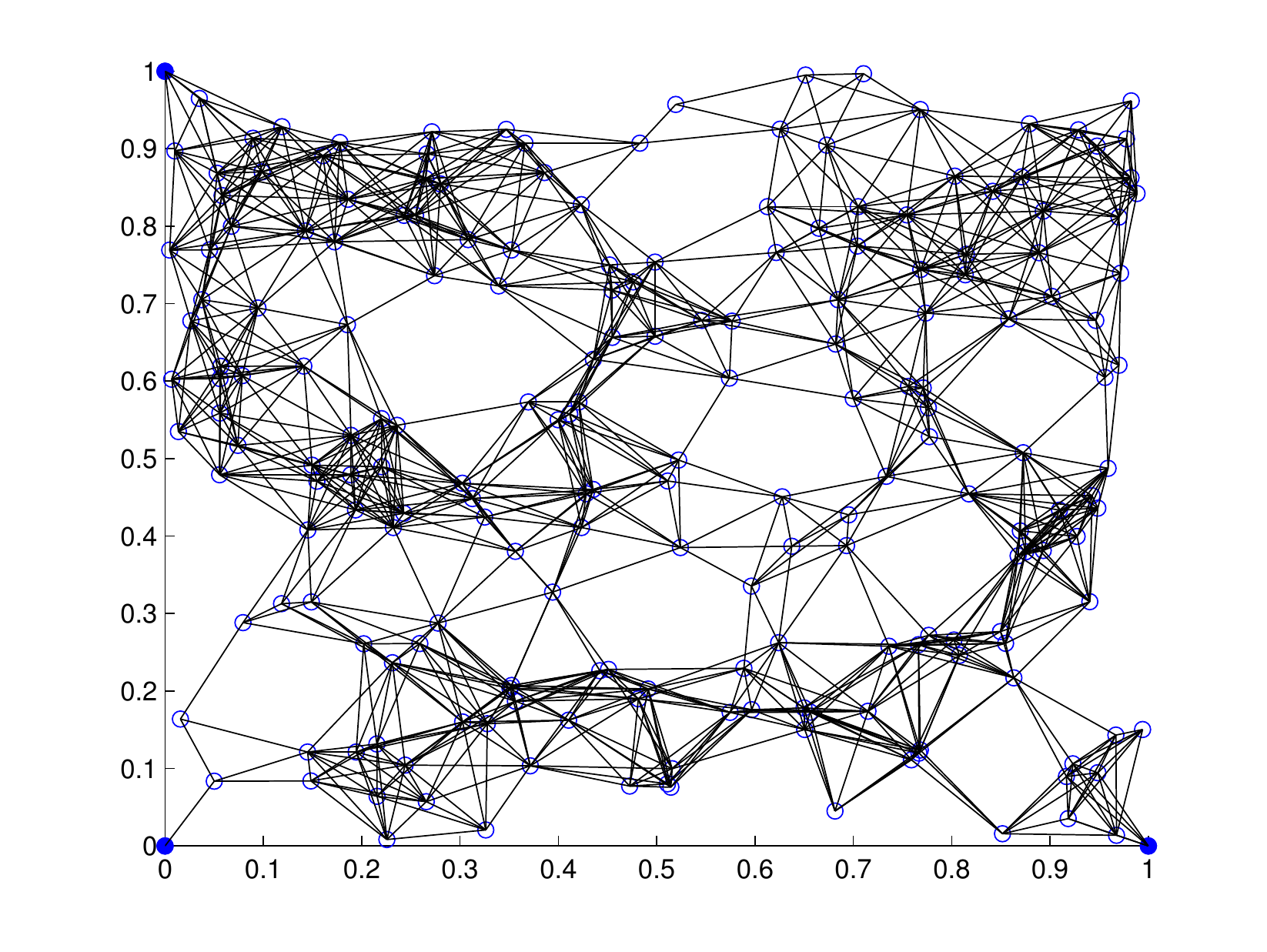}
\caption{ $200$ nodes randomly placed in the unit square and $3$ anchors in fixed positions. The radio range is $R=\sqrt{0.8*\log n/n}$.} \label{fig:hopterrain1}
\end{center}
\end{figure}
\begin{figure}[h]
\begin{center}
\includegraphics[width=9cm]{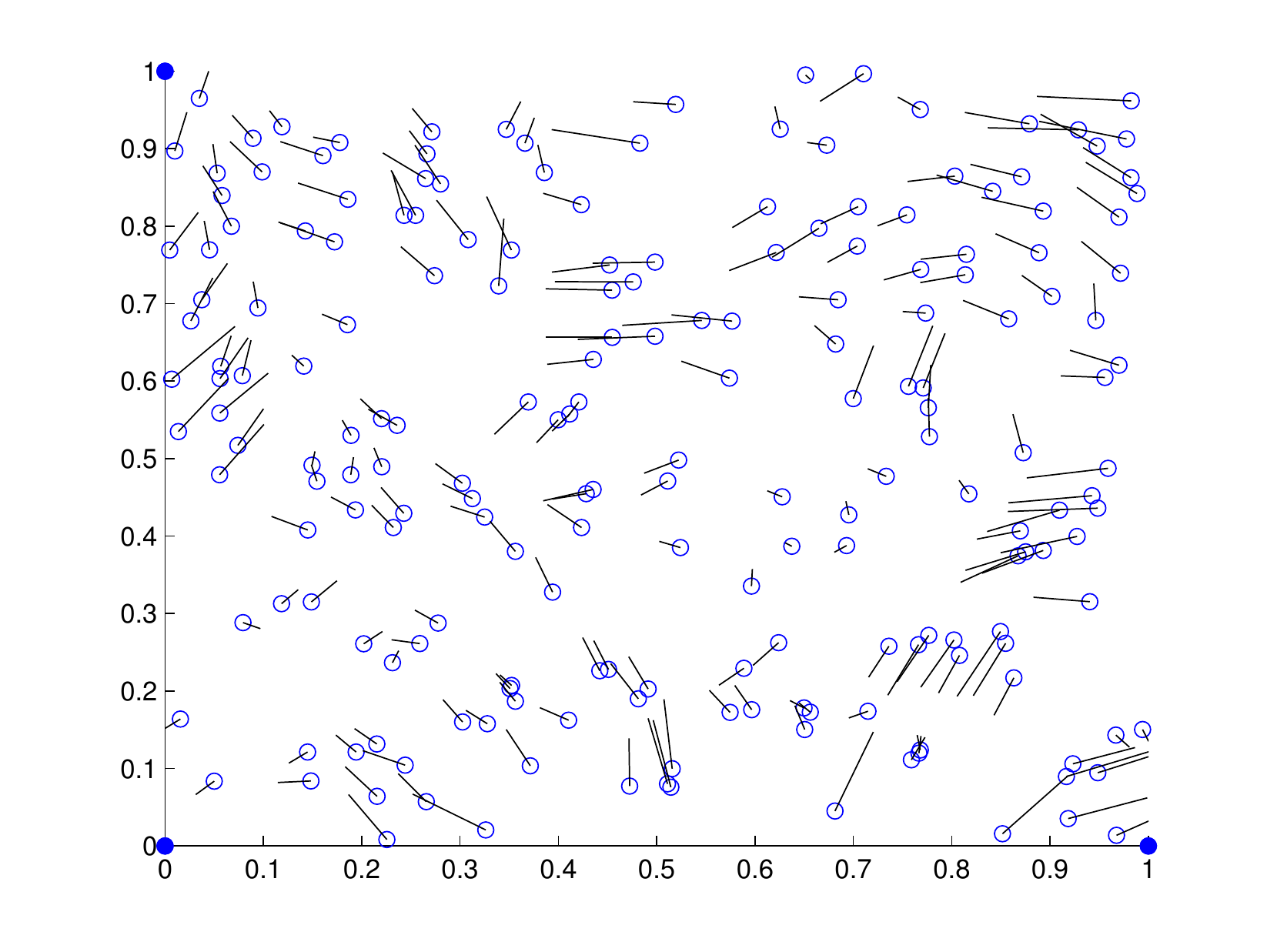}
\caption{Location estimation using {\sc Hop-TERRAIN}.} \label{fig:hopterrain2}
\end{center}
\end{figure}

\section{Proof of the Main Theorems} 
\label{sec:mainproof}

\subsection{Proof of Theorem \ref{thm:main1}}

We start by bounding the distance $\d(X,\hX)$, as defined in Eq.~(\ref{eq:d1}),
in terms of $D$ and $\hD$. 
Let $\|A\|_F=(\sum_{i,j}A_{ij}^2)^{1/2}$ denote the Frobenius norm of a matrix and 
$\|A\|_2=\max_{\|x\|=1} \|Ax\|_2$ denote the spectral norm. Note that for a rank $r$ matrix $A$ we have $$\|A\|_2\leq\|A\|_F\leq r\|A\|_2.$$
Since $\Ln (XX^T-\hX\hX^T)\Ln$ has rank at most $2d$, 
it follows that
\begin{align}
\|\Ln(XX^T-\hX\hX^T)\Ln\|_F  \leq \sqrt{2d}\|\Ln(XX^T-\hX\hX^T)\Ln\|_2 \;. \label{eq:distance}
\end{align}
%
To bound the spectral norm, let $M=-(1/2)\Ln\hD\Ln$. Then, 
\begin{eqnarray}
 \|\Ln(XX^T-\hX\hX^T)\Ln\|_2 &\leq& \|\Ln XX^T\Ln-M\|_2 + \|M-\hX\hX^T \|_2 \nonumber\\
 &\leq& (1/2)\|\Ln(-D+\hD)\Ln\|_2 + (1/2)\|\Ln(-\hD+D)\Ln\|_2 \nonumber\\
 &\leq& \|\hD-D\|_2 \;, \label{eq:spectralnorm}
\end{eqnarray}
where in the first inequality we used the triangular inequality 
and the fact that $\hX=\Ln\hX$.
In the second inequality we used \eqref{eq:LDL} and 
the fact that 
\begin{align*} 
 \|M-\hX\hX^T\|_2 =\min_{A:{\rm rank}(A)\leq d} \|M-A\|_2\;,
\end{align*} 
which follows from the definition of $\hX$. 
From the definition of $\hX=\MDS_d(\hD)$, 
we know that $\hX\hX^T$ is the best rank-$d$ approximation to $M$.
Hence, $\hX\hX^T$ minimizes $\|M-A\|_2$ for any rank-$d$ matrix $A$.
Since the rank of $-(1/2)\Ln D\Ln$ is $d$, this implies  $$\|M-\hX\hX^T\|_2 \leq \|M+(1/2)\Ln D\Ln\|_2.$$
The inequality in (\ref{eq:spectralnorm}) follows trivially 
from the observation that $\|\Ln\|_2=1$.

Next, to bound $\|\hD-D\|_2$, we use the following key result on the number of hops in graph $G$. 
The main idea is that, for sensors with uniformly random positions, 
the number of hops scaled by the radio range $R$ provide 
estimates close to the correct distance. 
We define 
\begin{eqnarray}
	\tR \equiv 2 \left(\frac{12\log n}{\alpha (n-2)}\right)^{\frac{1}{d}} \;. \label{eq:tR}
\end{eqnarray}
\begin{lemma} 
	{\rm (Bound on the distance estimation)} 
	Under the hypotheses of Theorem \ref{thm:main1}, 
	with probability larger than $1-1/n^4$, 
	for any pair of nodes $i\in V$ and $j\in V$, 
	the number of hops between nodes $i$ and $j$ is bounded by
	\begin{eqnarray*}
		h_{i,j} \leq \Big(1+\frac{\tR}{R}\Big)\frac{d_{i,j}}{R} + 2 \; , 
	\end{eqnarray*}
	for $R>\max\{7\tR,(1/\alpha)^{1/d}\tR\}$.
	\label{lem:hops}
\end{lemma}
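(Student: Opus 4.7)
The plan is to exhibit, for each pair $(i,j)$, an explicit path in the random geometric graph from $i$ to $j$ whose length is at most $(1+\tR/R)d_{i,j}/R+2$. I will build such a path greedily along waypoints evenly spaced on the segment $[x_i,x_j]$.

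\textbf{Waypoint construction.} I place $K+1$ equispaced points $p_0=x_i,\ldots,p_K=x_j$ on the segment, where $K$ is the smallest integer with $s:=d_{i,j}/K\leq R-\tR/2$. Using the elementary inequality $1/(1-x)\leq 1+2x$ for $x\in[0,1/2]$, applied to $x=\tR/(2R)$ (valid since $R>7\tR$), I obtain $K\leq d_{i,j}/(R-\tR/2)+1\leq (1+\tR/R)d_{i,j}/R+1$. Around each interior waypoint $p_k$, $1\leq k\leq K-1$, I take the ball $B_k:=\{z:\|z-p_k\|\leq\tR/4\}$; by the triangle inequality, any two points drawn from $B_{k-1}$ and $B_k$ lie at Euclidean distance at most $s+2\cdot(\tR/4)\leq R$, so every such pair is within radio range.

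\textbf{Each waypoint ball is rich in nodes.} Conditional on $x_i,x_j$, the remaining $n-2$ positions are i.i.d.\ uniform on $[0,1]^d$. Since $p_k\in[0,1]^d$, the intersection $B_k\cap[0,1]^d$ has volume at least $c_d(\tR/4)^d$ for a dimension-dependent constant $c_d$ (worst case is a cube corner, which retains a $2^{-d}$ fraction of the ball). Substituting $\tR^d=2^d\cdot 12\log n/(\alpha(n-2))$ yields $\E[N_k]\geq C_d(\log n)/\alpha$ for some $C_d>0$, where $N_k:=|V\cap B_k|$. A multiplicative Chernoff bound then gives $\prob(N_k\leq \E[N_k]/2)\leq n^{-C_d/(8\alpha)}$, and by choosing the constant $12$ in the definition of $\tR$ sufficiently large (the hypothesis $R>(1/\alpha)^{1/d}\tR$ is what keeps $\tR$ small enough for boundary losses to remain benign), this probability is driven below $n^{-7}$.

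\textbf{Greedy path and success probability.} Set $y_0:=x_i$ and, for $k=1,\ldots,K-1$, pick $y_k\in V\cap B_k$ with $(y_{k-1},y_k)\in E$; finally connect $y_{K-1}$ to $y_K:=x_j$. For $z\leq R$ the detection probability satisfies $p(z)=\min(1,\alpha(z/R)^{-\beta})\geq\alpha$ since $\beta\geq 0$, so each potential edge from $y_{k-1}$ to a node in $B_k$ is present independently with probability $\geq\alpha$. Conditional on $N_k\geq\E[N_k]/2\geq (C_d/2)(\log n)/\alpha$, the probability that none of those potential edges materialises is at most $(1-\alpha)^{N_k}\leq \exp(-\alpha N_k)\leq n^{-C_d/2}$. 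If every greedy step succeeds, the constructed path has exactly $K\leq(1+\tR/R)d_{i,j}/R+1$ hops, and one further unit absorbs the rounding/endpoint hop to $x_j$, giving the lemma's $+2$.

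\textbf{Union bound and main obstacle.} There are $\binom{n+m}{2}=O(n^2)$ pairs, and for each pair at most $O(K)=O(1/R)=O(n^{1/d})$ events to control (one per waypoint, for both the Chernoff event and the greedy-step event). Summing the $n^{-\Omega(1)}$ failure probabilities above yields a total below $1/n^4$ once the multiplicative constant $12$ in $\tR$ is taken large enough. The main technical obstacle is \emph{constant tracking}: ball radius $r$, waypoint spacing $s$, and the constant in $\tR$ are coupled through three simultaneous demands, namely $s+2r\leq R$, a $n^{-\Omega(1)}$ lower bound on $N_k$ via Chernoff, and a $n^{-\Omega(1)}$ bound on the greedy failure $(1-\alpha)^{N_k}$. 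The two hypotheses $R>7\tR$ and $R>(1/\alpha)^{1/d}\tR$ are precisely what simultaneously decouple these constraints, the former controlling the ratio in the length bound and the latter guaranteeing that $\tR$ is small enough to deliver the required $(\log n)/\alpha$ lower bound on each $\E[N_k]$.
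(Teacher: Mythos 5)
Your overall strategy (relay nodes along the segment $[x_i,x_j]$, uniform density to populate small regions, union bound) is in the spirit of the paper's bin-covering argument, but as written the proof has two genuine gaps. First, the final hop is not justified: in this model two nodes within range $R$ are connected only with probability $p(d_{i,j})\geq\alpha$, which does not tend to $1$, so the edge from $y_{K-1}$ to the \emph{specific} node $j$ may simply be absent with constant probability. Your greedy step survives detection failures only because a ball contains many candidate relays; the endpoint $j$ offers no such redundancy, and "one further unit absorbs the rounding/endpoint hop" does not address it. The paper handles exactly this issue with the lens-shaped bins $B_{i,j}$ (a node detected by \emph{both} $i$ and $j$, giving a guaranteed $2$-hop connection for $d_{i,j}\leq R$), and that is where the hypothesis $R>(1/\alpha)^{1/d}\tR$ is actually used — not, as you claim, to lower bound $\E[N_k]$, which depends only on $\tR$ and $n$ and not on $R$ at all. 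A correct repair of your argument would require the last relay to be connected to both its predecessor and to $j$ (success probability $\alpha^2$ per candidate), which is precisely the paper's $B$-bin computation.

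Second, the quantitative accounting does not close with the $\tR$ fixed in \eqref{eq:tR}. With radius $\tR/4$ and the constant $12$, your own estimate gives $\E[N_k]\approx C_d\log n/\alpha$ with $C_d<1$ for $d=3$, so the Chernoff failure is about $n^{-C_d/(8\alpha)}$ and the greedy failure about $n^{-C_d/2}$ — exponents well below $1$, nowhere near the $n^{-7}$ you need for a union bound over $O(n^{2+1/d})$ (pair, waypoint) events (and strictly you must also union over the random identity of $y_{k-1}$, or argue the conditioning more carefully, since the greedy-step event depends on which node was chosen at the previous step). Your proposed fix, "choose the constant $12$ in the definition of $\tR$ sufficiently large," changes $\tR$ and hence $R_{\text{MDS}}$ and the statement of Theorem~\ref{thm:main1}, so it proves a weaker lemma than the one stated. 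The paper avoids this blow-up structurally: it defines one bin $A_{i,j}$ per \emph{ordered pair}, declares it occupied only if it contains a node already detected by $i$ (folding the detection probability into the occupancy estimate), and then propagates along the segment by induction on the distance, reusing the bins of the new pair $(l,j)$. This reduces the union bound to $n(n-1)$ events, each failing with probability about $n^{-6}$ under the stated constant $12$, which is what makes the advertised $1-1/n^4$ and the specific $\tR$ attainable.
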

The proof of this lemma is provided in Section \ref{sec:hops}.
The distance estimate from the first step of {\sc MDS-MAP} is $\hd_{i,j}=R h_{i,j}$. The following corollary gives a bound on the estimation error.  
\begin{coro}
 	Under the hypotheses of Lemma \ref{lem:hops}, 
 	\begin{eqnarray*}
   	\hd_{i,j}^2 - d_{i,j}^2 \leq \frac{30\tR}{14R}d_{i,j}^2 + 8R\;.
 	\end{eqnarray*}
	\label{cor:hops}
\end{coro}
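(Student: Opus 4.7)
My plan is to deduce the corollary by direct calculation from Lemma~\ref{lem:hops}, since the left-hand side $\hd_{i,j}^2$ is just $R^2 h_{i,j}^2$ and we already have a sharp bound on $h_{i,j}$ from the lemma. First I would rewrite the lemma's conclusion by multiplying through by $R$, obtaining
\begin{equation*}
\hd_{i,j} \;=\; R\,h_{i,j} \;\leq\; \Bigl(1+\tfrac{\tR}{R}\Bigr)\,d_{i,j} + 2R \;,
\end{equation*}
which is the natural ``multiplicative plus additive'' form for the shortest-path overestimate of the Euclidean distance.

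Next, I would square this inequality and subtract $d_{i,j}^2$ from both sides, giving
\begin{equation*}
\hd_{i,j}^2 - d_{i,j}^2 \;\leq\; \Bigl(\tfrac{2\tR}{R} + \tfrac{\tR^2}{R^2}\Bigr)\,d_{i,j}^2 \;+\; 4R\Bigl(1+\tfrac{\tR}{R}\Bigr)\,d_{i,j} \;+\; 4R^2 \;.
\end{equation*}
Then I would clean up the quadratic-in-$d_{i,j}$ coefficient using the hypothesis $R > 7\tR$ of Lemma~\ref{lem:hops}: this gives $\tR^2/R^2 \leq \tR/(7R)$, so the $d_{i,j}^2$ coefficient is at most $\frac{2\tR}{R} + \frac{\tR}{7R} = \frac{15\tR}{7R} = \frac{30\tR}{14R}$, matching the first term of the target bound exactly.

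The remaining task is to absorb the cross term $4R(1+\tR/R)\,d_{i,j}$ and the constant term $4R^2$ into the single additive $8R$. For this I would invoke the fact that the sensors lie in $[0,1]^d$ with $d\in\{2,3\}$, so $d_{i,j}\leq \sqrt{d}$, together with $\tR/R \leq 1/7$ and the fact that we are in the relevant regime $R \leq 1$ (which is implicit from the scaling $R = O((\log n/n)^{1/d})$ used throughout). These bounds reduce both residual terms to multiples of $R$, and the constants can be chosen to fit under $8R$. The main obstacle, if any, is bookkeeping the constants on this last step; there is no new idea beyond the lemma itself, and no probabilistic argument is introduced here since the lemma already absorbs all the randomness.
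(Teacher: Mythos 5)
Your proposal is correct and is essentially the paper's own proof: multiply the bound of Lemma~\ref{lem:hops} by $R$ to get $\hd_{i,j}=Rh_{i,j}\le(1+\tR/R)d_{i,j}+2R$, square and subtract $d_{i,j}^2$, use $\tR/R<1/7$ to reduce the quadratic coefficient to $30\tR/(14R)$, and absorb the cross and constant terms using $d_{i,j}\le\sqrt{d}$, $d\le 3$, and the smallness of $R$. The only differences are cosmetic: your cross term $4R(1+\tR/R)d_{i,j}$ is the correct result of squaring (the paper's displayed $2R(1+\tR/R)d_{i,j}$ looks like a factor-of-two slip), and, as you note, squeezing the residual under exactly $8R$ genuinely uses the regime $R=o(1)$ rather than merely $R\le 1$ — the same looseness in constants the paper itself allows.
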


\begin{proof}
 	From Lemma \ref{lem:hops}, we know that 
 	\begin{eqnarray*}
   	(R\,h_{i,j})^2-d_{i,j}^2\leq \frac{2\tR}{R}\Big(1+\frac{\tR}{2R}\Big)d_{i,j}^2+2R\Big(1+\frac{\tR}{R}\Big)d_{i,j}+ 4R^2\;.
 	\end{eqnarray*}
 	The corollary follows from the assumption that $7\tR<R\leq1$ and $d\leq3$.
\end{proof}

Define an error matrix $Z=\hD-D$. Then by Corollary \ref{cor:hops},
$Z$ is element-wise bounded by $$0\leq Z_{ij} \leq (30\tR/(14R)) D_{ij} + 8R.$$
We can bound the spectral norm of $Z$ as follows. 
Let $u$ and $v$ be the left and right singular vectors of the non-negative matrix $Z$, respectively. 
Then by Perron-Frobenius theorem, $u$ and $v$ are also non-negative.
It follows that
\begin{eqnarray}
 	\|\hD-D\|_2 &=& u^TZv \nonumber\\
	 &\leq& (30\tR/(14R)) u^T Dv + (\ones^Tu)(\ones^Tv)8R \nonumber\\
	 &\leq& (30\tR/(14R)) \|D\|_2 + 8Rn \nonumber\\
	 &\leq& (30\tR/(14R))dn + 8Rn \label{eq:spectralnorm2} \;.
\end{eqnarray}
The first inequality follows from 
the element-wise bound on $Z$ and 
the non-negativity of $u$ and $v$, 
and the second inequality follows form the definition of the spectral norm and 
the Cauchy-Schwarz inequality. 
In the last inequality, we used $\|D\|_2\leq dn$, 
which follows from the fact that $D$ is non-negative and element-wise bounded by $d$. 
Typically we are interested in the regime where $R=o(1)$, 
and by assumption we know that $R\geq\tR$ and $d\leq 3$. 
Therefore, the first term in \eqref{eq:spectralnorm2} dominates the error.
Substituting this bound on $\|\hD-D\|_2$ in \eqref{eq:spectralnorm} proves the theorem.

%
%

\subsection{Proof of Theorem~\ref{th:transform}}
Using SVD we can write $LX$ as $U_{n\times d}\Sigma_{d \times d} V^T_{d\times d}$ where  $U^TU=\id_{d\times d}$, $V^TV=VV^T=\id_{d\times d}$ and $\Sigma$ is a diagonal matrix. We also denote the \textit{Frobenius inner product} between to matrices $A_{m\times n}$ and $B_{m\times n}$ by $$\langle A,B\rangle \doteq \sum_{i,j}A_{i,j} B_{i,j}.$$
It is easy to show that $$\langle A,B\rangle = \Tr(A^TB)\leq \|A\|_F \|B\|_F.$$

 In fact, this inner product induces the Frobenius norm definition. In particular, for an $m\times n$ matrix $A$ we have $$\|A\|_F= \sup_{B\in\R^{m\times n}, \|B\|_F\leq 1}\langle B,A\rangle.$$
Now, for $S=\hX^TLU\Sigma^{-1}V^T$, we have
\begin{eqnarray*}
\| LX-L\hX S\|_F &=&  \sup_{B\in\R^{n\times d}, \|B\|_F\leq 1}\langle B,LX-L\hX S\rangle \\
&=&  \sup_{B\in\R^{n\times d}, \|B\|_F\leq 1}\langle B,(LXV\Sigma U^T-L\hX\hX^TL)U\Sigma^{-1}V^T\rangle \\
&=&  \sup_{B\in\R^{n\times d}, \|B\|_F\leq 1}\langle BV\Sigma^{-1}U^T,LXX^TL-L\hX\hX^TL\rangle \\
&=&  \sup_{B\in\R^{n\times d}, \|B\|_F\leq 1}\| BV\Sigma^{-1}U^T\|_F\|LXX^TL-L\hX\hX^TL\|_F.
\end{eqnarray*}
Using the fact $\|A\|_F=\Tr(A^TA)$ and the cyclic property of the trace, i.e., $\Tr(ABC)=\Tr(BCA)$, we obtain $$\| BV\Sigma^{-1}U^T\|_F=\Tr( BV\Sigma^{-2}V^TB)\leq \sigma^2_{\min}\|B\|^2_F,$$
where $\sigma_{\min}$ is the smallest singular value of $LX$. It remains  to show that $\sigma_{\min}\geq \sqrt{n/6}$ holds with high probability when nodes are placed uniformly at random. To this end we need to consider two facts. First, the  singular values (and in particular  the smallest singular value) are Lipschitz functions of the entries (See appendix). Second, we have $E(LX\hX L)=(n/12)\id_{d\times d}$. By using concentration of measure  for Lipschitz functions on bounded independent random variables, the result follows.
\subsection{Proof of Theorem~\ref{thm:main2}}
In this section we provide the proofs of the theorems \ref{thm:main2}. Detailed proofs of the technical lemmas are provided in the following sections.

For an unknown node $i$, 
the estimation $\hx_i$ is given in Eq.~(\ref{eq:lateration}). 
\begin{eqnarray}
 \|x_i-\hx_i\| &=&    \|(A^TA)^{-1}A^Tb_0^{(i)}-(A^TA)^{-1}A^Tb^{(i)}\| \nonumber\\
               &\leq& \|(A^TA)^{-1}A^T\|_2\|b_0^{(i)}-b^{(i)}\| \;, \label{first-step}
\end{eqnarray}
First, to bound $\|b_0^{(i)}-b^{(i)}\|$, we use Corollary~\ref{cor:hops}. Since $d_{i,j}^2\leq d$ for all $i$ and $j$, we have
\begin{eqnarray}
 \|b_0^{(i)}-b^{(i)}\| &=   & \Big(\sum_{k=1}^{m-1}\big(d_{i,k+1}^2-d_{i,k}^2-\hd_{i,k+1}^2+\hd_{i,k}^2 \big)^2 \Big)^{1/2}\nonumber\\
                       &\leq& 2\sqrt{m-1}\left(\frac{30\tR}{14R}d + 8R\right) \;,\label{eq:term1}
\end{eqnarray}
Next, to bound $\|(A^TA)^{-1}A^T\|_2$, we use the following lemma. 
\looseness -1
\begin{lemma} 
\label{lem:spectral-random}
Under the hypothesis of Theorem \ref{thm:main2}, 
the following is true. 
Assuming random anchor model  in which $m=\Omega(\log n)$ anchors are chosen uniformly at random among $n$ sensors. Then we have 
\begin{eqnarray*}
	\|(A^TA)^{-1}A^T\|_2 \leq \sqrt{\frac{3}{m-1}} \;, 
\end{eqnarray*}
with high probability.
\end{lemma}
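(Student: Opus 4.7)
The first step is to convert $\|(A^TA)^{-1}A^T\|_2$ into a bound on the smallest singular value of $A$. Since $A\in\R^{(m-1)\times d}$ has rank $d$ almost surely (for $m-1\geq d$), we have the identity $\|(A^TA)^{-1}A^T\|_2 = 1/\sigma_{\min}(A)=1/\sqrt{\lambda_{\min}(A^TA)}$. Hence the lemma reduces to showing $\lambda_{\min}(A^TA)\geq (m-1)/3$ with probability at least $1-O(n^{-4})$.

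Next, I would expand $A^TA = 4\sum_{k=1}^{m-1}(x_k-x_{k+1})(x_k-x_{k+1})^T$ and compute its expectation. Under the random anchor model, the anchors are essentially iid uniform on $[0,1]^d$, so each difference $x_k-x_{k+1}$ has mean zero with coordinate-wise variance $1/6$ and uncorrelated coordinates, giving $\E[(x_k-x_{k+1})(x_k-x_{k+1})^T] = \tfrac{1}{6}\id_d$. Therefore $\E[A^TA] = \tfrac{2(m-1)}{3}\id_d$ and $\lambda_{\min}(\E[A^TA]) = 2(m-1)/3$, so the remaining task is to show that $A^TA$ concentrates around its expectation tightly enough that at most a factor of two is lost on the smallest eigenvalue.

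The main obstacle is that the summands $(x_k-x_{k+1})(x_k-x_{k+1})^T$ are not mutually independent: consecutive terms share the anchor $x_{k+1}$. The plan is to resolve this by partitioning $\{1,\ldots,m-1\}$ into odd and even indices. Within each part, the summands involve pairwise disjoint pairs of anchors, hence are mutually independent. Writing $A^TA = S_{\mathrm{odd}} + S_{\mathrm{even}}$, each sub-sum is a sum of $\Theta(m)$ iid PSD rank-one matrices, each with spectral norm at most $4d=O(1)$ (since $\|x_k-x_{k+1}\|^2\leq d$), and with expected value approximately $\tfrac{(m-1)}{3}\id_d$.

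Finally, I would apply the matrix Chernoff inequality to each sub-sum with deviation parameter $\delta=1/2$. The resulting failure probability is bounded by $d\exp\!\bigl(-\lambda_{\min}(\E[S])/(32 d)\bigr)$, which, under the hypothesis $m=\Omega(\log n)$, is at most $\tfrac{1}{2}n^{-4}$. A union bound over the two halves, combined with Weyl's inequality $\lambda_{\min}(A^TA)\geq \lambda_{\min}(S_{\mathrm{odd}}) + \lambda_{\min}(S_{\mathrm{even}})\geq 2\cdot\tfrac{1}{2}\cdot\tfrac{m-1}{3}=\tfrac{m-1}{3}$, would complete the proof. The delicate step is tracking the constants tightly enough that the factor-of-two slack between $\lambda_{\min}(\E[A^TA])$ and the target $(m-1)/3$ is enough to absorb the Chernoff deviation; using $\delta=1/2$ is exactly what makes the bookkeeping balance.
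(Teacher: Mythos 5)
Your proposal is correct, and its skeleton matches the paper's: the same reduction $\|(A^TA)^{-1}A^T\|_2 = 1/\sigma_{\min}(A)$, the same target $\lambda_{\min}(A^TA)\ge (m-1)/3$ (the paper likewise works from $\E[A^TA]=\tfrac{2(m-1)}{3}\id_d$ and accepts a factor-of-two loss), and the identical odd/even splitting of the index set to decouple summands that share an anchor. Where you genuinely diverge is the concentration step: the paper argues entrywise, using scalar Hoeffding to show each diagonal entry of $B=A^TA$ concentrates around $2(m-1)/3$ while each off-diagonal entry is $O(m^{1/2+\epsilon})$, and then invokes the Gershgorin circle theorem to convert these entrywise bounds into $\lambda_{\min}(B)\ge (m-1)/3$; you instead apply the matrix Chernoff inequality directly to the two PSD half-sums and combine via $\lambda_{\min}(S_{\mathrm{odd}}+S_{\mathrm{even}})\ge\lambda_{\min}(S_{\mathrm{odd}})+\lambda_{\min}(S_{\mathrm{even}})$. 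Your route is shorter, avoids the $d^2$-term entrywise union bound and the $m^{1/2+\epsilon}$ bookkeeping, and gives a failure probability of the form $d\exp(-\Theta(m/d))$, explicitly exponential in $m$ (the paper's $3d^2e^{-m^{2\epsilon}}$ is weaker when $\epsilon<1/2$); the only cost is reliance on a matrix concentration inequality, whereas the paper stays within the more elementary toolkit (Hoeffding and Gershgorin, both stated in its appendix). In both arguments, obtaining the $1-O(n^{-4})$ probability claimed around Theorem~\ref{thm:main2} requires the constant in $m=\Omega(\log n)$ to be sufficiently large, which you correctly identify as the delicate bookkeeping point.
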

 By assumption we know that $R\geq\tR$ and $d\leq 3$. By combining \eqref{first-step}, \eqref{eq:term1} and Lemma~\ref{lem:spectral-random} proves Theorems \ref{thm:main2}.
\subsection{Proof of  Theorem~\ref{thm:main3}}
In this section we provide the proof of Theorem \ref{thm:main3}.
Detailed proofs of the technical lemmas are provided in the following sections.

Similarly to  the proof of Theorem~\ref{thm:main2}, for an unknown node $i$, 
and the estimate  $\hx_i$ we have 
\begin{eqnarray*}
 \|x_i-\hx_i\| &\leq&    \|(A^TA)^{-1}A^T\|_2\|b_0^{(i)}-b^{(i)}\| \;,
\end{eqnarray*}
We have already bounded the expression $\|b_0^{(i)}-b^{(i)}\|$ in \eqref{eq:term1}.
To bound $\|(A^TA)^{-1}A^T\|_2$, we use the following lemma. 
\looseness -1
\begin{lemma} 
\label{lem:spectraldeterministic}
Under the hypothesis of Theorem \ref{thm:main3}, 
the following are true. 
We assume a deterministic anchor model, where $m=d+1$ anchors are 
placed on the positions 
\begin{eqnarray*}
x_1&=&[1,0,\ldots,0],\\
x_2&=&[0,1,0,\ldots,0],\\
&\vdots & \\
x_d &=&[0,0,\ldots,0, 1],\\
x_{d+1}&=&[0,0,\ldots,0].
\end{eqnarray*}
Then,
\begin{eqnarray*}
	\|(A^TA)^{-1}A^T\|_2 \leq \frac{d}{2}  \;,
\end{eqnarray*}
with high probability. This finishes the proof of Theorems  \ref{thm:main3}.
\end{lemma}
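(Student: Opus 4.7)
The plan is to exploit that with $m=d+1$ anchors the matrix $A$ is square, so $(A^TA)^{-1}A^T$ reduces to $A^{-1}$ and we only need a clean upper bound on $\|A^{-1}\|_2$. First I would substitute the given anchor positions into the definition of $A$ from just before \eqref{eq:lateration}. Using $x_k=e_k$ for $k=1,\ldots,d$ and $x_{d+1}=0$, the $k$th row of $A$ becomes $2(x_k-x_{k+1})^T$, which equals $2(e_k-e_{k+1})^T$ for $k<d$ and $2e_d^T$ for $k=d$. Thus $A=2\tilde A$ where $\tilde A$ is the upper bidiagonal $d\times d$ matrix with $1$s on the diagonal, $-1$s on the superdiagonal, and zeros elsewhere.

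Next I would invert $\tilde A$ explicitly. Writing $\tilde A = \id_d - N$, where $N$ is the strictly upper triangular shift matrix with $1$s on the superdiagonal, one has $N^d=0$, so
\begin{equation*}
\tilde A^{-1} \;=\; \sum_{k=0}^{d-1} N^k,
\end{equation*}
which is the upper triangular matrix whose entries are $1$ on and above the diagonal and $0$ below. This gives the closed form $A^{-1}=\tfrac{1}{2}\tilde A^{-1}$, confirming in particular that $A$ is invertible so that $(A^TA)^{-1}A^T=A^{-1}$.

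Finally I would bound the spectral norm of $A^{-1}$ via the standard inequality $\|M\|_2 \le \sqrt{\|M\|_1\,\|M\|_\infty}$, where $\|M\|_1$ is the maximum absolute column sum and $\|M\|_\infty$ the maximum absolute row sum. For $A^{-1}=\tilde A^{-1}/2$, row $i$ contains $d-i+1$ entries equal to $1/2$ so $\|A^{-1}\|_\infty = d/2$, and column $j$ contains $j$ entries equal to $1/2$ so $\|A^{-1}\|_1 = d/2$. Therefore
\begin{equation*}
\|(A^TA)^{-1}A^T\|_2 \;=\; \|A^{-1}\|_2 \;\le\; \sqrt{(d/2)(d/2)} \;=\; \frac{d}{2},
\end{equation*}
as required. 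Since $A$ depends only on the fixed anchor positions, this bound is deterministic (the ``high probability'' qualifier is inherited from Theorem \ref{thm:main3} and concerns the shortest-path estimates through $\|b_0^{(i)}-b^{(i)}\|$, not $A$ itself). There is really no hard step: the only thing to watch is making sure the last row of $A$ is $2e_d^T$ rather than $2(e_d-0)^T$ written with a spurious sign, and that the telescoping inversion of $\tilde A$ is carried out correctly; both are routine.
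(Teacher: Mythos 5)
Your proof is correct and follows essentially the same route as the paper: both substitute the anchor positions to get the bidiagonal matrix $A=2(\id_d-N)$, invert it explicitly to the upper-triangular matrix of ones scaled by $1/2$, and bound the spectral norm of the inverse by $d/2$ (equivalently, lower bound $\sigma_{\min}(A)$ by $2/d$). The only difference is in the last step, where the paper applies the Gershgorin circle theorem to $A^{-1}(A^{-1})^T$ while you use the standard bound $\|M\|_2\le\sqrt{\|M\|_1\,\|M\|_\infty}$; both yield the same constant, and your remark that the bound is deterministic (the ``with high probability'' qualifier playing no role for $A$ itself) is accurate.
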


\subsection{Proof of Lemmas \ref{lem:spectral-random} (Random Model) }
As it was the case in the proof of Lemma \ref{lem:spectraldeterministic} in order to upper bound $\|(A^TA)^{-1}A\|_2$ we need to lower bound the smallest singular value of $A$.   Let the symmetric matrix $B$ be defined as $A^TA$. The diagonal entries of $B$ can be written as
\begin{equation}\label{diagonal}
 b_{i,i}= 4\sum_{k=1}^{m-1}(x_{k,i}-x_{k+1,i})^2,
\end{equation}
for $1\leq i\leq d$ and the off-diagonal entries as
\begin{equation}\label{off-diagonal}
 b_{i,j}=4 \sum_{k=1}^{m-1}(x_{k,i}-x_{k+1,i})(x_{k,j}-x_{k+1,j}),
\end{equation}
for $1\leq i\neq j\leq d$ where $x_{k,i}$ is the $i$-th element of vector $x_k$. In the following lemmas, we show that with high probability, as $m$ increases, the diagonal entries of $B$ will all be of the order of $m$, i.e., $b_{i,i}=\Theta(m)$, and the off-diagonal entries will be bounded from above by $m^{\frac{1}{2}+\epsilon}$, i.e., $b_{i,j}=o(m)$.

\begin{lemma}\label{diagonal-concentration}
For any  $\epsilon>0$ the diagonal entries of $B$ are bounded as follows.
\begin{displaymath}
 \prob\left(|b_{i,i}-2(m-1)/3|>4 m^{\frac{1}{2}+\epsilon}\right)\leq 4 e^{-m^{2\epsilon}}.
\end{displaymath}
\end{lemma}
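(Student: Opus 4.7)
The strategy is a Hoeffding concentration argument after decoupling the weak dependence among the consecutive squared differences.

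First I would set up the distributional picture. Because the $m$ anchors are sampled uniformly at random from $n$ sensors that are themselves i.i.d.\ uniform on $[0,1]^d$, the anchor positions $x_1,\ldots,x_m$ are i.i.d.\ uniform on $[0,1]^d$; in particular, for each fixed coordinate $i\in\{1,\ldots,d\}$, the scalars $x_{1,i},\ldots,x_{m,i}$ are i.i.d.\ uniform on $[0,1]$. A direct calculation then gives
\begin{equation*}
\E\bigl[(x_{k,i}-x_{k+1,i})^2\bigr] \;=\; 2\,\mathrm{Var}(U) \;=\; \tfrac{1}{6},
\end{equation*}
where $U\sim\mathrm{Unif}[0,1]$, so from \eqref{diagonal} we obtain $\E[b_{i,i}] = 4(m-1)/6 = 2(m-1)/3$, matching the target centering.

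The main obstacle is that the $(m-1)$ summands $Y_k := (x_{k,i}-x_{k+1,i})^2$ are \emph{not} independent: consecutive ones share a common anchor coordinate. I would remove this dependence by the standard odd/even decoupling trick. Split the index set into $K_{\mathrm{odd}} = \{k\ \text{odd},\ 1\le k\le m-1\}$ and $K_{\mathrm{even}} = \{k\ \text{even},\ 1\le k\le m-1\}$, and write $b_{i,i}/4 = S_{\mathrm{odd}} + S_{\mathrm{even}}$ where $S_{\mathrm{odd}} = \sum_{k\in K_{\mathrm{odd}}} Y_k$ and similarly for $S_{\mathrm{even}}$. Within each of $S_{\mathrm{odd}}$ and $S_{\mathrm{even}}$, the terms involve disjoint pairs of anchor coordinates, and hence are sums of at most $\lceil (m-1)/2\rceil \le m/2$ \emph{independent} random variables, each bounded in $[0,1]$.

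Now I would apply Hoeffding's inequality to each half. For either $S_\bullet$, with $t = m^{1/2+\epsilon}/2$ and at most $m/2$ terms,
\begin{equation*}
\prob\!\Bigl(\bigl|S_\bullet - \E[S_\bullet]\bigr|>\tfrac{1}{2}m^{1/2+\epsilon}\Bigr)
\;\le\; 2\exp\!\Bigl(-\tfrac{2(m^{1/2+\epsilon}/2)^2}{m/2}\Bigr)
\;=\; 2 e^{-m^{2\epsilon}}.
\end{equation*}
A triangle inequality and union bound then give $\prob(|S_{\mathrm{odd}}+S_{\mathrm{even}} - (m-1)/6| > m^{1/2+\epsilon}) \le 4 e^{-m^{2\epsilon}}$, and multiplying through by $4$ yields precisely the claimed tail bound on $b_{i,i}$. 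The only step requiring care is verifying that the sampling-without-replacement of anchors does not spoil independence; if desired one can condition on the anchor identities and use the fact that, conditionally on being selected as anchors, the $m$ sensor positions remain i.i.d.\ uniform on $[0,1]^d$, so the argument goes through unchanged.
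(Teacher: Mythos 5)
Your proposal is correct and takes essentially the same route as the paper's proof: splitting the sum over odd and even indices to obtain independent bounded summands, applying Hoeffding's inequality to each half, and combining via the triangle inequality and a union bound. The only differences are cosmetic (you work with $b_{i,i}/4$ so the summands lie in $[0,1]$, and you make explicit the mean computation $\E[b_{i,i}]=2(m-1)/3$ and the i.i.d.\ uniformity of the anchor coordinates, which the paper leaves implicit).
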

 The idea is to use Hoeffding's Inequality (see appendix \ref{Hoeffding}) for the sum of independent and bounded random variables. To this end, we need to divide the sum in \eqref{diagonal} into sums of even and odd terms as follows:
\begin{displaymath}
 b_{i,i}=b_e^i+b_o^i,
\end{displaymath}
where
\begin{eqnarray}
b_e^i &=& 4\sum_{k\in \textrm{even}}(x_{k,i}-x_{k+1,i})^2,  \label{even-term}\\
b_o^i &=& 4\sum_{k\in \textrm{odd}}(x_{k,i}-x_{k+1,i})^2.\label{odd-term}
\end{eqnarray}
This separation ensures that the random variables in summations \eqref{even-term} and \eqref{odd-term} are independent. Let the random variable $z_{k}^i$ denote the term $4(x_{k,i}-x_{k+1,i})^2$ in \eqref{even-term}. Since $z^i_k\in[0,4]$ and all the terms in $b^i_e$ are independent of each other, we can use  Hoeffding's Inequality to upper bound the probability of the deviation of $b^i_e$ from its expected value:
\begin{equation}\label{dev1}
 \prob\left(|b^i_e-(m-1)/3|>2 m^{\frac{1}{2}+\epsilon}\right)\leq 2 e^{-m^{2\epsilon}},
\end{equation}
for any fixed $\epsilon>0$.
The same bound holds for $b_o$. Namely,
\begin{equation}\label{dev2}
\prob\left(|b^i_o-(m-1)/3|>2 m^{\frac{1}{2}+\epsilon}\right)\leq 2 e^{-m^{2\epsilon}}.
\end{equation}
Hence,
\begin{small}
\begin{align*}
&  \prob\left(|b_{i,i}-2(m-1)/3|>4 m^{\frac{1}{2}+\epsilon}\right) \\
&\quad\quad \stackrel{(a)}{\leq} \prob\left(|b_e-(m-1)/3|+|b_o-(m-1)/3|>4 m^{\frac{1}{2}+\epsilon}\right)\\
&\quad\quad\stackrel{(b)}{\leq} 4 e^{-m^{2\epsilon}},
\end{align*}
\end{small}
where in $(a)$ we used triangular inequality and in $(b)$ we used the union bound.
\begin{lemma}\label{off-diagonal-concentration}
For any  $\epsilon>0$ the off-diagonal entries of $B$ are bounded as follows.
\begin{displaymath}
 \prob\left(|b_{i,j}|>16 m^{\frac{1}{2}+\epsilon}\right)\leq 4 e^{-m^{2\epsilon}}.
\end{displaymath}

\end{lemma}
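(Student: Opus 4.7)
The plan is to mirror the argument used for the diagonal entries in Lemma~\ref{diagonal-concentration}. First I would decompose the sum into its even and odd parts,
$$ b_{i,j} \;=\; b_e^{ij} + b_o^{ij}, \qquad b_e^{ij} \;=\; 4\!\!\sum_{k\text{ even}}\!\! w_k, \qquad b_o^{ij} \;=\; 4\!\!\sum_{k\text{ odd}}\!\! w_k, $$
where $w_k = (x_{k,i}-x_{k+1,i})(x_{k,j}-x_{k+1,j})$. The even/odd splitting is necessary because consecutive terms share the anchor $x_{k+1}$ and are therefore dependent, whereas within either parity class the $w_k$ are mutually independent, since the pairs $(x_k,x_{k+1})$ indexed by $k$ in the same parity class involve disjoint anchors.

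Next I would verify that $\E[w_k] = 0$. Because the anchors are drawn uniformly in $[0,1]^d$, all coordinate random variables are independent. For $i \neq j$, the two factors $(x_{k,i}-x_{k+1,i})$ and $(x_{k,j}-x_{k+1,j})$ depend on disjoint sets of these coordinate variables, hence are independent, and each has mean zero. This is the crucial structural difference from the diagonal case: here there is no offset like $2(m-1)/3$ to subtract, so the concentration inequality applies directly to $b_e^{ij}$ and $b_o^{ij}$ around zero.

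Then I would invoke Hoeffding's inequality (Appendix~\ref{Hoeffding}) separately on $b_e^{ij}$ and on $b_o^{ij}$. Since $w_k \in [-1,1]$, each rescaled term $4 w_k$ lies in an interval of length $8$, and there are at most $(m-1)/2$ independent terms in each parity class. Hoeffding therefore yields, for any fixed $\epsilon > 0$,
$$ \prob\bigl(|b_e^{ij}| > 8\, m^{\frac{1}{2}+\epsilon}\bigr) \;\leq\; 2\exp(-m^{2\epsilon}), $$
and the same bound for $b_o^{ij}$. Combining these via the triangle inequality $|b_{i,j}| \leq |b_e^{ij}| + |b_o^{ij}|$ and a union bound gives the claimed concentration $\prob\bigl(|b_{i,j}| > 16\, m^{\frac{1}{2}+\epsilon}\bigr) \leq 4 e^{-m^{2\epsilon}}$.

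There is no real obstacle here once the independence structure is set up correctly: the even/odd splitting takes care of within-sum independence, the coordinate independence of the uniform prior forces the mean to vanish, and Hoeffding supplies the tail bound. The only care needed is bookkeeping the constants so that the range bound $16\, m^{\frac{1}{2}+\epsilon}$ and the tail probability $4 e^{-m^{2\epsilon}}$ in the statement come out exactly; the threshold $8\, m^{\frac{1}{2}+\epsilon}$ per parity class even leaves a factor of $4$ of slack in the exponent, so the claim is comfortably robust to any reasonable choice of the bounded-range constants.
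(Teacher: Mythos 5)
Your proposal is correct and is exactly the argument the paper intends: the paper's own proof of this lemma is just the remark that it ``follows in the same lines'' as Lemma~\ref{diagonal-concentration}, i.e.\ the same even/odd splitting plus Hoeffding, and you carry this out faithfully, including the one genuinely new observation needed here (that for $i\neq j$ the two factors involve disjoint coordinates, so each term has mean zero and no centering constant appears). The constant bookkeeping ($4w_k\in[-4,4]$, at most $(m-1)/2$ terms per parity class, threshold $8m^{1/2+\epsilon}$ each, then triangle inequality and union bound) checks out and indeed yields the stated bound with room to spare in the exponent.
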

 The proof follows in the same lines as the proof of Lemma~\ref{diagonal-concentration}. 

Using the Gershgorin circle theorem (see appendix \ref{Gershgorin}) we can find a lower bound on the minimum eigenvalue of $B$.
\begin{equation}\label{lambda-min}
 \lambda_\textrm{min}(B)\geq \min_i (b_{i,i}-R_i), 
\end{equation}
where
\begin{displaymath}
 R_i=\sum_{j\neq i }|b_{i,j}|.
\end{displaymath}
Now, let $\mathbb{B}_{ii}$ denote the event that $\{ b_{i,i}<2(m-1)/3-4m^{\frac{1}{2}+\epsilon}\}$ and $\mathbb{B}_{ij}$ (for $i\neq j$)  denote the event that $\{b_{i,j}>16m^{\frac{1}{2}+\epsilon}\}$. Since the matrix $B$ is symmetric, we  have only $d(d+1)/2$ degrees of freedom. Lemma \ref{diagonal-concentration} and \ref{off-diagonal-concentration} provide us with a bound on the probability of each event. Therefore, by using the union bound we get
\begin{eqnarray*}
 \prob\left(\bigcup_{i\leq j} \overline{\mathbb{B}_{ij}} \right)&\leq&1- \sum_{i\leq j}\prob(\mathbb{B}_{ij})\\ 
&=&1- 3d^2 e^{-m^{2\epsilon}}. 
\end{eqnarray*}
Therefore with probability at least $1-3d^2 e^{-m^{2\epsilon}}$ we have 
\begin{equation}\label{bii-lower}
 b_{i,i}-R_i\geq \frac{2(m-1)}{3}-16 d\cdot m^{\frac{1}{2}+\epsilon},
\end{equation}
for all $1\leq i\leq d$. As $m$ grows, the RHS of \eqref{bii-lower} can be lower bounded by $(m-1)/3$. By combining \eqref{lambda-min} and \eqref{bii-lower} we can conclude that
\begin{equation}\label{lambda-lower}
 \prob \left(\lambda_\textrm{min}(B)\geq\frac{(m-1)}{3} \right)\geq1-3d^2 e^{-m^{2\epsilon}}.
\end{equation}
As a result, from \eqref{sigma-min} and \eqref{lambda-lower} we have
\begin{equation}
 \prob\left(\|(A^TA)^{-1}A\|_2 \leq \sqrt{\frac{3}{m-1}}\right)\geq 1-3d^2 e^{-m^{2\epsilon}},
\end{equation}
which shows that  as $m$ grows, with high probability we have $\|(A^TA)^{-1}A\|_2\leq\sqrt{\frac{3}{m-1}}$.

%
%
\subsection{Proof of Lemmas \ref{lem:spectraldeterministic} (Deterministic Model) }
\label{sec:spectral}

By using the singular value decomposition of 
a tall $m-1 \times d$ matrix $A$, we know that it can be written as 
$A=U\Sigma V^T$ where $U$ is an orthogonal matrix, $V$ is a unitary matrix and $\Sigma$ is a diagonal matrix. Then, 
\begin{displaymath}
(A^TA)^{-1}A=U\Sigma^{-1} V^T.
\end{displaymath}
Hence,
\begin{equation}\label{sigma-min}
 \|(A^TA)^{-1}A\|_2=\frac{1}{\sigma_{min}(A)},
\end{equation}
where $\sigma_{min}(A)$ is the smallest singular value of $A$. This means that in order to upper bound $\|(A^TA)^{-1}A\|_2$ we need to lower bound the smallest singular value of $A$. 

By putting the sensors in the mentioned positions the $d\times d$ matrix $A$ will be Toeplitz and have the following form.
\begin{displaymath}
 A=2\left[\begin{array}{c c c c c}
1 & -1 & 0 & \cdots & 0 \\
0 & 1  & -1 & \cdots & 0 \\
\vdots & \vdots & \ddots & \ddots & \vdots\\
  0  & \cdots & 0 & 1 & -1 \\
  0  & \cdots & 0 & 0 & 1
\end{array} \right].
\end{displaymath}
We can easily find the inverse of matrix $A$.
\begin{displaymath}
 A^{-1}=\frac{1}{2}\left[\begin{array}{c c c c c}
1 & 1 & 1 & \cdots & 1 \\
0 & 1  & 1 & \cdots & 1 \\
\vdots & \vdots & \ddots & \ddots & \vdots\\
  0  & \cdots & 0 & 1 & 1 \\
  0  & \cdots & 0 & 0 & 1
\end{array} \right].
\end{displaymath}
Note that the maximum singular value of $A^{-1}$ and the minimum singular value of $A$ are related as follows.
\begin{equation}
 \sigma_{min}(A)=\frac{1}{\sigma_{max}(A^{-1})}. 
\end{equation}
To find the maximum singular value of $A^{-1}$, we need to calculate the maximum eigenvalue of $A^{-1}\left(A^{-1}\right)^T$ which has the following form
 \begin{displaymath}
 A^{-1}\left(A^{-1}\right)^T=\frac{1}{4}\left[\begin{array}{c c c c c}
d & d-1 & d-2 & \cdots & 1 \\
d-1 & d-1  & d-2 & \cdots & 1 \\
\vdots & \vdots & \ddots & \ddots & \vdots\\
  2  & \cdots & 2 & 2 & 1 \\
  1  & \cdots & 1 & 1 & 1
\end{array} \right].
\end{displaymath}
By using the Gershgorin circle theorem (see appendix \ref{Gershgorin}) we can find an upper bound on the maximum eigenvalue of $A^{-1}\left(A^{-1}\right)^T$.
\begin{equation}\label{lambda-min-det}
 \lambda_\textrm{max}\left(A^{-1}\left(A^{-1}\right)^T\right)\leq \frac{d^2}{4}, 
\end{equation}
Hence, by combining \eqref{sigma-min} and \eqref{lambda-min-det} we get
\begin{equation}
 \|(A^TA)^{-1}A\|_2\leq \frac{d}{2}.
\end{equation}

%
%

\subsection{Proof of the Bound on the Number of Hops}
\label{sec:hops}

We start by applying a bin-covering technique in a similar way as in \cite{MP05,OKM10,KO10}. 
In this section, for simplicity, we assume that the nodes are placed in a $3$-dimensional space. 
However, analogous argument proves that the same statement is true for $d=2$ as well. 

For each ordered pair of nodes $(i,j)$ such that $d_{i,j}>R$, define a `bin' as 
\begin{eqnarray*}
 A_{i,j} = \big\{ x\in[0,1]^3 \;\big|\; R-\delta\leq d(x,x_i)\leq R, \measuredangle(x_j-x_i,x-x_i)\leq\theta \big\} \;,
\end{eqnarray*}
where $\delta$ and $\theta$ are positive parameters to be specified later in this section, 
and $\measuredangle(\cdot,\cdot):\reals^d\times\reals^d\to[0,\pi]$ is the angle between two vectors:  
$$\measuredangle(\cdot,\cdot)\equiv \arccos(z_1^Tz_2/(\|z_1\|\|z_2\|)).$$ 
We say a bin $A_{i,j}$ is occupied if there is a node inside the bin that is detected by node $i$ 
(i.e., conencted to node $i$ in the graph $G$). 
Next, for each unordered pair of nodes $(i,j)$ such that $d_{i,j}\leq R$, define a bin as 
\begin{eqnarray*}
 B_{i,j} = \big\{ x\in[0,1]^3 \;\big|\; d(x,x_i)\leq R, d(x,x_j)\leq R \big\} \;.
\end{eqnarray*}
We say a bin $B_{i,j}$ is occupied if there is a node inside the bin 
that is simultaneously detected by nodes $i$ and $j$  
(i.e., connected to both nodes $i$ and $j$ in the graph $G$). 
When $n$ nodes are deployed in $[0,1]^d$ uniformly at random, 
we want to ensure that, with high probability, 
all bins are occupied for appropriate choices of $R$, $\delta$, and $\theta$. 

First when $d_{i,j}>R$, 
\begin{eqnarray*}
  \prob\big(A_{i,j} \text{ is occupied} \big) &=& 1-\prod_{l\neq i,j} (1-\prob(\text{node $l$ occupies $A_{i,j}$})) \\
&\geq& 1-\left(1-\frac{1}{4}\int_{0}^{\theta}\int^{R}_{R-\delta} 2\pi r^2\sin(\phi)p(r) \mathrm{d}r \mathrm{d}\phi  \right)^{n-2} \\
	&=& 1-\left(1-\frac{1}{2}\pi\alpha (1-\cos(\theta)) R^\beta \frac{1}{3-\beta} (R^{3-\beta}-(R-\delta)^{3-\beta})\right)^{n-2}\;, 
\end{eqnarray*}
for $\beta\in[0,3)$ as per our assumption. 
Since $A_{i,j}$'s are constrained to be in $[0,1]^3$, we need to scale the probability by $1/4$.   
The above inequality is tight in the worst case, that is  when 
both nodes $i$ and $j$ lie on one of the edges of the cube $[0,1]^3$. 
We choose $\theta$ such that $1-\cos(\theta)=(\delta/R)^2$. 
Then using the facts that $1-z\leq\exp(-z)$ and 
$(1-z^{3-\beta})\leq (3-\beta)(1-z)/3$ for $z\in[0,1)$ and $\beta\in[0,3)$, we have 
\begin{eqnarray}
  \label{eq:Aoccupy}
  \prob\big(A_{i,j} \text{ is occupied}\big) \geq 1-\exp\left(-\frac{\pi\alpha \delta^3}{6}(n-2)\right)\;, 
\end{eqnarray}
which is larger than $1-1/n^6$ if we set 
$\delta = (12\log n/(\alpha(n-2)))^{1/3}$. 

Next we consider the case when nodes $i$ and $j$ are at most $R$ apart. 
Notice that nodes $i$ and $j$ may not be directly connected in the graph $G$, 
even if they are within a radio range $R$.
The probability that they are not directly connected is 
$1-\alpha(d_{i,j}/R)^{-\beta}$, which does not vanish even for large $n$.
However, we can show that nodes $i$ and $j$ are at most $2$ hops apart with overwhelming probability. 
The event that $h_{i,j}>2$ is equivalent to the event that $B_{i,j}$ is occupied. 
Then, 
\begin{eqnarray}
 \prob\big(B_{i,j} \text{ is occupied} \big) &=& 1-\prod_{l\neq i,j} (1-\prob(\text{node $l$ is detected by $i$ and $j$})) \nonumber\\
  		    &\geq& 1-(1-V(B_{i,j})\alpha^2)^{n-2} \nonumber\\
		    &\geq& 1-\exp\big\{{-V(B_{i,j})\alpha^2(n-2)}\big\} \label{eq:singlehop}\;,
\end{eqnarray}
where $V(B_{i,j})\in\reals$ is the volume of $B_{i,j}$,  
and we used the fact that the probability of detection is lower bounded by $\alpha$. 
$V(B_{i,j})$ is the smallest when nodes $i$ and $j$ are distance $R$ apart and 
lie on one of the edges of the cube $[0,1]^3$.
In a $3$-dimensional space, $$V(B_{i,j})\geq (1/4)(5/12)\pi R^3 \geq (1/4)R^3.$$ 
Substituting these bounds in \eqref{eq:singlehop}, we get 
\begin{eqnarray} 
 \prob\big(B_{i,j} \text{ is occupied} \big) \geq 1-\exp\big\{{- (1/4) \alpha^2 R^3 (n-2)}\big\} \;, \label{eq:Boccupy}
\end{eqnarray}
which is larger than $1-1/n^6$ for $R\geq\big((24\log n)/((n-2)\alpha^2)\big)^{1/3}$. 

For each ordered pair $(i,j)$, we are interested in the bin $A_{i,j}$ if $d_{i,j}>R$ 
and $B_{i,j}$ if $d_{i,j}\leq R$. 
Using the bounds in \eqref{eq:Aoccupy} and \eqref{eq:Boccupy}  
and applying union bound on all $n(n-1)$ ordered pairs of nodes, 
all bins $$\{A_{i,j}\,|\,d_{i,j}>R\}\cup \{B_{i,j}\,|\,d_{i,j}\leq R\}$$ 
are occupied with a probability larger than $1-1/n^4$. 

Now assuming all bins are occupied, we first show that 
the number of hops between two nodes $i$ and $j$ is 
bounded by a function $F(d_{i,j})$ that only depends on the distance
between the two nodes. 
The function $F:\R^+\rightarrow\R^+$ is defined as 
\begin{eqnarray*}
 F(z) = \left\{  \begin{array}{ ll}
       	2      & \text{if } z \leq R\;, \\
       	k+2 & \text{if } z \in  \cL_k\; \text{ for } k\in\{1,2,\ldots\}\;,
	\end{array}
 	\right. 
\end{eqnarray*}
where $\cL_k$ denotes the interval $(k(R-\sqrt{3}\delta)+\sqrt{3}\delta,k(R-\sqrt{3}\delta)+R]$.
Our strategy is to use induction to show that for all pairs of nodes, 
\begin{eqnarray}
 h_{i,j} \leq F(d_{i,j}) \;. \label{eq:boundF}
\end{eqnarray}

First, assume nodes $i$ and $j$ are at most $R$ apart. 
Then, by the assumption that the bin $B_{i,j}$ is occupied  
there is a node connected to both $i$ and $j$.  
Therefore the number of hops $h_{i,j}$ is at most $2$. 

Next, assume that the bound in \eqref{eq:boundF} is true for 
all pairs $(l,m)$ with $$d_{l,m}\leq \sqrt{3}\delta + k(R-\sqrt{3}\delta).$$
For two nodes $i$ and $j$ at distance $d_{i,j} \in \cL_k$, 
consider a line segment $\ell_{i,j}$ in the $3$-dimensional space 
with one end at $x_i$ and the other at $x_j$.
Let $y\in\reals^3$ be the point in the line segment $\ell_{i,j}$ that is at distance $R$ from $x_i$.
We want to show that there exists a node that is close to $y$ and is connected to node $i$.
By definition, $y$ is inside the bin $A_{i,j}$. 
We know that the bin $A_{i,j}$ is occupied by at least one node that is connected to node $i$. 
Let us denote one of these nodes by $l$. 
Then $d(y,x_l)\leq \sqrt{3}\delta$ because  
$$\sup_{z \in A_{i,j}} d(z,y)=\sqrt{\delta^2+2R(R-\delta)(1-\cos(\theta))}\leq \sqrt{3}\delta.$$

We use the following triangular inequality which follows from the definition of the number of hops. 
\begin{eqnarray*}
    h_{i,j} \leq h_{i,l} + h_{l,j} \;. 
\end{eqnarray*}
Since $l$ is connected to $i$ we have $h_{i,l}=1$. 
By triangular inequality, we also have $d_{l,j} \leq d(y,x_j)+d(y,x_l)$. 
It follows from $d(y,x_j)=d_{i,j}-R$ and $d(y,x_l)\leq \sqrt{3}\delta$ that $$d_{l,j} \leq d_{i,j}-R+\sqrt{3}\delta.$$
Recall that we assumed $d_{i,j}\leq R+k(R-\sqrt3\delta)$.
Since we assumed that \eqref{eq:boundF} holds for $ d_{l,j}\leq \sqrt3\delta+k(R-\sqrt{3}\delta)$, 
we have 
\begin{eqnarray*}
 h_{i,j}\leq k+2\;, 
\end{eqnarray*}
for all nodes $i$ and $j$ such that $d_{i,j}\leq R + k(R-\sqrt{3}\delta)$. 
By induction, this proves that the bound in \eqref{eq:boundF} holds for all pairs $(i,j)$. 

We can upper bound $F(z)$ with a simple affine function: 
\begin{eqnarray*}
 F(z) &\leq& 2 + \frac{1}{R-\sqrt{3}\delta}z\\
	&\leq& 2 + \Big(1+\frac{2\delta}{R}\Big)\frac{z}{R} \;, 
\end{eqnarray*}
where the last inequality is true for $R\geq2\sqrt{3}\delta/(2-\sqrt{3})$. 
Together with \eqref{eq:boundF} this finishes the proof of the lemma. 

\begin{figure}
\begin{center}
\includegraphics[width=8cm]{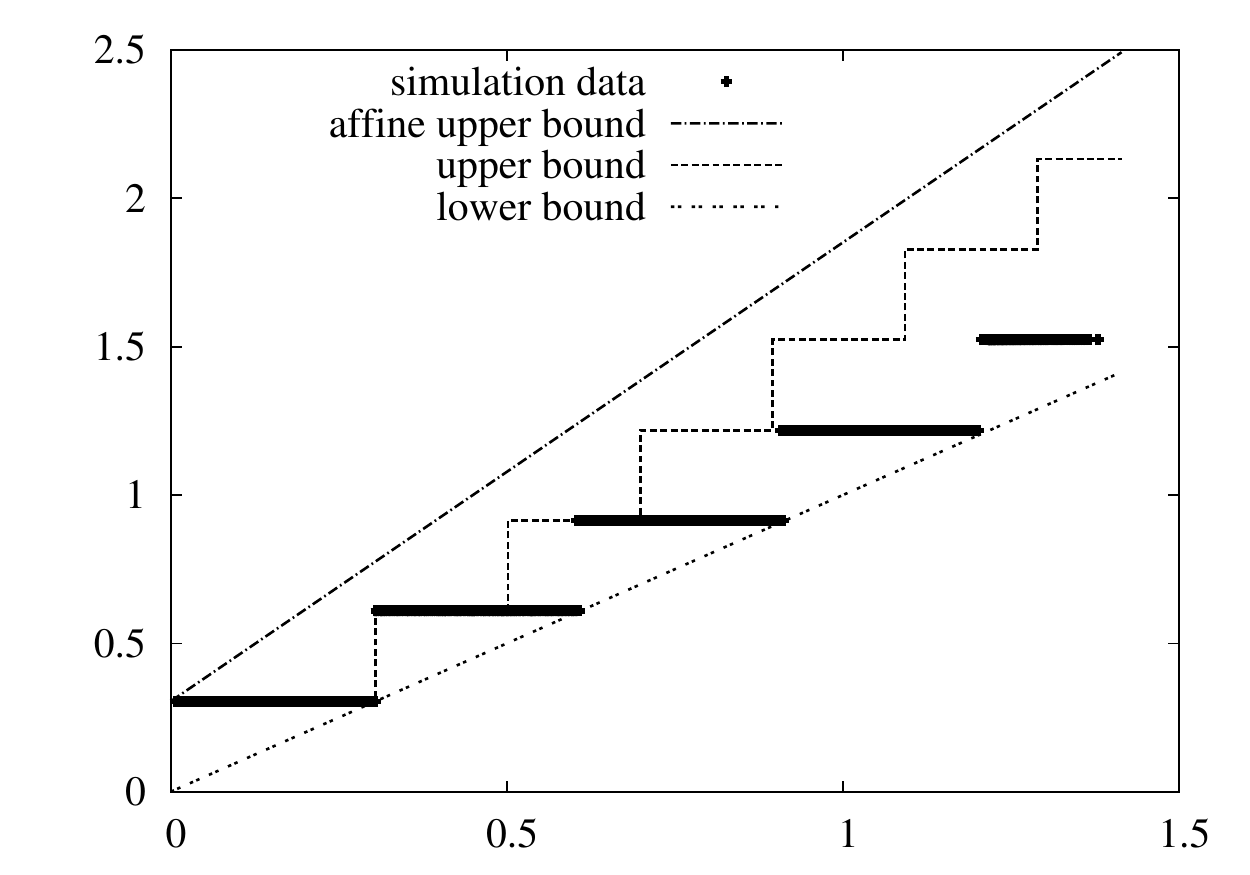}
\put(-105,-5){$d_{i,j}$}
\put(-225,80){$\hd_{i,j}$}
\caption{Comparison of upper and lower bound of shortest paths $\{\hd_{i,j}\}$ with respect to the correct distance $\{d_{i,j}\}$ computed for $n=6000$ sensors in $2$-dimensional square $[0,1]^2$ under connectivity-based model.} \label{fig:shortestpath_connectivity}
\end{center}
\end{figure}
\looseness -1
Figure \ref{fig:shortestpath_connectivity} illustrates 
the comparison of the 
upper bounds $F(d_{i,j})$ and $F_a(d_{i,j})$, 
and the trivial lower bound $\hd_{i,j}\geq d_{i,j}$
in a simulation with parameters $d=2$, $n=6000$ and 
$R=\sqrt{64 \log n/n}$.
The simulation data shows the distribution of shortest paths 
between all pairs of nodes with respect to the actual pairwise distances,
which confirms that the shortest paths lie between 
the analytical upper and lower bounds.
Although the gap between the upper and lower bound is seemingly large, 
in the regime where $R=C\sqrt{\log n/n}$ with a constant $C$,
the vertical gap $R$ vanishes as $n$ goes to infinity 
and the slope of the affine upper bound can be made 
arbitrarily small by increasing the radio range $R$ 
or equivalently taking large enough $C$.
\looseness -1
\section{Conclusion}
\label{sec:conclusion}
In many applications of wireless sensor networks, it is crucial to determine the location of nodes. 
For this matter, numerous algorithms have been recently proposed where the efficiency and success of 
them have been mostly demonstrated by simulations. In this paper, we have investigated the centralized 
and distributed sensor localization problem from a theoretical point of view and have provided analytical 
bounds on the performance of such  algorithms. More precisely,
 we analysed  the {\sc MDS-MAP} and {\sc HOP-TERRAIN} algorithms and 
showed that even when only the connectivity information was given and 
in the presence of detection failure, the resulting error of both algorithms 
is bounded and decays at a rate inversely proportional to the detection range. 

\section*{Acknowledgment}
We would like to thank Andrea Montanari and R\"udiger Urbanke for their stimulating discussions on the subject of this paper.

\bibliographystyle{alpha}
\bibliography{positioning}

\newcommand{\etalchar}[1]{$^{#1}$}
\begin{thebibliography}{DJMI{\etalchar{+}}06}

\bibitem[BG05]{mds-book}
I.~Borg and P.J.F. Groenen.
\newblock {\em {Modern Multidimensional Scaling: Theory and Applications}}.
\newblock Springer, 2005.

\bibitem[BY04]{BY04}
P.~Biswas and Y.~Ye.
\newblock Semidefinite programming for ad hoc wireless sensor network
  localization.
\newblock In {\em IPSN '04: Proceedings of the 3rd international symposium on
  Information processing in sensor networks}, pages 46--54, New York, NY, USA,
  2004. ACM.

\bibitem[CC01]{Cox01}
T.~F. Cox and M.~A.~A. Cox.
\newblock {\em Multidimensional Scaling}.
\newblock Chapman \& Hall, 2001.

\bibitem[CHH02]{gps-free}
Srdjan Capkun, Maher Hamdi, and Jean-Pierre Hubaux.
\newblock {GPS-free Positioning in Mobile Ad-Hoc Networks}.
\newblock 2002.

\bibitem[CR08]{Candes08}
Emmanuel~J Cand{\`e}s and Benjamin Recht.
\newblock {Exact Matrix Completion via Convex Optimization}.
\newblock 2008.

\bibitem[DJMI{\etalchar{+}}06]{DJM06}
P.~Drineas, A.~Javed, M.~Magdon-Ismail, G.~Pandurangant, R.~Virrankoski, and
  A.~Savvides.
\newblock Distance matrix reconstruction from incomplete distance information
  for sensor network localization.
\newblock In {\em Proceedings of Sensor and Ad-Hoc Communications and Networks
  Conference (SECON)}, volume~2, pages 536--544, Sept. 2006.

\bibitem[Faz02]{Fazel02}
M~Fazel.
\newblock {Matrix rank minimization with applications}.
\newblock {\em Elec Eng Dept Stanford University}, 2002.

\bibitem[GHDT10]{Gortlerabc}
Steven~J. Gortler, Alexander~D. Healy, Dylan, and P.~Thurston.
\newblock Characterizing generic global rigidity, 2010.

\bibitem[GK98]{GuK98}
P.~Gupta and P.R. Kumar.
\newblock Critical power for asymptotic connectivity.
\newblock In {\em Proceedings of the 37th IEEE Conference on Decision and
  Control}, volume~1, pages 1106--1110 vol.1, 1998.

\bibitem[HJ85]{Horn85}
R.~Horn and C.~Johnson.
\newblock {\em Matrix analysis}.
\newblock Cambridge University Press, 1985.

\bibitem[Hoe63]{Hoe63}
W.~Hoeffding.
\newblock Probability inequalities for sums of bounded random variables.
\newblock {\em Journal of the American Statistical Association}, 58:13–30,
  1963.

\bibitem[IFMW04]{IFMW04}
A.~T. Ihler, J.~W. Fisher, R.~L. Moses, and A.~S. Willsky.
\newblock Nonparametric belief propagation for self-calibration in sensor
  networks.
\newblock In {\em IPSN '04: Proceedings of the 3rd international symposium on
  Information processing in sensor networks}, pages 225--233, New York, NY,
  USA, 2004. ACM.

\bibitem[JH01]{survey}
Gaetano~Borriello Jerey~Hightower.
\newblock {Location Systems for Ubiquitous Computing}.
\newblock 2001.

\bibitem[JJ05]{Jackson2005}
Bill Jackson and Tibor Jord\'{a}n.
\newblock Connected rigidity matroids and unique realizations of graphs.
\newblock {\em J. Comb. Theory Ser. B}, 94:1--29, May 2005.

\bibitem[JM11]{Javanmard2011}
Adel Javanmard and Andrea Montanari.
\newblock {Localization from Incomplete Noisy Distance Measurements}.
\newblock {\em arXiv.org}, 2011.

\bibitem[Joh77]{Joh77}
D.~B. Johnson.
\newblock Efficient algorithms for shortest paths in sparse networks.
\newblock {\em J. ACM}, 24(1):1--13, 1977.

\bibitem[KM06]{Kannan2006b}
AA~Kannan and G~Mao.
\newblock {Simulated annealing based wireless sensor network localization}.
\newblock {\em Journal of Computers}, 2006.

\bibitem[KM10]{Keshavan2010}
RH~Keshavan and A~Montanari.
\newblock {Matrix completion from noisy entries}.
\newblock {\em The Journal of Machine Learning}, 2010.

\bibitem[KMV06]{Kannan2006a}
A.A Kannan, Guoqiang Mao, and B~Vucetic.
\newblock {Simulated Annealing based Wireless Sensor Network Localization with
  Flip Ambiguity Mitigation}.
\newblock In {\em Vehicular Technology Conference, 2006. VTC 2006-Spring. IEEE
  63rd}, pages 1022--1026, 2006.

\bibitem[KO10]{KO10}
A.~Karbasi and S.~Oh.
\newblock Distributed sensor network localization from local connectivity:
  Performance analysis for the {HOP-TERRAIN} algorithm.
\newblock In {\em ACM SIGMETRICS}, June 2010.

\bibitem[KOPV10]{karbasi2010}
Amin Karbasi, Sewoong Oh, Reza Parhizkar, and Martin Vetterli.
\newblock Ultrasound {T}omography {C}alibration {U}sing {S}tructured {M}atrix
  {C}ompletion.
\newblock In {\em The 20th {I}nternational {C}ongress on {A}coustics}, 2010.

\bibitem[Led01]{measure}
Michel Ledoux.
\newblock {\em The Concentration of Measure Phenomenon}.
\newblock American Mathematical Society, 2001.

\bibitem[LR03]{LR03}
K.~Langendoen and N.~Reijers.
\newblock Distributed localization in wireless sensor networks: a quantitative
  comparison.
\newblock {\em Comput. Netw.}, 43(4):499--518, 2003.

\bibitem[MFA07]{survey2}
G~MAO, B~FIDAN, and B~ANDERSON.
\newblock {Wireless sensor network localization techniques}.
\newblock {\em Computer Networks}, 51(10):2529--2553, July 2007.

\bibitem[MP05]{MP05}
S.~Muthukrishnan and G.~Pandurangan.
\newblock The bin-covering technique for thresholding random geometric graph
  properties.
\newblock In {\em SODA '05: Proceedings of the sixteenth annual ACM-SIAM
  symposium on Discrete algorithms}, pages 989--998, Philadelphia, PA, USA,
  2005.

\bibitem[MWY06]{rigidity}
AS~Morse, W~Whiteley, and YR~Yang.
\newblock {A Theory of Network Localization}.
\newblock {\em IEEE TRANSACTIONS ON MOBILE COMPUTING}, 2006.

\bibitem[NN01]{Niculescu2001}
D.~Niculescu and B.~Nath.
\newblock Ad hoc positioning system (aps).
\newblock In {\em Global Telecommunications Conference, 2001. GLOBECOM '01.
  IEEE}, pages 2926--2931, 2001.

\bibitem[NN03]{NN03}
D.~Niculescu and B.~Nath.
\newblock {DV} based positioning in ad hoc networks.
\newblock {\em Journal of Telecommunication Systems}, 22:267--280, 2003.

\bibitem[NSB03]{NSB03}
R.~Nagpal, H.~Shrobe, and J.~Bachrach.
\newblock Organizing a global coordinate system from local information on an ad
  hoc sensor network.
\newblock In {\em IPSN '03: Proceedings of the 2nd international conference on
  Information processing in sensor networks}, pages 333--348, 2003.

\bibitem[OKM10]{OKM10}
S.~Oh, A.~Karbasi, and A.~Montanari.
\newblock Sensor network localization from local connectivity : performance
  analysis for the {MDS-MAP} algorithm.
\newblock In {\em Proc. of the IEEE Inform. Theory Workshop}, January 2010.

\bibitem[PCB00]{TDoA}
N.~B. Priyantha, A.~Chakraborty, and H.~Balakrishnan.
\newblock The cricket location-support system.
\newblock In {\em MobiCom '00: Proceedings of the 6th annual international
  conference on Mobile computing and networking}, pages 32--43, New York, NY,
  USA, 2000. ACM.

\bibitem[RS00]{Roweis2000}
S~T Roweis and L~K Saul.
\newblock {Nonlinear dimensionality reduction by locally linear embedding.}
\newblock {\em Science (New York, N.Y.)}, 290(5500):2323--2326, December 2000.

\bibitem[SHS01]{TDoA2}
A.~Savvides, C.~Han, and M.~B. Strivastava.
\newblock Dynamic fine-grained localization in ad-hoc networks of sensors.
\newblock In {\em MobiCom '01: Proceedings of the 7th annual international
  conference on Mobile computing and networking}, pages 166--179, New York, NY,
  USA, 2001. ACM.

\bibitem[SLR02]{SLR02}
C.~Savarese, K.~Langendoen, and J.~Rabaey.
\newblock Robust positioning algorithms for distributed ad-hoc wireless sensor
  networks.
\newblock In {\em USENIX Technical Annual Conference}, pages 317--328,
  Monterey, CA, June 2002.

\bibitem[SPS03]{SPS02}
A~Savvides, H~Park, and M.~Srivastava.
\newblock The n-hop multilateration primitive for node localization problems.
\newblock {\em Mob. Netw. Appl.}, 8(4):443--451, 2003.

\bibitem[SR03]{Saul2003}
Lawrence~K. Saul and Sam~T. Roweis.
\newblock {Think globally, fit locally: unsupervised learning of low
  dimensional manifolds}.
\newblock {\em The Journal of Machine Learning Research}, 2003.

\bibitem[SRB01]{SRB01}
C.~Savarese, J.~Rabaey, and J.~Beutel.
\newblock Locationing in distributed ad-hoc wireless sensor networks.
\newblock In {\em in ICASSP}, pages 2037--2040, 2001.

\bibitem[SRZF03]{SRZ03}
Y.~Shang, W.~Ruml, Y.~Zhang, and M.~P.~J. Fromherz.
\newblock Localization from mere connectivity.
\newblock In {\em MobiHoc '03: Proceedings of the 4th ACM international
  symposium on Mobile ad hoc networking \& computing}, pages 201--212, New
  York, NY, USA, 2003. ACM.

\bibitem[SRZF04]{SRZ04}
Y.~Shang, W.~Ruml, Y.~Zhang, and M.~P.~J. Fromherz.
\newblock Localization from connectivity in sensor networks.
\newblock {\em IEEE Trans. Parallel Distrib. Syst.}, 15(11):961--974, 2004.

\bibitem[Xu02]{Xu02}
Ning Xu.
\newblock A survey of sensor network applications.
\newblock {\em IEEE Communications Magazine}, 40, 2002.

\end{thebibliography}

\appendix

\section{Hoeffding's Inequality}\label{Hoeffding}

\textbf{Hoeffding's inequality} \cite{Hoe63} is a result in probability theory that gives an upper bound on the probability for the sum of random variables to deviate from its expected value. Let $z_1, z_2, \dots, z_n$ be independent and bounded random variables such that $z_k\in [a_k,b_k]$ with probability one. Let $s_n=\sum_{k=1}^nz_k$. Then for any $\delta>0$, we have 
\begin{displaymath}
 \prob\left(|s_n-\mathbb{E}[s_n]|\geq \delta\right)\leq 2 \exp\left(-\frac{2\delta^2}{\sum_{k=1}^n(b_k-a_k)^2}\right).
\end{displaymath}

\section{Gershgorin circle theorem}\label{Gershgorin}

The \textbf{Gershgorin circle theorem} \cite{Horn85} identifies a region in the complex plane that contains all the eigenvalues of a complex square matrix. For an $n\times n$ matrix $A$, define
\begin{displaymath}
 R_i=\sum_{j\neq i }|a_{i,j}|.
\end{displaymath}
Then each eigenvalue of $A$ is in at least one of the disks
\begin{displaymath}
 \{z:|z-a_{i,i}|\leq R_i\}.
\end{displaymath}

\section{Concentration of Lipschitz functions}
Informally, \textbf{concentration of Lipschitz functions} says that any \textit{smooth} function of bounded
independent random variables is tightly concentrated around its expectation \cite{measure}. The notion of smoothness we
will use is \textit{Lipschitz}.
\begin{definition} $f:\R^n\rightarrow R$ is $\lambda$-Lipschitz with respect to the $l_p$ norm, if for all $x$ and $y$, 
$$|f(x)-f(g)|\leq \lambda\|x-y\|_p.$$
\end{definition}
It turns out that Hoeffding's bound holds for all Lipschitz (with respect to $l_1$ norm) functions. More precisely, suppose $X_1,X_2, \dots, X_n$ are independent and bounded with $a_i\leq x_i\leq b_i$. Then for any $f:\R^n\rightarrow R$ which is $\lambda$-Lipschitz with respect to the $l_1$ norm, $$\Pr(|f-E(f)|\geq \epsilon)\leq 2\exp\left(-\frac{2\epsilon^2}{\lambda^2\sum_{i=1}^n(b_i-a_i)^2}\right).$$

%

%
%



\end{document}